\newcommand{\R}{\mathbb{R}}
\newcommand{\E}{\operatorname{\mathbb{E}}}
\newcommand{\F}{\mathcal{F}}
\newcommand{\trho}{\tilde{\rho}}
\newcommand{\be}{\begin{equation}}
\newcommand{\ee}{\end{equation}}
\newcommand{\bea}{\begin{eqnarray}}
\newcommand{\eea}{\end{eqnarray}}
\newcommand{\beas}{\begin{eqnarray*}}
\newcommand{\eeas}{\end{eqnarray*}}
\newtheorem{theorem}{Theorem}
\newtheorem{corollary}[theorem]{Corollary}
\newtheorem{definition}[theorem]{Definition}
\newtheorem{example}[theorem]{Example}
\newtheorem{lemma}[theorem]{Lemma}
\newtheorem{proposition}[theorem]{Proposition}
\newtheorem{remark}[theorem]{Remark}
\begin{document}





\title{
Quasi-Logconvex Measures of Risk\thanks{We are very grateful to Fabio Bellini, Freddy Delbaen and Marco Frittelli
and to conference and seminar participants at the Model Uncertainty and Robust Finance Workshop in Milano 
and the Technical University of Munich 
for their comments and suggestions.
This research was funded in part by the Netherlands Organization for Scientific Research under grants NWO-VIDI and NWO-VICI (Laeven).}} 
\author{Roger J. A. Laeven \\
{\footnotesize Dept. of Quantitative Economics}\\
{\footnotesize University of Amsterdam, CentER}\\
{\footnotesize and EURANDOM}\\
{\footnotesize \texttt{R.J.A.Laeven@uva.nl}}\\
\and Emanuela Rosazza Gianin \\
{\footnotesize Dept. of Statistics and Quantitative Methods}\\
{\footnotesize University of Milano Bicocca}\\
{\footnotesize \texttt{emanuela.rosazza1@unimib.it}}\\
[1cm] }
\date{First Version: 5th July 2019.\\ This Version: \today. }

\maketitle

\begin{abstract}
This paper introduces and fully characterizes
the novel class of quasi-logconvex measures of risk,
to stand on equal footing with the rich class of quasi-convex measures of risk.
Quasi-logconvex risk measures naturally generalize logconvex return risk measures,
just like quasi-convex risk measures generalize convex monetary risk measures.
We establish their dual representation and analyze their taxonomy in a few (sub)classification results.
Furthermore, we characterize quasi-logconvex risk measures in terms of properties of families of acceptance sets
and provide their law-invariant representation.
Examples and applications to portfolio choice and capital allocation are also discussed.\\[3mm]
\noindent \textbf{Keywords:} Quasi-convex risk measures;
Robustness;
Monetary and convex risk measures;
Return risk measures;
Logconvexity;
Star-shapedness;
Liquidity;
Diversification.\\[3mm]
\noindent \textbf{AMS 2010 Classification:} Primary: 91B06, 91B30; Secondary: 62P05, 90C46.\\[3mm]
\noindent \textbf{JEL Classification:} D81, G10, G20.
\end{abstract}

%
%
%
%

\section{Introduction}

Nowadays a rich and sophisticated literature exists on the theory of measures of risk
and their applications to risk capital calculations, portfolio choice, valuation, and hedging.
Modern classes of risk measures are given by
coherent (Artzner, Delbaen, Eber and Heath \cite{ADEH99}),
convex (F\"ollmer and Schied \cite{FS02}, Frittelli and Rosazza Gianin \cite{FR02}),
entropy coherent and entropy convex (Laeven and Stadje \cite{LS13}),
monetary (F\"ollmer and Schied \cite{FS11}, Delbaen \cite{D12}),
return (Bellini, Laeven and Rosazza Gianin \cite{BLR18}),
cash-subadditive (El Karoui and Ravanelli \cite{ELKR09}),
and quasi-convex (Cerreia-Vioglio, Maccheroni, Marinacci and Montrucchio \cite{CMMM11}, Drapeau and Kupper \cite{DK13}, Frittelli and Maggis \cite{FM11})
measures of risk,
with vast roots in decision theory, actuarial and financial mathematics, and operations research.

The class of coherent risk measures is encompassed by the wider class of convex risk measures.
Initially, a key motivation behind the 
generalization of coherent risk measures
to the class of convex risk measures was the restrictiveness of the
so-called \textit{positive homogeneity} condition 
that coherent risk measures assume 
--- a risk measure $\rho$ is positively homogeneous if $\rho\left(\lambda X\right)=\lambda\rho(X)$, $\lambda\geq 0$.
It was argued that, due to liquidity risk, this assumption is easily violated.
Indeed, in many cases the weaker condition $\rho\left(\lambda X\right)\geq \lambda\rho(X)$, $\lambda\geq 1$,
often referred to as \textit{star-shapedness} or \textit{subhomogeneity}, may be a more natural requirement.  
Any convex 
risk measure is also star-shaped (Frittelli and Rosazza Gianin \cite{FR02}, Remark 8).

Coherent, convex, and entropy convex risk measures occur as special cases of monetary risk measures,
and all rely on the so-called \textit{translation invariance} condition ---
$\rho$ is translation invariant if $\rho\left(X+h\right)=\rho(X)+h$, $h \in \mathbb{R}$ --- as a basic axiom.
For a decade or so, the translation invariance requirement
was accepted as a natural condition for a capital requirement.
Indeed, with the interpretation of the minimal amount of cash to be added to a financial position
to make it acceptable from a regulatory perspective,
necessarily $\rho\left(X-\rho(X)\right)=0$,
which is implied by translation invariance.
Monetary risk measures also occur naturally in applications such as valuation, hedging, and risk sharing;
see e.g., El Karoui and Quenez \cite{ELKQ97}, Carr, Geman and Madan \cite{CGM01}, Filipovic and Kupper \cite{FK08},
Laeven and Stadje \cite{LS14}, Kr\"atschmer \textit{et al.} \cite{KLLSS18}, and Jouini, Schachermayer and Touzi \cite{JST08}.

In recent years, the translation invariance requirement has been subject to debate.
In an important contribution, El Karoui and Ravanelli \cite{ELKR09} argue that a risk measure should instead be \textit{cash-subadditive} ---
$\rho$ is cash-subadditive if $\rho\left(X+h\right)\leq\rho(X)+h$, $h \in \mathbb{R}_{+}$ ---
in the presence of interest rate risk, 
and they establish a representation of cash-subadditive convex measures of risk.
In addition, an insightful paper by Cerreia-Vioglio, Maccheroni, Marinacci and Montrucchio \cite{CMMM11} shows
that if translation invariance is replaced by cash-subadditivity,
then convexity should be replaced by \textit{quasi-convexity} ---
$\rho\left(\alpha X +(1-\alpha)Y\right) \leq \max\{\rho\left(X\right),\rho\left(Y\right)\}$, $\alpha \in (0,1)$ ---,
and provides representation results for this general class of risk measures.
Applications of quasi-convex risk measures can be found in e.g., Mastrogiacomo and Rosazza Gianin \cite{MRG15}.

From a different angle, Bellini, Laeven and Rosazza Gianin \cite{BLR18} argue that,
while the translation invariance requirement can be, and usually is, assumed at the level of absolute positions in monetary values
leading to the class of monetary risk measures,
it can also be naturally assumed on relative positions in (log) returns
leading to the class of return risk measures.
Return risk measures are monotone and positively homogeneous.
They constitute the relative counterpart of monetary risk measures,
much like the relative risk aversion function is the relative counterpart of the absolute risk aversion function
in measuring risk aversion (Pratt \cite{P64}).
Bellini, Laeven and Rosazza Gianin \cite{BLR18} establish a one-to-one correspondence between monetary and return risk measures,
provide representation results for (sub)classes of return risk measures,
and give several examples of return risk measures.
Bellini, Laeven and Rosazza Gianin \cite{BLR21} analyze dynamic return risk measures and characterize their time-consistency.

A potential limitation of return risk measures is the positive homogeneity condition 
they satisfy. 
In this paper, we relax this condition by introducing and characterizing
the 
class of 
quasi-logconvex and, optionally, star-shaped measures of risk.
We explicate that quasi-logconvex measures of risk stand on equal footing with the rich class of quasi-convex measures of risk.
We show that 
quasi-logconvex risk measures take the following form on $L^{\infty}_{++}$:
\begin{equation}
\rho(X)=\sup_{Q\in\mathcal{Q}}R\left(H_{0,Q}\left(X\right); Q\right),
\label{eq:repQLC}
\end{equation}
where $R:\mathbb{R}_{++}\times\mathcal{Q}\rightarrow[0,\infty]$ is
uniquely determined, monotone increasing, and continuous from below,
and where
$H_{0,Q}(X)$ is a canonical Orlicz premium given by the (generalized) Luxemburg norm
\begin{equation}
H_{0,Q} (X) \triangleq \inf \left \{ k >0 \; \Big|\; \mathbb{E}_{Q} \left [ \log \left ( \frac{X}{k} \right ) \right ] \leq 0 \right \}.
\end{equation}
Furthermore, quasi-logconvex star-shaped measures of risk are of the form \eqref{eq:repQLC} with $R$
geometrically expansive in the first coordinate,
i.e.,
\begin{equation}
R(s';Q)\geq R(s;Q)\exp\left(\vert \log s'-\log s\vert\right).
\label{eq:geoexpansive}
\end{equation}

We accomplish this by exploiting the discovery that \textit{cash-superadditivity} in arithmetic measurement of risk
corresponds to \textit{star-shapedness} in geometric measurement of risk.
Furthermore, \textit{quasi-convexity} in arithmetic risk measurement 
corresponds to \textit{quasi-logconvexity} in geometric risk measurement. 
We also reveal that when positive homogeneity for return risk measures
is replaced by star-shapedness,
then logconvexity should be replaced by quasi-logconvexity.

We motivate logconvex and quasi-logconvex measures of risk
as natural candidates when measuring the risk of continuously rebalanced investment portfolios,
as opposed to convex and quasi-convex risk measures that are more naturally applied to buy-and-hold portfolios.
We establish the dual representation \eqref{eq:repQLC} and several (sub)classification results
including an analysis of the precise intersection between quasi-convex and quasi-logconvex risk measures.
We also characterize quasi-logconvex risk measures in terms of properties of families of acceptance sets
and derive their law-invariant representation.
We provide several illustrative examples of quasi-logconvex risk measures
and analyze their implications in two important problems: 
portfolio choice and capital allocation.

The remainder of this paper is organized as follows.
In Section~\ref{sec:prel} we recall some preliminaries for monetary, return and related classes of risk measures.
In Section~\ref{sec:mot} we motivate logconvexity and quasi-logconvexity from a portfolio investment perspective.
In Section~\ref{sec:QLC} we introduce, characterize, and classify quasi-logconvex measures of risk.
Section~\ref{sec:acc} characterizes quasi-logconvex measures of risk in terms of properties of families of acceptance sets
and Section~\ref{sec:li} provides their law-invariant representation.
Finally, in Section~\ref{sec:examples} we provide examples
and in Section~\ref{sec:app} we describe applications of quasi-logconvex risk measures.

\setcounter{equation}{0}

\section{Preliminaries}\label{sec:prel}

We consider a measurable space $(\Omega, \F)$
and a reference probability measure $P$ defined on it.
We let $L^\infty(\Omega, \F, P)$, $L^\infty_+(\Omega, \F, P)$ and $L^\infty_{++}(\Omega, \F, P)$
be the sets of $P$-a.s.\,bounded, $P$-a.s.\,bounded non-negative, and $P$-a.s.\,bounded strictly positive random variables, respectively.
Equalities and inequalities between random variables hold $P$-a.s.
Throughout, we use the sign convention that positive (negative) realizations of
random variables represent losses (gains).

Furthermore, we denote by $\mathcal{Q}$ the set of all probability measures on this measurable space
that are absolutely continuous with respect to the reference probability measure $P$.
We also denote by $\mathcal{M}_{1,c} \triangleq \mathcal{M}_{1,c}(\Omega,\mathcal{F})$
the set of probability measures that have compact support in $\mathbb{R}$.


\subsection{Monetary, Return and Related Classes of Risk Measures}

A risk measure $\rho \colon L^{\infty}(\Omega, \F, P) \to \R$ is said to be
\begin{itemize}
\item [-]monotone if $X \leq Y \ P\mbox{-a.s. } \Rightarrow \rho(X) \leq \rho (Y)$;
\item [-]translation invariant if $\rho(X+h)=\rho(X)+h, \, \forall h \in \R, \forall X \in L^{\infty}$;
\item [-]positively homogeneous if $\rho( \lambda X)= \lambda \rho(X), \, \forall \lambda \geq 0, \forall X \in L^{\infty}$.
\end{itemize}

\begin{definition}[Monetary risk measure]
A monetary risk measure $\rho \colon L^{\infty} \to \R$ is a monotone and translation invariant risk measure
that is normalized to satisfy
$\rho (0) =0$.
\end{definition}

Bellini, Laeven and Rosazza Gianin \cite{BLR18} introduced return risk measures
on the smaller domain $L^\infty_{++}(\Omega, \F, P)$, as follows:
\begin{definition}[Return risk measure]
A return risk measure $\rho \colon L^{\infty}_{++} \to (0, \infty)$ is a monotone and positively homogeneous risk measure
that is normalized to satisfy
$\rho (1) =1$.
\end{definition}

Furthermore, a risk measure $\rho \colon L^{\infty}(\Omega, \F, P) \to \R$ is said to be
\begin{itemize}
\item [-]convex if it is monetary and
\[
\rho (\alpha X +(1-\alpha)Y) \leq \alpha \rho (X) + (1 - \alpha) \rho (Y), \, \forall X,Y \in L^{\infty}, \forall \alpha \in [0,1];
\]
\item [-]coherent if it is convex and positively homogeneous;
\item [-]subadditive if
\[
\rho (X + Y) \leq \rho (X) + \rho (Y), \, \forall X,Y \in L^{\infty};
\]
\item [-]cash-superadditive (cash-subadditive) if
\[
\rho(X+h)\geq (\leq)\ \rho(X)+h, \, \forall h \in \R_{+}, \forall X \in L^{\infty};
\]
\item [-]quasi-convex if it is monotone and
\[
\rho (\alpha X +(1-\alpha)Y) \leq \max\{\rho (X), \rho (Y)\}, \, \forall X,Y \in L^{\infty}, \forall \alpha \in (0,1).
\]
\item [-]continuous from below if
\[
X_n \uparrow X \Rightarrow \rho(X_n) \to \rho (X).
\]
\item [-]continuous from above if
\[
X_n \downarrow X \Rightarrow \rho(X_n) \to \rho (X).
\]
\end{itemize}

Note that $\rho(X+h)\geq (\leq)\ \rho(X)+h, \, \forall h \in \R_{+}$ is equivalent to
$\rho(X-h)\leq (\geq)\ \rho(X)-h, \, \forall h \in \R_{+}$. Indeed, if $\rho(X+h)\geq (\leq)\ \rho(X)+h$ holds for any $X \in L^{\infty}$ and $h \in \R_{+}$, then $\rho(X)=\rho(X+h-h) \geq (\leq)\ \rho(X-h)+h$. Hence $\rho(X-h)\leq (\geq)\ \rho(X)-h$ for any $X \in L^{\infty}$ and $h \in \R_{+}$. The converse implication can be proved similarly.\bigskip

We state the following lemma (cf. F\"ollmer and Schied \cite{FS11}, Chapter 4).
\begin{lemma}
\begin{itemize}
\item[(i)] Suppose $\rho$ is monotone, translation invariant, and positively homogeneous with $\rho(0)=0$.
That is, $\rho$ is both a monetary and a return risk measure.
Then $\rho$ is convex if and only if it is subadditive.
\item[(ii)] Suppose $\rho$ is monotone and translation invariant with $\rho(0)=0$, i.e., $\rho$ is monetary.
Then $\rho$ is convex if and only if it is quasi-convex.
\end{itemize}
\label{lem:1a}
\end{lemma}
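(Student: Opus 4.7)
The plan is to verify each equivalence by elementary manipulations, in each case using the hypothesized structural property (positive homogeneity in (i); translation invariance in (ii)) to collapse the apparently weaker axiom into the stronger one.

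For part (i), the easy direction is convex $\Rightarrow$ subadditive: applied at $\alpha=1/2$, convexity gives $\rho((X+Y)/2)\leq (\rho(X)+\rho(Y))/2$, and positive homogeneity with $\lambda=2$ upgrades the left-hand side to $\rho(X+Y)/2$, yielding subadditivity. For the converse, subadditivity gives $\rho(\alpha X+(1-\alpha)Y)\leq \rho(\alpha X)+\rho((1-\alpha)Y)$, and positive homogeneity (noting $\alpha,1-\alpha\geq 0$) turns this into $\alpha\rho(X)+(1-\alpha)\rho(Y)$. Neither direction requires any further input, so (i) is immediate.

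For part (ii), convex $\Rightarrow$ quasi-convex follows immediately from $\alpha\rho(X)+(1-\alpha)\rho(Y)\leq\max\{\rho(X),\rho(Y)\}$. The substantive direction is quasi-convex $\Rightarrow$ convex, and this is where translation invariance is indispensable. The idea is to recenter: set $X':=X-\rho(X)$ and $Y':=Y-\rho(Y)$, so that by translation invariance $\rho(X')=\rho(Y')=0$. Quasi-convexity then gives
\begin{equation*}
\rho(\alpha X'+(1-\alpha)Y')\leq \max\{\rho(X'),\rho(Y')\}=0.
\end{equation*}
Expanding $\alpha X'+(1-\alpha)Y'=\alpha X+(1-\alpha)Y-[\alpha\rho(X)+(1-\alpha)\rho(Y)]$ and applying translation invariance a second time yields
\begin{equation*}
\rho(\alpha X+(1-\alpha)Y)-[\alpha\rho(X)+(1-\alpha)\rho(Y)]\leq 0,
\end{equation*}
which is precisely convexity.

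I do not anticipate real obstacles: both parts are one-line tricks once the correct translation/scaling is performed. The only subtlety worth flagging is that the reverse implication in (ii) truly needs translation invariance (not just monotonicity); without it, one cannot calibrate $X$ and $Y$ to the same value under $\rho$ so that the max on the right of the quasi-convexity inequality becomes the desired affine combination. This is, in a sense, the conceptual point the paper later exploits when it replaces translation invariance by cash-subadditivity or by the geometric/logarithmic analogues and thereby loses the convex--quasi-convex (or logconvex--quasi-logconvex) equivalence.
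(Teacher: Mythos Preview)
Your proof is correct and is precisely the standard argument; the paper itself does not supply a proof but merely cites F\"ollmer and Schied \cite{FS11}, Chapter~4, where the same translation/scaling tricks appear. Your closing remark on why translation invariance is essential in (ii) is apt and aligns with the paper's subsequent use of this lemma.
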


In Figure \ref{fig:I} we illustrate the connections between monetary, return and related classes of risk measures.
Clearly, the class of coherent risk measures is a (strict) subclass of the intersection between monetary and return risk measures,
as illustrated in panel (c).
From (i) in Lemma \ref{lem:1a} we conclude that the only intersection between convex and return risk measures
is given by the class of coherent risk measures; see panel (d).
From Theorems 6.1 and 6.2 in Laeven and Stadje \cite{LS13} we obtain the decomposition of convex risk measures
into entropy convex, entropy coherent, and coherent measures of risk
illustrated in panels (e) and (f).
From (ii) in Lemma \ref{lem:1a} we conclude that the only intersection between quasi-convex and monetary risk measures
is given by the class of convex risk measures; see panel (g).
In panel (h) we illustrate the (strict) subclass of quasi-convex risk measures that are cash-subadditive.

\begin{figure}
\centering

\subfigure[Monetary]
{

\begin{tikzpicture}[scale = 0.34]
\draw[red,thick,rotate around={45:(0,0)}] (2,0) rectangle ++(6,6);
\end{tikzpicture}

}
\ \ \ \ \ \ \ \ \ \
\subfigure[Monetary and Return]
{

\begin{tikzpicture}[scale = 0.34]
\draw[blue,thick,fill=blue!10!white,rotate around={45:(6,0)}] (6,0) rectangle ++(6,6);
\draw[red,thick,rotate around={45:(2,0)}] (2,0) rectangle ++(6,6);
\end{tikzpicture}

}
\subfigure[Monetary, Return and Coherent]
{

\begin{tikzpicture}[scale = 0.34]
\draw[blue,thick,fill=blue!10!white,rotate around={45:(6,0)}] (6,0) rectangle ++(6,6);
\draw[red,thick,rotate around={45:(2,0)}] (2,0) rectangle ++(6,6);
\draw[violet,thick,rotate around={45:(2.53,3.53)}] (2.53,3.53) rectangle ++(1,1);
\end{tikzpicture}

}
\ \ \ \ \
\subfigure[Monetary, Return, Coherent and Convex]
{

\begin{tikzpicture}[scale = 0.34]
\draw[blue,thick,fill=blue!10!white,rotate around={45:(6,0)}] (6,0) rectangle ++(6,6);
\draw[red,thick,rotate around={45:(2,0)}] (2,0) rectangle ++(6,6);
\draw[violet,thick,rotate around={45:(2.53,3.53)}] (2.53,3.53) rectangle ++(1,1);
\draw[green,thick,rotate around={45:(0.5,1.4)}] (0.5,1.4) rectangle ++(4,4);
\end{tikzpicture}

}
\subfigure[Monetary, Return, Coherent, Convex and Entropy Convex]
{

\begin{tikzpicture}[scale = 0.34]
\draw[blue,thick,fill=blue!10!white,rotate around={45:(6,0)}] (6,0) rectangle ++(6,6);
\draw[red,thick,rotate around={45:(2,0)}] (2,0) rectangle ++(6,6);
\draw[violet,thick,rotate around={45:(2.53,3.53)}] (2.53,3.53) rectangle ++(1,1);
\draw[green,thick,rotate around={45:(0.5,1.4)}] (0.5,1.4) rectangle ++(4,4);
\draw[purple,thick,rotate around={45:(-0.25,2.25)}] (-0.25,2.25) rectangle ++(4,2.8);
\end{tikzpicture}

}
\ \ \ \ \
\subfigure[Monetary, Return, Coherent, Convex, Entropy Convex and Entropy Coherent]
{

\begin{tikzpicture}[scale = 0.34]
\draw[blue,thick,fill=blue!10!white,rotate around={45:(6,0)}] (6,0) rectangle ++(6,6);
\draw[red,thick,rotate around={45:(2,0)}] (2,0) rectangle ++(6,6);
\draw[violet,thick,rotate around={45:(2.53,3.53)}] (2.53,3.53) rectangle ++(1,1);
\draw[green,thick,rotate around={45:(0.5,1.4)}] (0.5,1.4) rectangle ++(4,4);
\draw[purple,thick,rotate around={45:(-0.25,2.25)}] (-0.25,2.25) rectangle ++(4,2.8);
\draw[yellow,thick,rotate around={45:(1.8,4.2)}] (1.8,4.2) rectangle ++(1.2,2.8);
\end{tikzpicture}

}
\subfigure[Monetary, 
Coherent, Convex, Entropy Convex, Entropy Coherent and Quasi-Convex]
{

\begin{tikzpicture}[scale = 0.34]
\draw[red,thick,rotate around={45:(2,0)}] (2,0) rectangle ++(6,6);
\draw[violet,thick,rotate around={45:(2.53,3.53)}] (2.53,3.53) rectangle ++(1,1);
\draw[green,thick,rotate around={45:(0.5,1.4)}] (0.5,1.4) rectangle ++(4,4);
\draw[purple,thick,rotate around={45:(-0.25,2.25)}] (-0.25,2.25) rectangle ++(4,2.8);
\draw[yellow,thick,rotate around={45:(1.8,4.2)}] (1.8,4.2) rectangle ++(1.2,2.8);
\draw[pink,thick,rotate around={45:(-1,0)}] (-1,0) rectangle ++(6,6);
\end{tikzpicture}

}
\ \ \
\subfigure[Monetary, 
Coherent, Convex, Entropy Convex, Entropy Coherent, Quasi-Convex and Cash-Subadditive]
{

\begin{tikzpicture}[scale = 0.34]
\draw[red,thick,rotate around={45:(2,0)}] (2,0) rectangle ++(6,6);
\draw[violet,thick,rotate around={45:(2.53,3.53)}] (2.53,3.53) rectangle ++(1,1);
\draw[green,thick,rotate around={45:(0.5,1.4)}] (0.5,1.4) rectangle ++(4,4);
\draw[purple,thick,rotate around={45:(-0.25,2.25)}] (-0.25,2.25) rectangle ++(4,2.8);
\draw[yellow,thick,rotate around={45:(1.8,4.2)}] (1.8,4.2) rectangle ++(1.2,2.8);
\draw[pink,thick,rotate around={45:(-1,0)}] (-1,0) rectangle ++(6,6);
\draw[magenta,thick,rotate around={45:(-0.25,0.62)}] (-0.25,0.62) rectangle ++(5,5);
\end{tikzpicture}

}
\caption{Monetary, Return and Related Measures of Risk}
\small This figure illustrates the connections between monetary, return and related classes of risk measures.
(We restrict to $L^{\infty}_{++}$.)
In panels~(g) and~(h) we omit return risk measures; the precise intersection between quasi-convex and return risk measures is analyzed in Section~\ref{sec:QLC}.
\label{fig:I}
\end{figure}

\subsection{One-to-One Correspondence}

Recall the one-to-one correspondence between monetary risk measures and return risk measures (Bellini, Laeven and Rosazza Gianin \cite{BLR18}):
\begin{itemize}
\item[-] given a monetary risk measure $\rho \colon  L^{\infty} \to \R $,
the associated return risk measure $\trho \colon L^{\infty}_{++} \to (0, \infty)$ is given by
\begin{equation}
\label{eq:trho-1}
\boxed{
\trho(X)\triangleq\exp \left( \rho \left( \log(X) \right) \right);
}
\end{equation}
\item[-] given a return risk measure $\trho \colon L^{\infty}_{++} \to (0, \infty)$,
the associated monetary risk measure $\rho \colon  L^{\infty} \to \R $ is given by
\begin{equation}
\label{eq:trho-2}
\boxed{
\rho(Z)\triangleq\log \left( \trho \left( \exp(Z) \right) \right).
}
\end{equation}
\end{itemize}

For a risk measure $\trho \colon L^{\infty}_{++} \to (0,\infty)$, we say that $\trho$ is:
\begin{itemize}
\item [-]constant-multiplicative if $\trho(X^\alpha)=\trho^\alpha(X), \; \forall X \in L^{\infty}_{++}$, $\alpha>0$;
\item [-]submultiplicative if $\trho(XY) \leq \trho(X) \trho(Y), \; \forall X, Y \in L^{\infty}_{++}$;
\item [-]logconvex if it is monotone, positively homogeneous and
$\trho(X^\alpha Y^{1-\alpha}) \leq \trho^\alpha(X)\trho^{1-\alpha}(Y), \; \forall X,Y \in L^{\infty}_{++}$, $\alpha \in (0,1)$.
\end{itemize}

The following lemma follows straightforwardly:
\begin{lemma}
For $\rho \colon  L^{\infty} \to \R $ and $\trho \colon L^{\infty}_{++} \to (0, \infty)$ as defined in \eqref{eq:trho-1} and \eqref{eq:trho-2},
we have the following equivalences:
\begin{itemize}
\item [(a)] $\rho(0)=0 \iff \trho(1)=1$
\item [(b)] $\rho$ is monotone $\iff$ $\trho$ is monotone
\item [(c)] $\rho$ is translation invariant $\iff$ $\trho$ is positively homogeneous
\item [(d)] $\rho$ is positively homogeneous $\iff$ $\trho$ is constant-multiplicative
\item [(e)] $\rho$ is subadditive $\iff$ $\trho$ is submultiplicative
\item [(f)] $\rho$ is convex $\iff$ $\trho$ is logconvex
\end{itemize}
\label{lem:rho-trho}
\end{lemma}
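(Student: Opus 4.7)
The plan is to verify each equivalence by direct substitution, exploiting the fact that $\exp\colon\R\to\R_{++}$ is a strictly increasing bijection with inverse $\log$, so that $Z\mapsto \exp(Z)$ is a bijection between $L^\infty$ and $L^\infty_{++}$ that respects the pointwise order. In particular, given any $X,Y\in L^\infty_{++}$, I would set $Z=\log X$ and $W=\log Y$ in $L^\infty$ and vice versa, and translate each statement on one side into the corresponding statement on the other.

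For (a), I simply compute $\trho(1)=\exp(\rho(\log 1))=\exp(\rho(0))$, which equals $1$ iff $\rho(0)=0$. For (b), monotonicity of $\log$ gives $X\leq Y\Leftrightarrow \log X\leq \log Y$, and then strict monotonicity of $\exp$ transports an inequality $\rho(\log X)\leq \rho(\log Y)$ into $\trho(X)\leq\trho(Y)$, and conversely by taking $\log$. For (c), observing that $\log(\lambda X)=\log X+\log\lambda$ and that $\lambda\mapsto\log\lambda$ is a bijection $\R_{++}\to\R$, the identity
\[
\trho(\lambda X)=\exp\bigl(\rho(\log X+\log\lambda)\bigr)
\]
equals $\lambda\trho(X)=\exp(\rho(\log X)+\log\lambda)$ for every $X,\lambda$ iff $\rho(Z+h)=\rho(Z)+h$ for every $Z\in L^\infty$ and $h\in\R$. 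Items (d), (e) and (f) follow the same template: (d) uses $\log(X^\alpha)=\alpha\log X$, (e) uses $\log(XY)=\log X+\log Y$, and (f) uses $\log(X^\alpha Y^{1-\alpha})=\alpha\log X+(1-\alpha)\log Y$, combined with the fact that $\exp$ turns additive (in)equalities into multiplicative (in)equalities and is strictly monotone so no information is lost.

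The only point that requires a word of care is (f), because the definition of logconvex also bundles monotonicity and positive homogeneity into the notion. I would handle this by noting that $\rho$ convex means $\rho$ is monetary (so monotone and translation invariant) plus the usual convexity inequality; by (b) and (c) the first two properties translate to $\trho$ being monotone and positively homogeneous, and by the direct computation above the convexity inequality for $\rho$ is equivalent to the logconvexity inequality for $\trho$. Taken together these give the equivalence in (f); (c) similarly relies on the bijectivity of the exponential to convert positive homogeneity with parameter $\lambda\geq 0$ into translation invariance with parameter $h=\log\lambda\in\R$, where one should remark that $\lambda=0$ corresponds to a limit that plays no role since we restrict $\trho$ to $L^\infty_{++}$ and $\lambda>0$.

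No step is a genuine obstacle; the proof is essentially a bookkeeping exercise. The only mildly subtle point is making sure each direction of each equivalence uses the full bijection $Z\leftrightarrow\exp(Z)$ so that a property established for all $X\in L^\infty_{++}$ on the return side truly transfers to a property for all $Z\in L^\infty$ on the monetary side, and not just a restricted subset.
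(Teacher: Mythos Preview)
Your proof is correct. The paper does not actually provide a proof of this lemma; it simply states that it ``follows straightforwardly'' and moves on. Your direct verification via the strictly increasing bijection $\exp\colon L^\infty\to L^\infty_{++}$ is precisely the straightforward computation the authors have in mind, and your remarks on the bundled properties in (f) and on the role of $\lambda=0$ in (c) are appropriate.
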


\begin{remark}
Logconvexity may also be referred to as geometric convexity or multiplicative convexity.
Geometrically convex functions have received some interest in the theory of convex functions,
in particular in connection with the Hermite-Hadamard inequality.
\end{remark}

\subsection{Examples of Return Risk Measures}

We provide some examples of return risk measures.

\begin{example}[V@R] \label{ex: Var}
Consider the Value-at-Risk (V@R), or quantile, at probability level $\alpha\in(0,1)$ defined as\footnote{Note that the following  definition of quantile is usually known as the smallest $\alpha$-quantile of $X$. It holds also that $q_{ \alpha}\left(X\right)= q_{1-\alpha} ^+ \left( -X \right)$, where $q_{\beta} ^+ \left(Z\right)\triangleq \inf \{ x\in\mathbb{R} : P[Z\leq x]>\beta \}$ is the largest $\beta$-quantile. See F\"{o}llmer and Schied \cite{FS11}.}
\begin{align}
\rho\left(X\right)
=\inf \{ x\in\mathbb{R} : P[X\leq x]\geq\alpha \}
\triangleq F_{X}^{-1}(\alpha)\triangleq q_{\alpha}\left(X\right).
\label{eq:V@R}
\end{align}
As is well-known, V@R is monotone, translation invariant, and positively homogeneous with $\rho(0)=0$.
Hence, it is both a monetary and a return risk measure.
In fact, under V@R, risk assessment at the level of returns coincides with risk assessment at the level of monetary values
(cf. Example 1 of Bellini, Laeven and Rosazza Gianin \cite{BLR18}).
V@R is, however, not subadditive, hence not coherent.
\end{example}


The following example introduces the Average-Return-at-Risk.
\begin{example}[AR@R] \label{ex: ARaR}
The Average-Return-at-Risk (AR@R)
at probability level $\alpha\in(0,1)$ is defined as
the return counterpart of the Average-Value-at-Risk (AV@R), i.e.,
\begin{eqnarray*}
\trho\left(X\right)&=& \exp\left( \rho(\log (X) \right) 
=\exp\left( \int_{\alpha}^{1}\frac{1}{1-\alpha}F_{\log X}^{-1}(\beta)\,\mathrm{d}\beta\right)\\
&=&\exp\left( \int_{\alpha}^{1}\frac{1}{1-\alpha}\log F_{X}^{-1}(\beta)\,\mathrm{d}\beta\right) 
=\left(\exp\left( \int_{\alpha}^{1}\log F_{X}^{-1}(\beta)\,\mathrm{d}\beta\right)\right)^{\frac{1}{1-\alpha}},
\label{eq:AR@R}
\end{eqnarray*}
where the far right-hand side can be given the interpretation of a geometric average.
AR@R is monotone, positively homogeneous, and submultiplicative with $\trho(1)=1$, but not translation invariant.
\end{example}

\begin{example}[Orlicz Premium and HG-Risk Measure]
Consider the extension to the Rockafellar-Uryasev construction (\cite{RU00},\cite{RUZ08}) given by the Haezendonck-Goovaerts (HG) risk measure
defined as
\begin{align}
\rho\left(X\right)
=\inf_{x\in\mathbb{R}}\left\{x+H_{\Phi,\alpha}\left[\left(X-x\right)^{+}\right]\right\},
\label{eq:HG}
\end{align}
where
\begin{equation}
H_{\Phi,\alpha}\left[X\right]=\inf\left\{k>0:\mathbb{E}\left[\Phi\left(\frac{X}{k}\right)\right]\leq 1-\alpha\right\},
\label{eq:Orlicz}
\end{equation}
with $\Phi:[0,\infty)\rightarrow[0,\infty)$, $\Phi(0)<1<\Phi(\infty)$ and non-decreasing,
is the Orlicz premium.
The Orlicz premium is monotone and positively homogeneous. It is also normalized (that is $H_{\Phi,\alpha}\left[1\right]=1$) iff $\Phi(1)=1-\alpha$. Hence it is a return risk measure if $\Phi(1)=1-\alpha$.
It is the return counterpart of (utility-based) shortfall risk.
The HG-risk measure is monotone, translation invariant, and positively homogeneous. It is also normalized (that is $\rho(0)=0$) if $\Phi(1)=1-\alpha$.
Thus, it is both a monetary and a return risk measure if $\Phi(1)=1-\alpha$.
When $\Phi$ is convex, it is moreover coherent.
(See Bellini, Laeven and Rosazza Gianin \cite{BLR18}) for corresponding dual representations.)
\end{example}

%

\begin{example}[Logconvex Risk Measures]
Consider the return counterpart of the monetary convex risk measures given by:
\begin{eqnarray*}
\trho(X) &=& \exp\left( \rho(\log (X) \right) 
=\exp\left( \sup_{Q \in \mathcal{Q}} \left\{\mathbb{E}_{Q} [\log(X) ]- c(Q)\right\} \right) \\
&=& \sup _{Q \in \mathcal{Q}} \beta(Q) \exp \left( \mathbb{E}_{Q} [\log(X) ] \right),
\end{eqnarray*}
with $\beta(Q)=\exp(-c(Q))$.
Logcoherent risk measures occur when
\begin{align*}
c(Q)=\left\{
       \begin{array}{ll}
         0, & \mathrm{if}\ Q\in M; \\
         \infty, & \mathrm{otherwise}.
       \end{array}
     \right.
\end{align*}
\end{example}

\begin{example}[$p$-norm]
Consider the $p$-norm
\begin{equation}
\rho\left(X\right)=\mathbb{E}\left[\left(X\right)^{\gamma}\right]^{1/\gamma}\triangleq\Vert X \Vert_\gamma,\qquad \gamma>0.
\label{eq:pnorm}
\end{equation}
It occurs as the certainty equivalent under expected utility with constant relative risk aversion (CRRA), or isoelastic, utility.
It is the return counterpart of the monetary entropic risk measure.
\end{example}

\begin{example}[Robust $p$-norm]
Consider the return counterpart of the monetary entropy coherent risk measure given by:
\begin{eqnarray*}
\trho(X) &=& \exp\left( \rho(\log (X) \right) 
=\exp\left( \sup_{Q \in M\subset\mathcal{Q}} \frac{1}{\gamma}\log\left(\mathbb{E}_{Q} [\exp(\gamma\log(X))]\right)\right) \\
&=& \sup _{Q \in M\subset \mathcal{Q}} \left( \mathbb{E}_{Q} [X^{\gamma} ] \right)^{1/\gamma},\qquad \gamma>0.
\end{eqnarray*}
\end{example}

\begin{example}[Robust discounted $p$-norm]
Consider the return counterpart of the monetary entropy convex risk measure given by:
\begin{eqnarray*}
\trho(X) &=& \exp\left( \rho(\log (X)) \right) 
=\exp\left( \sup_{Q \in \mathcal{Q}} \left\{\frac{1}{\gamma}\log\left(\mathbb{E}_{Q} [\exp(\gamma\log(X))]\right)-c(Q)\right\}\right) \\
&=& \sup _{Q \in \mathcal{Q}} \beta(Q) \left( \mathbb{E}_{Q} [X^{\gamma} ] \right)^{1/\gamma},\qquad \gamma>0,
\end{eqnarray*}
with $\beta(Q)=\exp(-c(Q))$, as before.
\end{example}

\setcounter{equation}{0}

\section{(Quasi-)Convexity vs.~(Quasi-)Logconvexity: Buy-and-Hold vs.~Rebalancing Strategies}\label{sec:mot}

The key ``take away'' from the following exposition is that convex combinations of \textit{returns} occur naturally when rebalancing portfolios,
whereas convex combinations of monetary \textit{values} occur naturally when executing buy-and-hold strategies.
As rebalancing may be a more typical investment strategy than a buy-and-hold strategy,
requiring (quasi-)logconvexity appears to be more natural than requiring (quasi-)convexity, when evaluating portfolio risk.

Imagine an investor who invests her initial wealth $W=W_{0}$ in two assets, $A$ and $B$.
Denote by $V_{A}=V_{A,0}$ and $V_{B}=V_{B,0}$ the (absolute, dollar) amounts invested in assets $A$ and $B$, respectively.
Clearly, the total amount invested satisfies $W=V_{A}+V_{B}$.
The portfolio weights, or shares of dollars, invested in $A$ and $B$ are given by
$w_{A}=\frac{V_{A}}{W}$ and
$w_{B}=\frac{V_{B}}{W}$.
Of course, $w_{A}+w_{B}=1$.
Furthermore, denote by $\tilde{V}_{A,1}$ and $\tilde{V}_{B,1}$
the wealths at the end of period 1 that can be generated by fully investing the initial wealth $W$ 
in assets $A$ and $B$, with $\tilde{V}_{A,0}=W$ and $\tilde{V}_{B,0}=W$.

Let $R_{A}$ and $R_{B}$ be the one-period returns on $A$ and $B$.
Then, the value of the invested wealth, or portfolio, at the end of the first period is given by
\begin{equation*}
W_{1}=W\left(w_{A}(1+R_{A})+w_{B}(1+R_{B})\right)=W(1+R),
\end{equation*}
where the portfolio return
\begin{equation*}
R=w_{A}R_{A}+w_{B}R_{B},
\end{equation*}
because $w_{A}+w_{B}=1$.
That is, the portfolio return is given by the weighted average of the returns on $A$ and $B$.
Note also that $W_{1}=w_{A} \tilde{V}_{A,1}+w_{B} \tilde{V}_{B,1}=w_{A} \tilde{V}_{A,1}+(1-w_{A}) \tilde{V}_{B,1}$.

It is common practice, in a multi-period setting,
to let portfolio weights remain constant over time.
That is, the investor \textit{rebalances} the portfolio at the end of each time period (which may be infinitesimally small),
such that constant portfolio weights are maintained.
Alternatively, the investor may execute a \textit{buy-and-hold} strategy,
where no rebalancing takes place.

Under no rebalancing, the value of the portfolio
at the end of period $t$
is simply given by
$W_{t}=V_{A,t}+V_{B,t}$.
Note also that $W_{t}=w_{A} \tilde{V}_{A,t}+w_{B} \tilde{V}_{B,t}=w_{A} \tilde{V}_{A,t}+(1-w_{A}) \tilde{V}_{B,t}$.

Convexity, in the regular sense, requires that
\begin{align}
\rho(W_{t})&=\rho(w_{A} \tilde{V}_{A,t}+(1-w_{A}) \tilde{V}_{B,t})\nonumber\\
&\leq w_{A} \rho(\tilde{V}_{A,t})+(1-w_{A}) \rho(\tilde{V}_{B,t}),
\label{eq:buy-hold-convex}
\end{align}
to capture ``diversification'' at the level of absolute, monetary amounts.
Quasi-convexity requires that
\begin{align}
\rho(W_{t})&=\rho(w_{A} \tilde{V}_{A,t}+(1-w_{A}) \tilde{V}_{B,t})\nonumber\\
&\leq \max\{\rho(\tilde{V}_{A,t}),\rho(\tilde{V}_{B,t})\}.
\label{eq:buy-hold-QC}
\end{align}
Thus, convexity and quasi-convexity are intimately related to buy-and-hold portfolios.

Under the more common rebalancing of the portfolio,
the return in every time period is given by
\begin{equation*}
R_{t}=w_{A} R_{A,t}+w_{B} R_{B,t},
\end{equation*}
such that
\begin{equation*}
W_{t}=W\prod_{j=1}^{t}(1+R_{j}),
\end{equation*}
with
\begin{equation*}
R_{j}=w_{A}R_{A,j}+w_{B}R_{B,j}.
\end{equation*}

Under continuous compounding and continuous rebalancing, 
\begin{equation*}
W_{t}=W\exp\left(w_{A} r_{A,t}+w_{B} r_{B,t}\right).
\end{equation*}
Indeed, starting from
\begin{equation*}
W_{t}=W\prod_{j=1}^{t}(1+R_{j}),
\end{equation*}
the wealth at time $t+h$ (with $h>0$) can be obtained recursively as
\begin{equation*}
W_{t+h}=W_t (1+R_{t,t+h}),
\end{equation*}
with $R_{t,t+h}$ denoting the return on the period from $t$ to $t+h$. By this one-step relation and by using arguments on continuous compounding, 
\begin{equation*}
\frac{W_{t+h}-W_t}{W_t}=R_{t,t+h}.
\end{equation*}
Dividing both the terms by $h$, taking the limit as $h\to 0$, and setting $\bar{r}_t \triangleq \lim_{h \to 0} \frac{R_{t,t+h}}{h}$,
\begin{eqnarray*}
&&\lim_{h \to 0} \frac{R_{t,t+h}}{h}= \lim_{h \to 0} \frac{W_{t+h}-W_t}{h W_t}\\
&&\bar{r}_t= \frac{1}{W_t} \frac{dW_t}{dt}.
\end{eqnarray*}
It then follows that
\begin{equation*}
W_t=W  \exp \left(\int_0^t \bar{r}_s \, ds\right)=W e^{r_t},
\end{equation*}
with $r_t \triangleq \int_0^t \bar{r}_s \, ds$.
Using similar arguments as above, the relation $R_{t, t+h}=w_{A} R_{A,t, t+h}+w_{B} R_{B,t, t+h}$ then implies that
\begin{equation*}
r_t= w_{A} r_{A,t}+w_{B} r_{B,t},
\end{equation*}
hence
\begin{equation*}
W_{t}=W\exp\left(w_{A} r_{A,t}+w_{B} r_{B,t}\right).
\end{equation*}

Logconvexity, or ``geometric convexity'', requires that
\begin{align*}
\tilde{\rho}(W_{t})&=\tilde{\rho}\left(W\exp\left(w_{A} r_{A,t}+w_{B} r_{B,t}\right)\right)\\
&=W\tilde{\rho}\left(\left(\exp\left(r_{A,t}\right)\right)^{w_{A}}\left(\exp\left(r_{B,t}\right)\right)^{1-w_{A}}\right)\\
&=W\exp\left(\rho\left(w_{A} r_{A,t}+(1-w_{A}) r_{B,t}\right)\right)\\
&\leq W\exp\left(w_{A} \rho(r_{A,t})+(1-w_{A}) \rho(r_{B,t})\right)\\
&=W\left(\tilde{\rho}(\exp(r_{A,t}))\right)^{w_{A}}\left(\tilde{\rho}(\exp(r_{B,t}))\right)^{(1-w_{A})},
\end{align*}
to capture ``diversification'' at the level of relative amounts, or returns.

In short, under logconvexity,
\begin{align}
\tilde{\rho}(W_{t})&=W\tilde{\rho}\left(\left(\exp\left(r_{A,t}\right)\right)^{w_{A}}\left(\exp\left(r_{B,t}\right)\right)^{1-w_{A}}\right)\nonumber\\
&\leq W\tilde{\rho}^{w_{A}}(\exp(r_{A,t}))\tilde{\rho}^{(1-w_{A})}(\exp(r_{B,t})).
\label{eq:rebalancing-logconvex}
\end{align}
Quasi-logconvexity requires that
\begin{align}
\tilde{\rho}(W_{t})&=\tilde{\rho}\left(W\left(\exp\left(r_{A,t}\right)\right)^{w_{A}}\left(\exp\left(r_{B,t}\right)\right)^{1-w_{A}}\right)\nonumber\\
&=\tilde{\rho}\left(\left(W\exp\left(r_{A,t}\right)\right)^{w_{A}}\left(W\exp\left(r_{B,t}\right)\right)^{1-w_{A}}\right)\nonumber\\
&\leq \max\{\tilde{\rho}(W\exp(r_{A,t})),\tilde{\rho}(W\exp(r_{B,t}))\}.
\label{eq:rebalancing-QLC}
\end{align}
Thus, logconvexity and quasi-logconvexity are intimately related to rebalanced portfolios.
Cf.~\eqref{eq:buy-hold-convex}--\eqref{eq:buy-hold-QC} to~\eqref{eq:rebalancing-logconvex}--\eqref{eq:rebalancing-QLC}.

%
%
%
%

\setcounter{equation}{0}

\section{Diversification and Liquidity}\label{sec:QLC}


\subsection{On Quasi-Logconvex and Star-Shaped Risk Measures}\label{sec: qlc and star-shaped}

By serendipity, the one-to-one correspondence between monetary and return risk measures
also gives rise to a new class of risk measures to stand on equal footing with the class of quasi-convex risk measures.

For a mapping $\rho \colon L^{\infty} \to [-\infty,\infty]$,
let the associated $\trho \colon L^{\infty}_{++} \to [0, \infty]$
be given by
\begin{equation}
\label{eq:trho-3}
\trho(X)\triangleq\exp \left( \rho \left( \log(X) \right) \right),
\end{equation}
and for a mapping $\trho \colon L^{\infty}_{++} \to [0, \infty]$,
let the associated $\rho \colon  L^{\infty} \to [-\infty,\infty]$ be given by
\begin{equation}
\label{eq:trho-4}
\rho(Z)\triangleq\log \left( \trho \left( \exp(Z) \right) \right);
\end{equation}
cf.~\eqref{eq:trho-1}--\eqref{eq:trho-2} but without restricting to monetary risk measures $\rho$ and return risk measures $\trho$, and extending their ranges.
By analogy to geometric and arithmetic means, we will refer to~\eqref{eq:trho-3} and~\eqref{eq:trho-4} as geometric and arithmetic counterparts, respectively.
Note that, differently from before, in the following, $\rho$ may assume the values $-\infty$ and $+\infty$ and $\trho$ the values $0$ and $+\infty$. 
This choice is motivated by the fact that (quasi-convex) risk measures defined on $L^{\infty}$ are not necessarily finite, in general. 
Finiteness is guaranteed, e.g., under normalization, monotonicity, and cash-additivity.
\medskip

We say that $\trho \colon L^{\infty}_{++} \to [0, \infty]$ is:
\begin{itemize}
\item [(i)]
star-shaped if $\trho( \lambda X)\geq \lambda \trho(X), \, \forall \lambda \geq 1, \forall X \in L^{\infty}_{++}$.
\item [(ii)]
quasi-logconvex if $\trho(X^\alpha Y^{1-\alpha}) \leq \max\{\trho(X),\trho(Y)\}, \; \forall X,Y \in L^{\infty}_{++}$, $\forall \alpha \in (0,1)$.
\end{itemize}

The following holds:
\begin{lemma}
\label{lem: rho_trho-2}
Let $\rho \colon  L^{\infty} \to [-\infty,\infty]$ and $\trho \colon L^{\infty}_{++} \to [0, \infty]$ be as in \eqref{eq:trho-3} and \eqref{eq:trho-4}.
Then:
\begin{itemize}
\item [(0)] The equivalences in Lemma \ref{lem:rho-trho} extend to $\rho$ and $\trho$ as in \eqref{eq:trho-3}--\eqref{eq:trho-4}.
\item [(a)] $\rho$ is cash-superadditive $\iff$ $\trho$ is star-shaped
\item [(b)] $\rho$ is quasi-convex $\iff$ $\trho$ is quasi-logconvex
\item [(c)] $\rho$ is continuous from above (resp. from below) $\iff$ $\trho$ is continuous from above away from $0$ (resp. from below), where continuity from above away from $0$ means that if $X_n \downarrow X$ with $X_n,X \in L^{\infty}_{++}$ then $\lim_{n \to + \infty} \trho(X_n)=\trho(X)$.
\end{itemize}
\end{lemma}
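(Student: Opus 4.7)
The plan is to exploit the fact that $\log:(0,\infty)\to\mathbb{R}$ and $\exp:\mathbb{R}\to(0,\infty)$, extended continuously as strictly increasing bijections between $[0,\infty]$ and $[-\infty,\infty]$, intertwine multiplication with addition. Pointwise composition then sets up a bijection between maps $\trho\colon L^{\infty}_{++}\to[0,\infty]$ and maps $\rho\colon L^{\infty}\to[-\infty,\infty]$ that translates each of the algebraic and order-theoretic properties appearing in the lemma into the one paired with it. For part (0), the proofs of Lemma~\ref{lem:rho-trho} go through verbatim in the extended-valued setting, because $\log$ and $\exp$ remain strictly monotone and continuous at the boundary points $0$ and $\pm\infty$.

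For (a), set $Z:=\log X\in L^{\infty}$ and $h:=\log\lambda$, so $\lambda\geq 1$ corresponds to $h\geq 0$. A direct substitution, together with strict monotonicity of $\exp$, gives
\[
\trho(\lambda X)=\exp\bigl(\rho(Z+h)\bigr),\qquad \lambda\,\trho(X)=\exp\bigl(\rho(Z)+h\bigr),
\]
so star-shapedness of $\trho$ is equivalent to $\rho(Z+h)\geq\rho(Z)+h$ for all $Z\in L^{\infty}$ and all $h\geq 0$, i.e., cash-superadditivity. For (b), put $U:=\log X$ and $V:=\log Y$; then $X^{\alpha}Y^{1-\alpha}=\exp(\alpha U+(1-\alpha)V)$ and, using that $\exp$ is increasing,
\[
\trho(X^{\alpha}Y^{1-\alpha})=\exp\bigl(\rho(\alpha U+(1-\alpha)V)\bigr),\qquad \max\{\trho(X),\trho(Y)\}=\exp\bigl(\max\{\rho(U),\rho(V)\}\bigr).
\]
Cancelling $\exp$ on both sides shows that quasi-logconvexity of $\trho$ is equivalent to quasi-convexity of $\rho$.

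For (c), the key observation is that if $Z_n\downarrow Z$ in $L^{\infty}$, then $X_n:=\exp(Z_n)\downarrow X:=\exp(Z)$, and since $Z,Z_n$ are bounded above and below, the sequence $X_n$ and its limit $X$ all lie in $L^{\infty}_{++}$; conversely, if $X_n\downarrow X$ with $X,X_n\in L^{\infty}_{++}$, then $\log X_n\downarrow\log X$ in $L^{\infty}$. Invoking continuity of $\log$ and $\exp$, including at $0$ and $\pm\infty$, one obtains $\trho(X_n)=\exp(\rho(\log X_n))\to\exp(\rho(\log X))=\trho(X)$ whenever $\rho(\log X_n)\to\rho(\log X)$, and vice versa. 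The ``away from $0$'' proviso in the definition of continuity from above of $\trho$ precisely ensures that $\log X\in L^{\infty}$, since otherwise $\log X$ could take the value $-\infty$ on a set of positive measure, landing outside the domain of $\rho$. The continuity-from-below equivalence is proved by the same argument with $\uparrow$ replacing $\downarrow$; here no extra proviso is needed because a decreasing-to-$+\infty$ scenario is ruled out by $X_n\in L^{\infty}_{++}$ together with $X_n\uparrow X\in L^{\infty}_{++}$.

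The main obstacle is purely bookkeeping around the extended-value setting: one must check that the bijection $Z\leftrightarrow X=\exp(Z)$ carries each algebraic identity, each monotone convergence, and in particular the correct boundary behaviour (distinguishing ``from above'' from ``from above away from $0$'') faithfully back and forth. Once the domain and codomain matches are pinned down, each equivalence collapses to the identity $\trho=\exp\circ\rho\circ\log$ combined with strict monotonicity and continuity of $\exp$ and $\log$.
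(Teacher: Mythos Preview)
Your proposal is correct and follows essentially the same approach as the paper: exploit the bijection $\trho=\exp\circ\rho\circ\log$ together with strict monotonicity and continuity of $\exp$ and $\log$ to transfer each property back and forth. The only cosmetic difference is in part~(b), where the paper phrases the argument via convexity of the lower level sets $\{\log X:\rho(\log X)\leq\alpha\}$ rather than your direct substitution $\trho(X^{\alpha}Y^{1-\alpha})=\exp\bigl(\rho(\alpha U+(1-\alpha)V)\bigr)$; both are one-line observations and amount to the same thing.
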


\begin{proof}
(0) follows immediately by Lemma~\ref{lem:rho-trho}.

(a) Assume that $\rho$ is cash-superadditive. 
It then follows that for any $X \in L^{\infty}_{++}$ and $\lambda \geq 1$,
\begin{eqnarray*}
\trho(\lambda X) &=& \exp \left( \rho \left( \log(\lambda X) \right) \right) \\
 &=& \exp \left( \rho \left( \log(X) + \log \lambda  \right) \right) \\
  &\geq & \exp \left( \rho \left( \log(X) \right)+ \log \lambda \right) \\
  &=& \lambda \trho (X).
\end{eqnarray*}

Viceversa: by star-shapedness of $\trho$ it follows that for any $X \in L^{\infty}$ and $h \in \Bbb R_+$,
\begin{eqnarray*}
\rho(X+h) &=& \log \left( \trho \left( e^{X+h} \right) \right) \\
 &\geq & \log \left(e^h \trho \left( e^{X} \right) \right) \\
  &= & \log \left( \trho \left( e^{X} \right) \right) +h\\
  &=& \rho (X)+h.
\end{eqnarray*}

(b) Note that $\trho$ is quasi-logconvex $\iff \{\log X \in L^{\infty}|\rho(\log X)\leq \alpha\}
=\{\log X \in L^{\infty}|\trho(X)\leq \exp(\alpha)\}$ $\forall \alpha \in \R$ is convex.
In other words, convex lower level sets of absolute positions under 
risk measures of the form \eqref{eq:trho-3}
corresponds to convex lower level sets of relative (log) positions under risk measures of the form \eqref{eq:trho-4}.

%
%

(c) We prove the case of continuity from above. The case of continuity from below can be checked similarly.

Let $\rho$ be continuous from above and let $(X_n)_{ n \in \Bbb N}$ be a sequence in $L^{\infty}_{++}$ with $X_n \downarrow X$ and $X \in L^{\infty}_{++}$.
Then,
\begin{equation*}
\lim_{n \to + \infty} \trho(X_n)=\exp\left\{\lim_{n \to + \infty}  \rho \left( \log( X_n) \right) \right\}=\trho(X),
\end{equation*}
by increasing monotonicity of $\log(\cdot)$ and by continuity from above of $\rho$.

Viceversa, let $\trho$ be continuous from above away from $0$ and let $(X_n)_{ n \in \Bbb N}$ be a sequence in $L^{\infty}$.
Then,
\begin{equation*}
\lim_{n \to + \infty} \rho(X_n)=\log\left\{\lim_{n \to + \infty}  \trho \left( \exp( X_n) \right) \right\}=\rho(X),
\end{equation*}
because $\exp(X_n) \downarrow \exp(X) \in L^{\infty}_{++}$.

\end{proof}

Also note that
$\trho( \lambda X)\geq \lambda \trho(X), \, \forall \lambda \geq 1, \forall X \in L^{\infty}_{++}$
is equivalent to
$\trho( \lambda X)\leq \lambda \trho(X), \, \forall 0<\lambda \leq 1, \forall X \in L^{\infty}_{++}$.

We recall that any convex 
risk measure is also star-shaped. 
See Frittelli and Rosazza Gianin \cite{FR02}, Remark~8, for a proof. 
See also the very recent \cite{CCMTW22} for an analysis of monetary star-shaped risk measures.

\subsection{Representation Results}




Thanks to the dual representation of quasi-convex risk measures
(see Cerreia-Vioglio et al. \cite{CMMM11}, Drapeau and Kupper \cite{DK13}, and Frittelli and Maggis \cite{FM11}),
to the one-to-one correspondences \eqref{eq:trho-3} and \eqref{eq:trho-4},
and to Lemmas \ref{lem:rho-trho} and \ref{lem: rho_trho-2},
we are now able to provide the dual representations of monotone, quasi-logconvex and star-shaped risk measures:

\begin{theorem} \label{thm: dual repres rho-tilde}
\textbf{[Monotonicity and quasi-logconvexity.
$\Leftrightarrow$
Representation with $\mathfrak{R}$.]}
\smallskip

The mapping $\trho \colon L^{\infty}_{++} \to [0, \infty]$ is monotone, quasi-logconvex and continuous from below
if and only if it has the following dual representation:
\begin{equation} \label{eq: dual repres rho-tilde}
\trho(X)= \sup_{Q \in \mathcal{Q}} \{ \exp \left(\mathfrak{R}(\mathbb{E}_{Q} [\log X];Q) \right) \},
\end{equation}
where $\mathcal{Q}$ is a set of probability measures that are absolutely continuous with respect to $P$,
and $\mathfrak{R}:\Bbb R \times \mathcal{Q} \to [-\infty,\infty]$
is a functional $\mathfrak{R}(t;Q)$ that is monotone increasing and continuous from below in $t$ for any $Q$.

Moreover, given $\trho$ as above, $\mathfrak{R}$ is defined by
\begin{equation} \label{eq: definition-R}
\mathfrak{R}(t;Q) \triangleq \inf_{Y \in L^{\infty}_{++}} \{\log (\trho(Y)): \mathbb{E}_Q [\log Y] \geq t \}, \quad t \in \Bbb R,\quad Q \in \mathcal{Q}.
\end{equation}
\end{theorem}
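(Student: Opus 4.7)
The plan is to transport the Cerreia-Vioglio--Maccheroni--Marinacci--Montrucchio (CMMM) dual representation for monotone, quasi-convex, continuous-from-below risk measures on $L^{\infty}$ through the bijection \eqref{eq:trho-3}--\eqref{eq:trho-4} between $\rho$ and $\trho$. The heavy lifting is entirely done by CMMM plus Lemmas \ref{lem:rho-trho} and \ref{lem: rho_trho-2}; what remains is a substitution of variables and an endpoint check.

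\emph{Necessity.} Suppose $\trho$ is monotone, quasi-logconvex, and continuous from below. Define the auxiliary map $\rho(Z) \triangleq \log\trho(\exp Z)$ for $Z \in L^{\infty}$, with the natural conventions $\log 0 = -\infty$ and $\log \infty = +\infty$. By Lemma \ref{lem: rho_trho-2} parts~(0), (b) and (c), $\rho$ is then monotone, quasi-convex, and continuous from below on $L^{\infty}$. Applying the CMMM representation theorem to $\rho$ yields a functional $R : \R \times \Q \to [-\infty,\infty]$, monotone increasing and continuous from below in its first argument for each $Q \in \Q$, such that
\[
\rho(Z) = \sup_{Q \in \Q} R(\E_Q[Z]; Q), \qquad Z \in L^{\infty},
\]
with the maximal such $R$ determined uniquely by $R(t;Q) = \inf_{Z \in L^{\infty}}\{\rho(Z) : \E_Q[Z] \geq t\}$. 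Setting $\mathfrak{R} \triangleq R$ and substituting $Z = \log X$ for $X \in L^{\infty}_{++}$ produces
\[
\trho(X) = \exp(\rho(\log X)) = \sup_{Q \in \Q} \exp\!\bigl(\mathfrak{R}(\E_Q[\log X]; Q)\bigr),
\]
since $\exp$ is increasing and continuous. The substitution $Z = \log Y$ in the formula for $R$ delivers \eqref{eq: definition-R}.

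\emph{Sufficiency.} Conversely, suppose $\trho$ admits the representation \eqref{eq: dual repres rho-tilde} with $\mathfrak{R}(\cdot;Q)$ monotone increasing and continuous from below for each $Q$. Monotonicity of $\trho$ follows from the monotonicity of $\log$, the linearity of $\E_Q$, and the monotonicity of $\mathfrak{R}(\cdot;Q)$ and $\exp$. Quasi-logconvexity follows from the identity $\E_Q[\log(X^\alpha Y^{1-\alpha})] = \alpha\,\E_Q[\log X] + (1-\alpha)\,\E_Q[\log Y] \leq \max\{\E_Q[\log X],\E_Q[\log Y]\}$ combined with monotonicity of $\mathfrak{R}(\cdot;Q)$ and $\exp$, together with the fact that a supremum of quasi-convex bounds is again such. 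Continuity from below of $\trho$ reduces, via monotone convergence for $\E_Q[\log X_n]$ (which is applicable since $X_n$ is $L^{\infty}$-bounded), to continuity from below of $\mathfrak{R}(\cdot;Q)$ and continuity of $\exp$, and finally to the standard interchange of monotone limits with suprema.

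\emph{Main obstacle.} The substantive work has been absorbed into CMMM, so the real care needed is bookkeeping at the endpoints: the extended ranges $[-\infty,\infty]$ for $\rho$ and $[0,\infty]$ for $\trho$ force one to keep $\log$ and $\exp$ consistent at $0$ and $\infty$, which is exactly why the continuity-from-below direction of Lemma \ref{lem: rho_trho-2}(c) is formulated through sequences staying in $L^{\infty}_{++}$. Relatedly, the uniqueness claim for $\mathfrak{R}$ must be read as uniqueness of the \emph{maximal} representing functional obtained from \eqref{eq: definition-R}, since any pointwise smaller choice would still satisfy \eqref{eq: dual repres rho-tilde}; this is inherited verbatim from the corresponding uniqueness in CMMM.
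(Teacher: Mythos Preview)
Your proposal is correct and follows essentially the same route as the paper: transport the CMMM/Frittelli--Maggis/Drapeau--Kupper dual representation for monotone, quasi-convex, continuous-from-below $\rho$ through the bijection $\rho(Z)=\log\trho(e^{Z})$, then substitute $Z=\log X$ (resp.\ $Z=\log Y$) to obtain \eqref{eq: dual repres rho-tilde} (resp.\ \eqref{eq: definition-R}). The only difference is that you spell out the sufficiency direction and the endpoint bookkeeping explicitly, whereas the paper leaves these implicit in the cited biconditional results.
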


\begin{proof}
From Lemmas~\ref{lem:rho-trho} and~\ref{lem: rho_trho-2}, it follows that $\rho(X)=\log(\trho(e^X))$
is a monotone, quasi-convex and continuous from below risk measure.
Hence, by Prop. 2.13 of Frittelli and Maggis \cite{FM11},  Cerreia-Vioglio et al. \cite{CMMM11}, Theorem~3.1 and Lemma~3.2,
and Drapeau and Kupper \cite{DK13}, $\rho$ has the following dual representation:
\begin{equation} \label{eq: dual repres rho}
\rho(X)=\sup_{Q\in\mathcal{Q}} \left(\mathfrak{R}\left(\mathbb{E}_{Q}\left[ X\right]; Q\right)\right),
\end{equation}
where $\mathfrak{R}:\mathbb{R}\times\mathcal{Q}\rightarrow[-\infty,\infty]$ is
monotone increasing and continuous from below in $t \in \Bbb R$.

By \eqref{eq:trho-3}--\eqref{eq:trho-4} and \eqref{eq: dual repres rho} we deduce that
\begin{eqnarray*}
\trho(X) &=& \exp\left( \rho(\log (X) \right) \\
&=& \exp \left\{ \sup_{Q\in\mathcal{Q}} \left(\mathfrak{R}\left(\mathbb{E}_{Q}\left[\log X\right]; Q\right)\right) \right\} \\
&=& \sup_{Q \in \mathcal{Q}} \{ \exp \left(  \mathfrak{R}(\mathbb{E}_{Q} [\log X];Q) \right) \}.
\end{eqnarray*}

Furthermore, from Frittelli and Maggis \cite{FM11} we know that
$\mathfrak{R}(t;Q) = \inf_{X \in L^{\infty}} \{\rho(X): \mathbb{E}_Q [X] \geq t \}$ for any $t \in \Bbb R, Q \in \mathcal{Q}$.
Hence,
\begin{eqnarray*}
\mathfrak{R}(t;Q) &=& \inf_{X \in L^{\infty}} \{\rho(X): \mathbb{E}_Q [X] \geq t \} \\
&=& \inf_{X \in L^{\infty}} \{\log (\trho(e^X)): \mathbb{E}_Q [X] \geq t \} \\
&=& \inf_{Y \in L^{\infty}_{++}} \{\log (\trho(Y)): \mathbb{E}_Q [\log Y] \geq t \}.
\end{eqnarray*}
\end{proof}

Setting $\tilde{\mathfrak{R}}(t;Q)= \exp \mathfrak{R}(t;Q)$, the dual representation \eqref{eq: dual repres rho-tilde} becomes
\begin{equation} \label{eq: dual repres rho-tilde -R tilde}
\trho(X)= \sup_{Q \in \mathcal{Q}} \tilde{\mathfrak{R}}(\mathbb{E}_{Q} [\log X];Q),
\end{equation}
with $\tilde{\mathfrak{R}}:\mathbb{R}\times\mathcal{Q}\rightarrow[0,\infty]$ increasing and continuous from below in $t$,
while \eqref{eq: definition-R} becomes
\begin{equation} \label{eq: definition-R tilde}
\tilde{\mathfrak{R}}(t;Q) \triangleq \inf_{Y \in L^{\infty}_{++}} \{\trho(Y): \mathbb{E}_Q [\log Y] \geq t \}, \quad t \in \Bbb R, Q \in \mathcal{Q}.
\end{equation}

\bigskip

\begin{theorem} \label{thm: dual repres rho-tilde -star-shaped}
\textbf{[Monotonicity, quasi-logconvexity, and star-shapedness.
$\Leftrightarrow$
Representation with $\mathfrak{R}$ expansive in first coordinate.]}
\smallskip

The following statements are equivalent:
\begin{itemize}
\item[(a)] $\trho$ is monotone, quasi-logconvex, star-shaped and continuous from below;

\item[(b)] $\trho$ can be represented as in \eqref{eq: dual repres rho-tilde} with $\mathfrak{R}$ that is expansive in the first coordinate,
i.e., $\mathfrak{R}(t+h;Q) \geq \mathfrak{R}(t;Q) +h$ for any $t \in \Bbb R$, $h \in \Bbb R_+$ and $Q \in \mathcal{Q}$;

\item[(c)] $\trho$ can be represented as in \eqref{eq: dual repres rho-tilde -R tilde} with $\tilde{\mathfrak{R}}$
that is multiplicatively expansive in the first coordinate,
i.e., $\tilde{\mathfrak{R}}(t+h;Q) \geq e^h \tilde{\mathfrak{R}}(t;Q) $ for any $t \in \Bbb R$, $h \in \Bbb R_+$ and $Q \in \mathcal{Q}$.
\end{itemize}
\end{theorem}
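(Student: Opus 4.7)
The plan is to reduce everything to the corresponding statement about the arithmetic counterpart $\rho(X)=\log\trho(\exp X)$ and then add one new ingredient (cash-superadditivity $\leftrightarrow$ expansiveness of $\mathfrak{R}$), leveraging Lemmas~\ref{lem:rho-trho} and~\ref{lem: rho_trho-2} together with Theorem~\ref{thm: dual repres rho-tilde}.

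\textbf{Step 1: Translate (a) to a property of $\rho$.} By Lemma~\ref{lem:rho-trho} and Lemma~\ref{lem: rho_trho-2}, $\trho$ is monotone, quasi-logconvex, star-shaped, and continuous from below if and only if $\rho$ is monotone, quasi-convex, cash-superadditive, and continuous from below. In particular, by Theorem~\ref{thm: dual repres rho-tilde} applied to $\trho$ (which only uses monotonicity, quasi-logconvexity, and continuity from below), the representation \eqref{eq: dual repres rho-tilde} holds with $\mathfrak{R}$ given by \eqref{eq: definition-R}. So the whole theorem boils down to proving that, under the representation \eqref{eq: dual repres rho}, cash-superadditivity of $\rho$ is equivalent to $\mathfrak{R}(\cdot;Q)$ being expansive on $\mathbb{R}$ uniformly in $Q$.

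\textbf{Step 2: (a) $\Rightarrow$ (b).} Using the explicit formula $\mathfrak{R}(t;Q)=\inf_{X\in L^{\infty}}\{\rho(X)\,:\,\E_Q[X]\geq t\}$ from the proof of Theorem~\ref{thm: dual repres rho-tilde}, fix $t\in\R$, $h\in\R_{+}$, $Q\in\Q$. For every $X$ with $\E_Q[X]\geq t+h$, set $Y\triangleq X-h$; then $\E_Q[Y]\geq t$, and cash-superadditivity of $\rho$ gives $\rho(X)=\rho(Y+h)\geq\rho(Y)+h\geq \mathfrak{R}(t;Q)+h$. Taking the infimum over such $X$ yields $\mathfrak{R}(t+h;Q)\geq\mathfrak{R}(t;Q)+h$.

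\textbf{Step 3: (b) $\Rightarrow$ (a).} From \eqref{eq: dual repres rho} and the expansiveness of $\mathfrak{R}$,
\begin{equation*}
\rho(X+h)=\sup_{Q\in\Q}\mathfrak{R}(\E_Q[X]+h;Q)\geq\sup_{Q\in\Q}\bigl(\mathfrak{R}(\E_Q[X];Q)+h\bigr)=\rho(X)+h,
\end{equation*}
so $\rho$ is cash-superadditive. Since the other properties (monotonicity, quasi-convexity, continuity from below) are already encoded in the fact that $\trho$ admits the representation~\eqref{eq: dual repres rho-tilde} via Theorem~\ref{thm: dual repres rho-tilde}, this recovers (a) after passing back to $\trho$ through Step~1.

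\textbf{Step 4: (b) $\Leftrightarrow$ (c).} Apply the monotone bijection $s\mapsto e^s$ to both sides: $\mathfrak{R}(t+h;Q)\geq\mathfrak{R}(t;Q)+h$ is equivalent to $e^{\mathfrak{R}(t+h;Q)}\geq e^{h}\,e^{\mathfrak{R}(t;Q)}$, i.e.\ $\tilde{\mathfrak{R}}(t+h;Q)\geq e^{h}\tilde{\mathfrak{R}}(t;Q)$; and the representations \eqref{eq: dual repres rho-tilde} and \eqref{eq: dual repres rho-tilde -R tilde} are interchanged by the same substitution.

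The only step requiring any genuine care is Step~2, where one must notice that the cash shift $X\mapsto X-h$ preserves the constraint $\E_Q[X]\geq t+h$ with the shifted right-hand side $t$. Everything else is an application of the lemmas and of the representation from Theorem~\ref{thm: dual repres rho-tilde}.
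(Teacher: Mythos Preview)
Your proof is correct and is essentially the same as the paper's: both arguments use the explicit formula for $\mathfrak{R}$, shift the constraint by $h$ via a change of variable, and invoke the superadditivity property to push the $+h$ outside the infimum (and conversely use expansiveness inside the $\sup$ to recover the property). The only cosmetic difference is that you first pass to the arithmetic counterpart $\rho$ and phrase everything as cash-superadditivity, whereas the paper works directly on the $\trho$ side using the substitution $\xi=Ye^{-h}$ and star-shapedness; under the dictionary $X=\log Y$ these are literally the same computation.
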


\begin{proof}
\textit{(a)} $\Leftrightarrow$ \textit{(b)} 
By Theorem \ref{thm: dual repres rho-tilde}, 
it remains to prove that star-shapedness of $\trho$ is equivalent to expansivity of $R$.
Assume first that $\trho$ is star-shaped.
By \eqref{eq: definition-R}, it follows that
\begin{eqnarray*}
\mathfrak{R}(t+h;Q) &=& \inf \{\log (\trho(Y)): \mathbb{E}_Q [\log Y] \geq t+h \} \\
&=& \inf \{\log (\trho(Y)): \mathbb{E}_Q [\log Y -h] \geq t \} \\
&=& \inf \{\log (\trho(Y) ): \mathbb{E}_Q [\log (Y e^{-h})] \geq t \} \\
&=& \inf \{\log (\trho(\xi e^h) ): \mathbb{E}_Q [\log \xi] \geq t \} \\
&\geq & \inf \{\log (\trho(\xi ) )+h: \mathbb{E}_Q [\log \xi] \geq t \} \\
&=& \mathfrak{R}(t;Q) +h,
\end{eqnarray*}
for any $t \in \Bbb R$, $Q \in \mathcal{Q}$ and $h \geq 0$, where the inequality above is due to star-shapedness of $\rho$.

Assume now that $R$ is expansive.
By \eqref{eq: dual repres rho-tilde}, we deduce that
\begin{eqnarray*}
\trho( \lambda X) &=& \sup_{Q \in \mathcal{Q}} \{ \exp \left(\mathfrak{R}(\mathbb{E}_{Q} [\log (\lambda X)];Q) \right) \} \\
&=& \sup_{Q \in \mathcal{Q}} \{ \exp \left(\mathfrak{R}(\mathbb{E}_{Q} [\log X]+ \log \lambda;Q) \right) \} \\
& \geq & \sup_{Q \in \mathcal{Q}} \{ \exp \left(\mathfrak{R}(\mathbb{E}_{Q} [\log X];Q)+ \log \lambda \right) \} \\
& = & \lambda \sup_{Q \in \mathcal{Q}} \{ \exp \left(\mathfrak{R}(\mathbb{E}_{Q} [\log X];Q) \right) \} \\
&=& \lambda \trho (X),
\end{eqnarray*}
for any $\lambda \geq 1$ and $X \in L^{\infty}_{++}$, where the inequality above is due to expansivity of $\mathfrak{R}$.

\textit{(b)} $\Leftrightarrow$ \textit{(c)} Follows immediately by $\tilde{\mathfrak{R}}(t;Q)= \exp \left(\mathfrak{R}(t;Q)\right)$ and (a) $\Leftrightarrow$ (b).
\end{proof}

\bigskip

\begin{theorem} \label{thm: dual repres rho-tilde -PH}
\textbf{[Monotonicity, quasi-logconvexity, and positive homogeneity.
$\Leftrightarrow$
Representation with $\mathfrak{R}$ translation invariant in first coordinate $\Leftrightarrow$
Logconvex and positively homogeneous measures of risk.]}

The following statements are equivalent:

\begin{itemize}
\item[(a)] $\trho$ is monotone, quasi-logconvex, continuous from below and positively homogeneous;

\item[(a')] $\trho$ is logconvex and continuous from below;

\item[(b)] $\trho$ can be represented as in \eqref{eq: dual repres rho-tilde} with $\mathfrak{R}$ translation invariant in the first coordinate,
i.e., $\mathfrak{R}(t+h;Q) = \mathfrak{R}(t;Q) +h$ for any $t,h \in \Bbb R$ and $Q \in \mathcal{Q}$;

\item[(c)] $\trho$ can be represented as in \eqref{eq: dual repres rho-tilde -R tilde} with $\tilde{\mathfrak{R}}$
that is multiplicatively homogeneous in the first coordinate,
i.e., $\tilde{\mathfrak{R}}(t+h;Q) = e^h \tilde{\mathfrak{R}}(t;Q) $ for any $t,h \in \Bbb R$ and $Q \in \mathcal{Q}$.
\end{itemize}
\end{theorem}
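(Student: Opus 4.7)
The plan is to reduce everything to the arithmetic counterpart $\rho(X) = \log(\trho(e^X))$ and then recycle the representation theorems already in place. By Lemma~\ref{lem: rho_trho-2} (extending Lemma~\ref{lem:rho-trho}), monotonicity, quasi-logconvexity, and continuity from below of $\trho$ transfer to the analogous properties of $\rho$, while positive homogeneity of $\trho$ corresponds to translation invariance of $\rho$. Thus statement (a) is equivalent to $\rho$ being monotone, translation invariant, quasi-convex, and continuous from below.

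For \textbf{(a) $\Leftrightarrow$ (a')}, once $\rho$ is translation invariant and monotone, quasi-convexity is equivalent to convexity. This is the content of Lemma~\ref{lem:1a}(ii), whose proof (shift $X$ and $Y$ by constants so that the two shifted variables both have $\rho$-value $\alpha\rho(X)+(1-\alpha)\rho(Y)$, then apply quasi-convexity) does not actually require the normalization $\rho(0)=0$. Combining this with Lemma~\ref{lem:rho-trho}(f), ``$\rho$ monotone, translation invariant, convex'' is equivalent to ``$\trho$ logconvex'' (the latter definition already subsumes monotonicity and positive homogeneity). Adding continuity from below on either side yields (a) $\Leftrightarrow$ (a').

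For \textbf{(a) $\Leftrightarrow$ (b)}, Theorem~\ref{thm: dual repres rho-tilde} already gives the representation \eqref{eq: dual repres rho-tilde} with $\mathfrak{R}$ monotone and continuous from below in its first coordinate; what remains is to translate positive homogeneity of $\trho$ into translation invariance of $\mathfrak{R}$ in its first coordinate over \emph{all} of $\R$ (as opposed to just $\R_+$, as in the star-shaped case of Theorem~\ref{thm: dual repres rho-tilde -star-shaped}). Assuming positive homogeneity, start from \eqref{eq: definition-R} and substitute $Y = \xi e^h$ for $h \in \R$:
\begin{align*}
\mathfrak{R}(t+h;Q) &= \inf_{Y \in L^{\infty}_{++}}\{\log \trho(Y) : \E_Q[\log Y] \geq t+h\} \\
&= \inf_{\xi \in L^{\infty}_{++}} \{\log \trho(\xi e^h) : \E_Q[\log \xi] \geq t\} \\
&= \inf_{\xi \in L^{\infty}_{++}} \{\log \trho(\xi) + h : \E_Q[\log \xi] \geq t\} = \mathfrak{R}(t;Q) + h,
\end{align*}
where the middle equality uses $\trho(\lambda\xi) = \lambda\trho(\xi)$ with $\lambda = e^h > 0$. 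Conversely, if $\mathfrak{R}(t+h;Q) = \mathfrak{R}(t;Q) + h$ for all $h \in \R$, then applying \eqref{eq: dual repres rho-tilde} to $\lambda X$, using $\E_Q[\log(\lambda X)] = \E_Q[\log X] + \log\lambda$, and pulling $\lambda = e^{\log\lambda}$ out of the supremum recovers $\trho(\lambda X) = \lambda\trho(X)$ for every $\lambda > 0$.

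Finally, \textbf{(b) $\Leftrightarrow$ (c)} is immediate from $\tilde{\mathfrak{R}} = \exp \mathfrak{R}$: the additive identity $\mathfrak{R}(t+h;Q) = \mathfrak{R}(t;Q) + h$ becomes the multiplicative one $\tilde{\mathfrak{R}}(t+h;Q) = e^h \tilde{\mathfrak{R}}(t;Q)$. The main (and in fact only) obstacle is really no more than adapting the one-sided expansivity argument of Theorem~\ref{thm: dual repres rho-tilde -star-shaped} into a two-sided equality, together with careful bookkeeping at the extreme values $0$ and $\infty$ of $\trho$ (equivalently $\pm\infty$ of $\mathfrak{R}$), under the conventions $\log 0 = -\infty$, $\log \infty = +\infty$, $\exp(-\infty)=0$, $\exp(+\infty)=\infty$; no conceptually new ingredient beyond what is already used in Theorems~\ref{thm: dual repres rho-tilde} and~\ref{thm: dual repres rho-tilde -star-shaped} is required.
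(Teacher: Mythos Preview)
Your proof is correct and follows essentially the same route as the paper: reduce to the arithmetic counterpart $\rho$ via Lemmas~\ref{lem:rho-trho} and~\ref{lem: rho_trho-2}, invoke Lemma~\ref{lem:1a}(ii) for (a)~$\Leftrightarrow$~(a'), and adapt the substitution argument of Theorem~\ref{thm: dual repres rho-tilde -star-shaped} for (a)~$\Leftrightarrow$~(b)~$\Leftrightarrow$~(c). The only cosmetic difference is that the paper phrases (a)~$\Leftrightarrow$~(b) by decomposing positive homogeneity into star-shapedness plus the reverse inequality $\trho(\lambda X)\leq\lambda\trho(X)$ for $0<\lambda\leq1$ and matching each half to the corresponding one-sided bound on $\mathfrak{R}$, whereas you run the substitution $Y=\xi e^h$ directly for all $h\in\R$; both arrive at the same translation-invariance identity.
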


\begin{proof}
\textit{(a) $\Leftrightarrow$ (a').} From Lemmas~\ref{lem:rho-trho} and \ref{lem: rho_trho-2},
it follows that (a) is equivalent to
$\rho(X)=\log(\trho(e^X))$ being a monotone, quasi-convex and continuous from below risk measure that is moreover translation invariant.
Hence, by 
Lemma \ref{lem:1a},
(a) is equivalent to $\rho$ being monotone, convex, translation invariant and continuous from below.
The thesis then follows from Lemmas \ref{lem:rho-trho} and \ref{lem: rho_trho-2}.

\textit{(a) $\Leftrightarrow$ (b) $\Leftrightarrow$ (c).} The proof can be established similarly
to that of Theorem \ref{thm: dual repres rho-tilde -star-shaped},
or by applying Theorem \ref{thm: dual repres rho-tilde -star-shaped} and Theorem 3.1 of Cerreia-Vioglio et al. \cite{CMMM11}.
Positive homogeneity is indeed equivalent to star-shapedness and $\trho( \lambda X)\leq \lambda \trho(X), \, \forall 0 \leq \lambda \leq 1, \forall X \in L^{\infty}_{++}$, together. Similarly as in Theorem \ref{thm: dual repres rho-tilde -star-shaped}, it can be checked that this last property corresponds to $\mathfrak{R}(t+h;Q) \leq \mathfrak{R}(t;Q) +h$ for any $t \in \Bbb R$, $h \in \Bbb R_+$ and $Q \in \mathcal{Q}$ or, equivalently, to $\tilde{\mathfrak{R}}(t+h;Q) \leq  e^h \tilde{\mathfrak{R}}(t;Q) $ for any $t \in \Bbb R$, $h \in \Bbb R_+$ and $Q \in \mathcal{Q}$.
\end{proof}

Note that if $\mathfrak{R}$ is cash-additive (or, equivalently, $\tilde{\mathfrak{R}}$ multiplicatively homogeneous
in the first coordinate), then
\begin{equation*}
\mathfrak{R}(t,Q)=\mathfrak{R}(0,Q) + t= t-c(Q), \quad t \in \Bbb R, Q \in \mathcal{Q},
\end{equation*}
where $c(Q)$ is the minimal penalty term of convex risk measures (see Frittelli and Maggis \cite{FM11}, Corollary~2.14).

\subsubsection{The building block}
Next, we reveal that a canonical Orlicz premium, given by the logarithmic certainty equivalent,
serves as the building block of logconvex and quasi-logconvex measures of risk.

We consider the following building block:
\begin{align*}
\exp \left ( \E \left [\log X \right ] \right ) &=
\inf \left \{ k >0 \; \Big | \; \E \left [  \log X - \log k  \right ] \leq 0 \right \}  \\
&= \inf \left \{ k >0 \; \Big | \; \E \left [ 1+ \log \left ( \frac{X}{k} \right)  \right ] \leq 1 \right \}  \\
&= \inf \left \{ k >0 \; \Big | \; \E \left [ \Phi \left ( \frac{X}{k} \right ) \right ] \leq 1 \right \} \\
&= H_\Phi (X),
\end{align*}
with $\Phi(x)=1+\log x$.
Henceforth, when $\Phi(x)=1+\log x$, we denote $H_\Phi (X)$ by $H_0 (X)$.

Then, for a logconvex measure of risk $\trho$ and its (monetary) convex counterpart $\rho$,
we have
\begin{align}
\trho(X) &= \exp\left( \rho(\log (X) \right) \nonumber \\
&= \exp\left( \sup_{Q \in \mathcal{Q}} \left\{\mathbb{E}_{Q} [\log(X) ]- c(Q)\right\} \right) \nonumber \\
&= \sup _{Q \in \mathcal{Q}} \beta(Q) \exp \left( \mathbb{E}_{Q} [\log(X) ] \right) \nonumber \\
&= \sup _{Q \in \mathcal{Q}} \beta(Q) H_{0,Q}(X),
\end{align}
where
\begin{equation*}
H_{\Phi, Q} (X)\triangleq \inf \left \{ k >0 \; \Big | \; \E_Q \left [ \Phi \left ( \frac{X}{k} \right ) \right ]  \leq 1 \right \},
\end{equation*}
and
\begin{equation}
H_{0, Q} (X)\triangleq \inf \left \{ k >0 \; \Big | \; \E_Q \left [ \log \left ( \frac{X}{k} \right ) \right ]  \leq 0 \right \},
\label{eq:buildingblockQ}
\end{equation}
and where $\beta(Q)=\exp(-c(Q))$.

Furthermore, for a quasi-logconvex measure of risk $\trho$,
we have
\begin{align}
\trho(X)&=\sup_{Q\in\mathcal{Q}}\exp \left(\mathfrak{R}\left(\mathbb{E}_{Q}\left[\log X\right]; Q\right)\right) \nonumber \\
&=\sup_{Q\in\mathcal{Q}}R\left(\exp\left(\mathbb{E}_{Q}\left[\log X\right]\right); Q\right) \nonumber \\
&=\sup_{Q\in\mathcal{Q}}R\left(H_{0, Q} (X); Q\right),
\label{eq:repQLC-1}
\end{align}
where
\begin{equation*}
\exp\left(\mathfrak{R}(t;Q)\right)
= R(\exp(t);Q),
\end{equation*}
hence
\begin{equation*}
R(s;Q)
=\exp\left(\mathfrak{R}(\log(s);Q)\right).
\end{equation*}

It follows that quasi-logconvex star-shaped measures of risk take the form \eqref{eq:repQLC-1} with $R$
geometrically expansive in the first coordinate,
that is,
\begin{equation}
R(s';Q)\geq R(s;Q)\exp\left(\vert \log s'-\log s\vert\right).
\label{eq:geoexpansive-1}
\end{equation}
Indeed,
\begin{equation*}
R(s';Q)=\exp(\mathfrak{R}(\log s';Q))\geq\exp(\mathfrak{R}(\log s;Q))\exp(|\log s'-\log s|).
\end{equation*}

\subsection{A Classification}

We state the following lemma, the proof of which follows from Lemmas \ref{lem:1a}, \ref{lem:rho-trho} and \ref{lem: rho_trho-2}
and the one-to-one correspondences \eqref{eq:trho-3}--\eqref{eq:trho-4}:
\begin{lemma}
\begin{itemize}
\item[(i)] Suppose $\trho$ is monotone, positively homogeneous and constant-multiplicative with $\trho(1)=1$.
Then $\trho$ is logconvex if and only if it is submultiplicative.
\item[(ii)] Suppose $\trho$ is monotone and positively homogeneous with $\trho(1)=1$, i.e., $\trho$ is a return risk measure.
Then $\trho$ is logconvex if and only if it is quasi-logconvex.
\end{itemize}
\label{lem:1b}
\end{lemma}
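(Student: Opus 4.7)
The plan is to transport Lemma~\ref{lem:1a} from the arithmetic world to the geometric world through the bijection $\trho(X)=\exp(\rho(\log X))$, using the dictionary assembled in Lemmas~\ref{lem:rho-trho} and~\ref{lem: rho_trho-2}. Because every hypothesis and every conclusion in Lemma~\ref{lem:1b} is one of the entries in that dictionary, no genuine analytic work is needed; the proof reduces to a careful translation.

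For part~(i), I would start by observing that the hypotheses on $\trho$ (monotone, positively homogeneous, constant-multiplicative, with $\trho(1)=1$) correspond, via items (a)--(d) of Lemma~\ref{lem:rho-trho}, exactly to the hypotheses of part~(i) of Lemma~\ref{lem:1a} on $\rho$ (monotone, translation invariant, positively homogeneous, with $\rho(0)=0$); i.e., $\rho$ is simultaneously monetary and return. Then I would invoke Lemma~\ref{lem:1a}(i) to conclude that $\rho$ is convex $\iff$ $\rho$ is subadditive. Translating both sides back through items (e) and (f) of Lemma~\ref{lem:rho-trho}, the equivalence becomes $\trho$ logconvex $\iff$ $\trho$ submultiplicative, which is the claim.

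For part~(ii), the argument is the same in structure. The hypotheses on $\trho$ (monotone, positively homogeneous, $\trho(1)=1$) correspond via Lemma~\ref{lem:rho-trho}(a)--(c) to $\rho$ being monotone, translation invariant, with $\rho(0)=0$, i.e., $\rho$ is monetary. Applying Lemma~\ref{lem:1a}(ii) gives $\rho$ convex $\iff$ $\rho$ quasi-convex. Lemma~\ref{lem:rho-trho}(f) translates the left-hand side to ``$\trho$ logconvex'' and Lemma~\ref{lem: rho_trho-2}(b) translates the right-hand side to ``$\trho$ quasi-logconvex'', yielding the desired equivalence.

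There is no real obstacle; the only subtle point worth double-checking is that the ranges match up, since Lemma~\ref{lem: rho_trho-2} extends the correspondence to $\rho \colon L^{\infty}\to[-\infty,\infty]$ and $\trho \colon L^{\infty}_{++}\to[0,\infty]$, whereas Lemma~\ref{lem:1a} is stated for finite $\rho$. Under the hypotheses of Lemma~\ref{lem:1b}, however, monotonicity together with positive homogeneity and $\trho(1)=1$ confines $\trho$ to $(0,\infty)$, hence $\rho$ is finite, so Lemma~\ref{lem:1a} applies without modification. After noting this, the translation above completes the proof.
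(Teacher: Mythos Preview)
Your proposal is correct and is exactly the approach the paper takes: the paper states explicitly that the proof ``follows from Lemmas~\ref{lem:1a}, \ref{lem:rho-trho} and~\ref{lem: rho_trho-2} and the one-to-one correspondences~\eqref{eq:trho-3}--\eqref{eq:trho-4}'', which is precisely the translation argument you spell out. Your added remark on finiteness of $\trho$ (hence of $\rho$) under the standing hypotheses is a useful sanity check that the paper leaves implicit.
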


From (ii) in Lemma \ref{lem:1b} we conclude that, if positive homogeneity for return risk measures is replaced by star-shapedness,
then logconvexity should be replaced by quasi-logconvexity.

In Figure \ref{fig:II} we illustrate the connections between return, quasi-logconvex and related classes of risk measures.
The classes of logcoherent and logconvex risk measures are outside the class of monetary risk measures,
and (strict) subclasses of the class of return risk measures, as illustrated in panels (c) and (d); cf. (i) in Lemma \ref{lem:1b}.
From Theorems 6.1 and 6.2 in Laeven and Stadje \cite{LS13} and the one-to-one correspondences \eqref{eq:trho-3}--\eqref{eq:trho-4}
we obtain the decomposition of logconvex risk measures
into robust discounted $p$-norms, robust $p$-norms, and logcoherent measures of risk
illustrated in panels (e) and (f).
From (ii) in Lemma \ref{lem:1b} we conclude that the only intersection between quasi-logconvex and return risk measures
is given by the class of logconvex risk measures; see panel (g).
In panel (h) we illustrate the (strict) subclass of quasi-logconvex risk measures that are star-shaped.

\begin{figure}
\centering
\subfigure[Return]
{

\begin{tikzpicture}[scale = 0.34]
\draw[blue,thick,fill=blue!10!white,rotate around={45:(6,0)}] (6,0) rectangle ++(6,6);
\end{tikzpicture}

}
\ \ \ \ \ \ \ \ \ \
\subfigure[Monetary, Return and Coherent]
{

\begin{tikzpicture}[scale = 0.34]
\draw[blue,thick,fill=blue!10!white,rotate around={45:(6,0)}] (6,0) rectangle ++(6,6);
\draw[red,dashed,thick,rotate around={45:(1,0)}] (1,0) rectangle ++(6,6);
\draw[violet,dashed,thick,rotate around={45:(2.53,3.53)}] (2.53,3.53) rectangle ++(1,1);
\end{tikzpicture}

}
\subfigure[Monetary, Return, Coherent and Logcoherent]
{

\begin{tikzpicture}[scale = 0.34]
\draw[blue,thick,fill=blue!10!white,rotate around={45:(6,0)}] (6,0) rectangle ++(6,6);
\draw[red,dashed,thick,rotate around={45:(1,0)}] (1,0) rectangle ++(6,6);
\draw[violet,dashed,thick,rotate around={45:(2.53,3.53)}] (2.53,3.53) rectangle ++(1,1);
\draw[violet,thick,rotate around={45:(6.43,3.53)}] (6.43,3.53) rectangle ++(1,1);
\end{tikzpicture}

}
\ \ \ \ \
\subfigure[Monetary, Return, Coherent, Logcoherent and Logconvex]
{

\begin{tikzpicture}[scale = 0.34]
\draw[blue,thick,fill=blue!10!white,rotate around={45:(6,0)}] (6,0) rectangle ++(6,6);
\draw[red,dashed,thick,rotate around={45:(1,0)}] (1,0) rectangle ++(6,6);
\draw[violet,dashed,thick,rotate around={45:(2.53,3.53)}] (2.53,3.53) rectangle ++(1,1);
\draw[violet,thick,rotate around={45:(6.43,3.53)}] (6.43,3.53) rectangle ++(1,1);
\draw[green,thick,rotate around={45:(8,1.9)}] (8,1.9) rectangle ++(3.3,3.3);
\end{tikzpicture}

}
\subfigure[Monetary, Return, Coherent, Logcoherent, Logconvex and Robust Discounted $p$-Norm]
{

\begin{tikzpicture}[scale = 0.34]
\draw[blue,thick,fill=blue!10!white,rotate around={45:(6,0)}] (6,0) rectangle ++(6,6);
\draw[red,dashed,thick,rotate around={45:(1,0)}] (1,0) rectangle ++(6,6);
\draw[violet,dashed,thick,rotate around={45:(2.53,3.53)}] (2.53,3.53) rectangle ++(1,1);
\draw[violet,thick,rotate around={45:(6.43,3.53)}] (6.43,3.53) rectangle ++(1,1);
\draw[green,thick,rotate around={45:(8,1.9)}] (8,1.9) rectangle ++(3.3,3.3);
\draw[purple,thick,rotate around={45:(8.6,2.75)}] (8.6,2.75) rectangle ++(2.1,3);
\end{tikzpicture}

}
\ \ \ \ \
\subfigure[Monetary, Return, Coherent, Logcoherent, Logconvex, Robust Discounted $p$-Norm and Robust $p$-Norm]
{

\begin{tikzpicture}[scale = 0.34]
\draw[blue,thick,fill=blue!10!white,rotate around={45:(6,0)}] (6,0) rectangle ++(6,6);
\draw[red,dashed,thick,rotate around={45:(1,0)}] (1,0) rectangle ++(6,6);
\draw[violet,dashed,thick,rotate around={45:(2.53,3.53)}] (2.53,3.53) rectangle ++(1,1);
\draw[violet,thick,rotate around={45:(6.43,3.53)}] (6.43,3.53) rectangle ++(1,1);
\draw[green,thick,rotate around={45:(8,1.9)}] (8,1.9) rectangle ++(3.3,3.3);
\draw[purple,thick,rotate around={45:(8.6,2.75)}] (8.6,2.75) rectangle ++(2.1,3);
\draw[yellow,thick,rotate around={45:(7.1,4.3)}] (7.1,4.3) rectangle ++(2.1,0.8);
\end{tikzpicture}

}
\subfigure[
Return, Coherent, Logcoherent, Logconvex, Robust Discounted $p$-Norm, Robust $p$-Norm and Quasi-Logconvex]
{

\begin{tikzpicture}[scale = 0.34]
\draw[blue,thick,fill=blue!10!white,rotate around={45:(6,0)}] (6,0) rectangle ++(6,6);
\draw[violet,dashed,thick,rotate around={45:(2.53,3.53)}] (2.53,3.53) rectangle ++(1,1);
\draw[violet,thick,rotate around={45:(6.43,3.53)}] (6.43,3.53) rectangle ++(1,1);
\draw[green,thick,rotate around={45:(8,1.9)}] (8,1.9) rectangle ++(3.3,3.3);
\draw[purple,thick,rotate around={45:(8.6,2.75)}] (8.6,2.75) rectangle ++(2.1,3);
\draw[yellow,thick,rotate around={45:(7.1,4.3)}] (7.1,4.3) rectangle ++(2.1,0.8);
\draw[pink,thick,rotate around={45:(10,0)}] (10,0) rectangle ++(6,6);
\end{tikzpicture}

}
\ \ \
\subfigure[
Return, Coherent, Logcoherent, Logconvex, Robust Discounted $p$-Norm, Robust $p$-Norm, Quasi-Logconvex and Star-Shaped]
{

\begin{tikzpicture}[scale = 0.34]
\draw[blue,thick,fill=blue!10!white,rotate around={45:(6,0)}] (6,0) rectangle ++(6,6);
\draw[violet,dashed,thick,rotate around={45:(2.53,3.53)}] (2.53,3.53) rectangle ++(1,1);
\draw[violet,thick,rotate around={45:(6.43,3.53)}] (6.43,3.53) rectangle ++(1,1);
\draw[green,thick,rotate around={45:(8,1.9)}] (8,1.9) rectangle ++(3.3,3.3);
\draw[purple,thick,rotate around={45:(8.6,2.75)}] (8.6,2.75) rectangle ++(2.1,3);
\draw[yellow,thick,rotate around={45:(7.1,4.3)}] (7.1,4.3) rectangle ++(2.1,0.8);
\draw[pink,thick,rotate around={45:(10,0)}] (10,0) rectangle ++(6,6);
\draw[magenta,thick,rotate around={45:(9.1,0.8)}] (9.1,0.8) rectangle ++(4.9,4.9);
\end{tikzpicture}

}
\caption{Return, Quasi-Logconvex and Related Measures of Risk}
\small This figure illustrates the connections between return, quasi-logconvex and related classes of risk measures.
(We restrict to $L^{\infty}_{++}$.)
\label{fig:II}
\end{figure}

\subsubsection{On the intersection between quasi-convex risk measures and return risk measures}

In this subsection we show that quasi-convex risk measures can be positively homogeneous and yet not coherent.
This implies that the intersection between quasi-convex risk measures and return risk measures is not just given by the class of coherent risk measures and also gives rise to the question about the precise intersection between quasi-convex risk measures and quasi-logconvex risk measures. We investigate the relation between quasi-convex and return (or quasi-logconvex) risk measures below.

From Prop.~4.1 of Cerreia-Vioglio et al. \cite{CMMM11} we know that for a quasi-convex (and upper semicontinuous) risk measure $\rho$, positive homogeneity of $\rho$ is equivalent to positive homogeneity in $t$ of the functional $\mathfrak{R}(t,Q)$ in the dual representation of $\rho$.

Thanks to the above characterization, we are then able to prove the following result.

\begin{proposition}
If $\rho$ is a quasi-convex and monotone risk measure satisfying continuity from below and PH, then
\begin{equation} \label{eq: intersection-ph}
\rho(X)= \sup_{Q \in \mathcal{Q}} \{D_Q^+ \left(\mathbb{E}_{Q} [X]\right)^+ - D_Q^- \left(\mathbb{E}_{Q} [X]\right)^-  \}, \quad \mbox{ for any } X \in L^{\infty},
\end{equation}
for some $D_Q^+, D_Q^- \geq 0$ or, equivalently, the functional $\mathfrak{R}$ in the dual representation of $\rho$ has the following form:
\begin{equation} \label{eq: R-intersection-ph}
\mathfrak{R}(t,Q)= D_Q^+ \, t^+ - \, D_Q^- t^-, \quad \mbox{ for any } t \in \mathbb{R}.
\end{equation}
\end{proposition}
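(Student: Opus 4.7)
The plan is to combine the dual representation of quasi-convex monotone risk measures on $L^\infty$ with the positive homogeneity (PH) transfer principle that is recalled just before the statement. Since $\rho$ is monotone, quasi-convex and continuous from below, the Frittelli--Maggis / Cerreia-Vioglio et al.\ representation (which is also used in the proof of Theorem~\ref{thm: dual repres rho-tilde}) gives
\begin{equation*}
\rho(X) = \sup_{Q \in \mathcal{Q}} \mathfrak{R}(\mathbb{E}_Q[X]; Q), \qquad X \in L^{\infty},
\end{equation*}
where for each $Q \in \mathcal{Q}$ the function $t \mapsto \mathfrak{R}(t; Q)$ is monotone increasing and continuous from below on $\mathbb{R}$. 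Invoking Proposition~4.1 of \cite{CMMM11} as cited, positive homogeneity of $\rho$ is equivalent to positive homogeneity of the dual kernel in its first argument,
\begin{equation*}
\mathfrak{R}(\lambda t; Q) = \lambda \, \mathfrak{R}(t; Q), \qquad \forall \lambda \geq 0,\ t \in \mathbb{R},\ Q \in \mathcal{Q}.
\end{equation*}

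Next I would show that any such positively homogeneous function on $\mathbb{R}$ has the asserted two-slope form. Setting
\begin{equation*}
D_Q^+ \triangleq \mathfrak{R}(1; Q), \qquad D_Q^- \triangleq -\mathfrak{R}(-1; Q),
\end{equation*}
taking $\lambda = 0$ in the PH identity yields $\mathfrak{R}(0; Q) = 0$. For $t > 0$, applying PH with scaling factor $t$ at the point $1$ gives $\mathfrak{R}(t; Q) = t\,\mathfrak{R}(1; Q) = D_Q^+ \, t^+$. For $t < 0$, applying PH with scaling factor $|t|$ at the point $-1$ gives $\mathfrak{R}(t; Q) = |t|\, \mathfrak{R}(-1; Q) = -D_Q^- \, t^-$. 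Combining the three sign cases yields
\begin{equation*}
\mathfrak{R}(t; Q) = D_Q^+ \, t^+ - D_Q^- \, t^-, \qquad t \in \mathbb{R},
\end{equation*}
which is \eqref{eq: R-intersection-ph}. Substitution into the dual representation then delivers \eqref{eq: intersection-ph}.

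Finally, monotonicity of $\mathfrak{R}(\cdot; Q)$ forces the sign constraints on the slopes: since $\mathfrak{R}$ is increasing and $\mathfrak{R}(0; Q) = 0$, one has $D_Q^+ = \mathfrak{R}(1; Q) \geq 0$ and $-D_Q^- = \mathfrak{R}(-1; Q) \leq 0$, hence $D_Q^\pm \geq 0$ (with the convention $0\cdot(\pm\infty)=0$ to cover the case of infinite slopes). The main obstacle is conceptual rather than computational: one must justify that the PH-transfer of Proposition~4.1 of \cite{CMMM11} applies under continuity from below (their statement is phrased under upper semicontinuity), after which the argument reduces to an elementary decomposition in the sign of $t$.
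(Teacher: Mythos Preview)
Your proposal is correct and follows essentially the same approach as the paper: invoke the quasi-convex dual representation, transfer positive homogeneity to $\mathfrak{R}(\cdot;Q)$ via Prop.~4.1 of \cite{CMMM11}, and deduce the two-slope form with nonnegative coefficients from monotonicity. Your version is in fact slightly more explicit (you spell out the derivation of $D_Q^{\pm}$ and the sign constraints), and you correctly flag the continuity-from-below versus upper-semicontinuity caveat, which the paper handles only by the parenthetical ``proceeding similarly as in''.
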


\begin{proof}
By Prop.~2.13 of Frittelli and Maggis \cite{FM11},  Cerreia-Vioglio et al. \cite{CMMM11}, Theorem 3.1 and Lemma 3.2,
and Drapeau and Kupper \cite{DK13}, $\rho$ has the following dual representation:
\begin{equation} \label{eq: dual repres rho-1}
\rho(X)=\sup_{Q\in\mathcal{Q}} \mathfrak{R}\left(\mathbb{E}_{Q}\left[ X\right]; Q\right),
\end{equation}
where $\mathfrak{R}:\mathbb{R}\times\mathcal{Q}\rightarrow[-\infty,\infty]$ is
monotone increasing and continuous from below in $t \in \mathbb{R}$. Moreover, by (proceeding similarly as in) Prop. 4.1 of Cerreia-Vioglio et al. \cite{CMMM11}, PH of $\rho$ is equivalent to positive homogeneity of $\mathfrak{R}(t,Q)$ in $t$ for any $Q$. It follows therefore that for any fixed $Q \in \mathcal{Q}$ there exist some $D_Q^+, D_Q^- \geq 0$ such that
\begin{equation*}
\mathfrak{R}(t,Q)= D_Q^+ \, t^+ - D_Q^- \, t^-, \quad \mbox{ for any } t \in \mathbb{R}.
\end{equation*}
\end{proof}

The previous result implies that the intersection between quasi-convex risk measures and return risk measures (with $\rho(1)=1$) satisfying continuity from below is not just given by the class of coherent risk measures, but by risk measures as in \eqref{eq: intersection-ph} (with $\sup_{Q \in \mathcal{Q}} D_Q^+=1$). The class of coherent risk measures corresponds to the particular case where $D_Q^+=D_Q^-=1$ for any $Q \in \mathcal{Q}$.

Note that if $D_Q^+=D_Q^-=D_Q$ for any $Q \in \mathcal{Q}$ but not identically equal to $1$, then
\begin{equation*}
\rho(X)= \sup_{Q \in \mathcal{Q}} \{D_Q \mathbb{E}_{Q} [X]\}=\sup_{Q \in \mathcal{Q}} \mathbb{E}_{\mu_Q} [X],
\end{equation*}
where $\mu_Q= D_Q \cdot Q$ is a measure with $\mu_Q (\Omega)=D_Q$, that is not necessarily normalized. The case where $D_Q \leq 1$ for any $Q \in \mathcal{Q}$ corresponds to sublinear risk measures (see Frittelli \cite{F00} and Frittelli and Rosazza Gianin \cite{FR02}).
\bigskip

Below we provide an example of a quasi-logconvex measure that is not quasi-convex.

\begin{example}[QLC but not QC] \label{ex: qlc not qc}
Consider $\rho \colon L^{\infty}_{++} \to [0, \infty]$ defined as
\begin{equation} \label{eq: example - qlc not qc}
\rho(X)=\sup_{Q\in\mathcal{Q}} \exp\left(\mathbb{E}_{Q}\left[ \log X\right]\right)
\end{equation}
for a given set of probability measures $\mathcal{Q}$.
In other words, $\rho$ is a logcoherent risk measure.

It follows easily that $\rho$ is monotone and $\rho(1)=1$.
Furthermore, it is also quasi-logconvex.
Indeed, for any $\alpha \in [0,1]$ and $X,Y \in L^{\infty}_{++}$,
\begin{eqnarray*}
\rho(X^{\alpha} Y^{1-\alpha})&=&\sup_{Q\in\mathcal{Q}} \exp\left(\mathbb{E}_{Q}\left[ \log (X^{\alpha} Y^{1-\alpha})\right]\right) \\
&=&\sup_{Q\in\mathcal{Q}} \exp\left(\alpha\mathbb{E}_{Q}\left[ \log X \right]+ (1-\alpha)\mathbb{E}_{Q}\left[ \log Y \right]\right) \\
&\leq &\sup_{Q\in\mathcal{Q}} \exp\left(\max\{\mathbb{E}_{Q}\left[ \log X\right];\mathbb{E}_{Q}\left[ \log Y\right]\}  \right) \\
&\leq &\sup_{Q\in\mathcal{Q}} \max\{\exp\left(\mathbb{E}_{Q}\left[ \log X\right]\right);\exp\left(\mathbb{E}_{Q}\left[ \log Y\right]\right)\} \\
&= &\max \{\rho(X); \rho(Y) \}.
\end{eqnarray*}

Although $\rho$ is quasi-logconvex, it is not quasi-convex in general. Take, for instance, $\Omega=\{\omega_1, \omega_2\}$, $\mathcal{Q}=\{ \bar{Q} \}$ with $\bar{Q}(\omega_1)=\bar{Q}(\omega_2)= \frac 12$,
$X= \left\{
\begin{array}{rl}
1;& \omega_1 \\
e^3;& \omega_2
\end{array}
\right.$,
$Y=e^2$ and $\alpha= \frac 12$.
It then follows that $\rho(X)= \exp\left(\mathbb{E}_{\bar{Q}}\left[ \log X\right]\right)$ and that
\begin{eqnarray*}
\rho\left(\frac{X+ Y}{2}\right)&=& \exp\left(\mathbb{E}_{\bar{Q}}\left[ \log \left(\frac{X+ Y}{2}\right)\right]\right) \\
&=& \exp\left( \frac 12 \log \left(\frac{1+ e^2}{2}\right) + \frac 12 \log \left(\frac{e^3+ e^2}{2}\right) \right) \\
&>& \max \{\rho(X); \rho(Y) \} =e^2.
\end{eqnarray*}
Hence, $\rho$ fails to satisfy quasi-convexity.
%
\end{example}

\begin{example}[QLC but not QC and not PH]
Consider $\rho \colon L^{\infty}_{++} \to [0, \infty]$ defined as
\begin{equation} \label{eq: example - qlc not qc not ph}
\rho(X)=\sup_{Q\in\mathcal{Q}} \exp\left(\ell^{-1} \left(\mathbb{E}_{Q}\left[\ell( \log X)\right]\right)\right),
\end{equation}
for a given set of probability measures $\mathcal{Q}$ and for a strictly increasing and convex function $\ell: \mathbb{R} \to \mathbb{R}$ with $\ell(0)=0$. Note that Example \ref{ex: qlc not qc} is a particular case of the present one and that $\ell^{-1} \left(\mathbb{E}_{Q}\left[\ell( X)\right]\right)$ corresponds to the mean value premium principle that, under the assumptions on $\ell$, is known to be quasi-convex and monotone (see, e.g., \cite{CMMM11} and \cite{FM11b}).

It follows easily that $\rho$ is monotone and $\rho(1)=1$. Furthermore, it is also quasi-logconvex. Indeed, for any $\alpha \in [0,1]$ and $X,Y \in L^{\infty}_{++}$,
\begin{eqnarray*}
\rho(X^{\alpha} Y^{1-\alpha})&=&\sup_{Q\in\mathcal{Q}} \exp\left(\ell^{-1} \left(\mathbb{E}_{Q}\left[\ell( \log (X^{\alpha} Y^{1-\alpha}))\right]\right)\right) \\
&=&\sup_{Q\in\mathcal{Q}} \exp\left(\ell^{-1} \left(\mathbb{E}_{Q}\left[\ell( \alpha \log X + (1-\alpha) \log Y)\right]\right)\right) \\
&\leq &\sup_{Q\in\mathcal{Q}} \exp\left(\max\{\ell^{-1} \left(\mathbb{E}_{Q}\left[\ell( \log X)\right]\right);\ell^{-1} \left(\mathbb{E}_{Q}\left[\ell( \log Y)\right]\right)\}  \right) \\
&= &\sup_{Q\in\mathcal{Q}} \max\{\exp\left(\ell^{-1} \left(\mathbb{E}_{Q}\left[\ell( \log X)\right]\right)\right);\exp\left(\ell^{-1} \left(\mathbb{E}_{Q}\left[\ell( \log Y)\right]\right)\right)\} \\
&= &\max \{\rho(X); \rho(Y) \},
\end{eqnarray*}
where the inequality above is due to quasi-convexity of the mean value premium principle $\ell^{-1} \left(\mathbb{E}_{Q}\left[\ell( X)\right]\right)$.

Although $\rho$ is quasi-logconvex, it is not quasi-convex in general. Take, for instance, $\Omega=\{\omega_1, \omega_2\}$, $\mathcal{Q}=\{ \bar{Q} \}$ with $\bar{Q}(\omega_1)=\bar{Q}(\omega_2)= \frac 12$,
$X= \left\{
\begin{array}{rl}
1;& \omega_1 \\
e^3;& \omega_2
\end{array}
\right.$,
$Y=e^2$, $\alpha= \frac 12$, and $\ell(x)= \left\{
\begin{array}{rl}
x;& x<0 \\
x^2 +x;& x \geq 0
\end{array}
\right.$.
It then follows that $\rho(X)= \exp\left(\ell^{-1} \left(\mathbb{E}_{\bar{Q}}\left[\ell( \log X)\right]\right)\right)$ and that
\begin{eqnarray*}
\rho\left(\frac{X+ Y}{2}\right)&=& \exp\left(\ell^{-1} \left(\mathbb{E}_{\bar{Q}}\left[ \ell\left(\log \left(\frac{X+ Y}{2}\right)\right)\right] \right)\right) \\
&=& \exp\left( \ell^{-1} \left(\frac 12 \, \ell \left(\log \left(\frac{1+ e^2}{2}\right)\right) + \frac 12 \, \ell \left(\log \left(\frac{e^3+ e^2}{2}\right) \right)\right) \right) \\
&>& \max \{\rho(X); \rho(Y) \} =e^2.
\end{eqnarray*}
Hence, $\rho$ fails to satisfy quasi-convexity.

Furthermore, $\rho$ also fails to be positively homogeneous. 
By taking the same $\Omega$, $\mathcal{Q}$, $X$ and $\ell$ as before and $\lambda=\frac{1}{e}$, indeed,
\begin{eqnarray*}
\rho(\lambda X)&=& \exp\left(\ell^{-1} \left(\mathbb{E}_{\bar{Q}}\left[\ell( \log (\lambda X)\right]\right)\right) \\
&=& \exp\left(\ell^{-1} \left(\frac12 \, \ell(-1)+\frac12 \, \ell(2)\right)\right)\\
&=& \exp\left( \ell^{-1} \left(\frac52 \right)\right) \\
&=& \exp\left( \frac{\sqrt{11}-1}{2}\right) > \lambda \rho(X)=e.
\end{eqnarray*}

Note that if the mean value premium principle is cash-superadditive, then $\rho$ in \eqref{eq: example - qlc not qc not ph} satisfies star-shapedness. In this case, indeed, for any $\lambda \geq 1$ and $X \in L^{\infty}_{++}$ it would hold that
\begin{eqnarray*}
\rho(\lambda X)&=& \sup_{Q \in\mathcal{Q}} \exp\left(\ell^{-1} \left(\mathbb{E}_{Q}\left[\ell( \log  X+ \log \lambda \right]\right)\right) \\
&\geq & \sup_{Q \in\mathcal{Q}} \exp\left(\ell^{-1} \left(\mathbb{E}_{Q}\left[\ell( \log  X \right] + \log \lambda\right)\right) \\
&=& \lambda \rho(X).
\end{eqnarray*}
Positive homogeneity of the risk measure of Example \ref{ex: qlc not qc} then follows by cash-additivity of the corresponding mean value premium principle.
\end{example}

Viceversa, any quasi-convex risk measure (restricted to $L^{\infty}_{++}$) is also quasi-logconvex. More precisely, the following result replies to the question about the intersection between quasi-convex risk measures and quasi-logconvex risk measures.

\begin{proposition} \label{prop: intersect qco-qlc}
Let $\rho:L^{\infty} \to [-\infty; \infty]$ be a monotone and continuous from below risk measure with $\rho(0)=0$.

If $\rho$ is quasi-convex, then its restriction to $L^{\infty}_{++}$ is also quasi-logconvex.
\end{proposition}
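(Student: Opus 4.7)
The plan is to reduce quasi-logconvexity to quasi-convexity via the weighted AM--GM inequality. Fix $X,Y\in L^{\infty}_{++}$ and $\alpha\in(0,1)$. Pointwise, for strictly positive reals $x,y$, the concavity of $\log$ gives the classical weighted AM--GM inequality $x^{\alpha}y^{1-\alpha}\leq \alpha x+(1-\alpha)y$. Applying this $\omega$ by $\omega$ yields
\[
X^{\alpha}Y^{1-\alpha}\leq \alpha X+(1-\alpha)Y \quad P\text{-a.s.,}
\]
with both sides lying in $L^{\infty}_{++}\subset L^{\infty}$.

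Next I would invoke the two structural assumptions on $\rho$ in succession. Monotonicity of $\rho$ applied to the above a.s.\ inequality gives
\[
\rho\bigl(X^{\alpha}Y^{1-\alpha}\bigr)\leq \rho\bigl(\alpha X+(1-\alpha)Y\bigr).
\]
Quasi-convexity of $\rho$ on $L^{\infty}$ then yields
\[
\rho\bigl(\alpha X+(1-\alpha)Y\bigr)\leq \max\{\rho(X),\rho(Y)\}.
\]
Chaining these two inequalities produces $\rho(X^{\alpha}Y^{1-\alpha})\leq \max\{\rho(X),\rho(Y)\}$, which is exactly the quasi-logconvexity condition for the restriction of $\rho$ to $L^{\infty}_{++}$.

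Since the argument hinges only on the pointwise comparison between geometric and arithmetic means and on the two given properties, there is no genuine obstacle. In fact, the auxiliary hypotheses $\rho(0)=0$ and continuity from below play no role in this particular implication; they are evidently carried along only because they are part of the standard framework in which the preceding representation results are stated.
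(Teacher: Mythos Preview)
Your proof is correct and takes a genuinely more elementary route than the paper. The paper invokes the dual representation $\rho(X)=\sup_{Q\in\mathcal{Q}}\mathfrak{R}(\mathbb{E}_{Q}[X];Q)$ (which is where continuity from below is used), then applies H\"older's inequality at the level of expectations, $\mathbb{E}_{Q}[X^{\alpha}Y^{1-\alpha}]\leq (\mathbb{E}_{Q}[X])^{\alpha}(\mathbb{E}_{Q}[Y])^{1-\alpha}$, together with monotonicity of $\mathfrak{R}(\cdot;Q)$, to reach the same conclusion. You instead apply the weighted AM--GM inequality pointwise and feed the resulting a.s.\ comparison directly into monotonicity and quasi-convexity of $\rho$, bypassing the dual machinery entirely. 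Your observation that $\rho(0)=0$ and continuity from below are not actually needed is accurate: the paper's argument uses them only to access the representation theorem, whereas your argument shows the implication holds for any monotone quasi-convex functional on $L^{\infty}$. The trade-off is that the paper's route sits naturally within the dual framework developed in the surrounding sections, while yours is self-contained and strictly sharper in terms of hypotheses.
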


\begin{proof}
By Prop.~2.13 of Frittelli and Maggis \cite{FM11}, Cerreia-Vioglio et al. \cite{CMMM11}, Theorem~3.1 and Lemma~3.2,
and Drapeau and Kupper \cite{DK13}, $\rho$ has the following dual representation:
\begin{equation*}
\rho(X)=\sup_{Q\in\mathcal{Q}} \mathfrak{R}\left(\mathbb{E}_{Q}\left[ X\right]; Q\right),
\end{equation*}
where $\mathfrak{R}:\mathbb{R}\times\mathcal{Q}\rightarrow[-\infty,\infty]$ is
monotone increasing and continuous from below in $t \in \Bbb R$.

For any $X,Y \in L^{\infty}_{++}$ and $\alpha \in (0,1)$, it follows that
\begin{eqnarray}
\rho \left(X^{\alpha}Y^{1-\alpha}\right) &=& \sup_{Q\in\mathcal{Q}} \mathfrak{R}\left(\mathbb{E}_{Q}\left[ X^{\alpha}Y^{1-\alpha}\right]; Q\right) \notag \\
& \leq & \sup_{Q\in\mathcal{Q}} \mathfrak{R}\left( (\mathbb{E}_{Q}\left[ X\right])^{\alpha} (\mathbb{E}_{Q}\left[ Y\right])^{1-\alpha}; Q\right) \label{eq: holder}\\
& \leq & \sup_{Q\in\mathcal{Q}} \mathfrak{R}\left( \mathbb{E}_{Q}\left[ X\right] \vee \mathbb{E}_{Q}\left[ Y\right]; Q\right) \label{eq: ineq R000}\\
&=& \sup_{Q\in\mathcal{Q}} \left\{ \mathfrak{R}\left( \mathbb{E}_{Q}\left[ X\right] ; Q\right) \vee \mathfrak{R}\left( \mathbb{E}_{Q}\left[ Y\right]; Q\right) \right\} \label{eq: eq R000}\\
&=& \rho(X) \vee \rho(Y), \notag
\end{eqnarray}
where \eqref{eq: holder} is due to the H\"{o}lder inequality while \eqref{eq: ineq R000} and \eqref{eq: eq R000} follow by increasing monotonicity of $\mathfrak{R}$.
\end{proof}
As a consequence of the previous result, the families of coherent and of convex risk measures that are continuous from below and satisfy $\rho(0)=0$ are quasi-logconvex once restricted to $L^{\infty}_{++}$.

\setcounter{equation}{0}

\section{Acceptance Sets}\label{sec:acc}


Similarly to the case of convex monetary risk measures,
Drapeau and Kupper \cite{DK13}, Theorem~1, proved a one-to-one correspondence between quasi-convex risk measures and families of acceptance sets
referred to as risk acceptance families,
that is, they characterized quasi-convex risk measures in terms of properties of families of acceptance sets. 
Note that in this paper we use a convention on signs and on (increasing) monotonicity of $\rho$ that is different from \cite{DK13}. 
This different convention will be reflected in the definition of (monotonicity for) acceptance sets as well as in the representation of $\rho$ in terms of acceptance sets.\smallskip

We recall from \cite{DK13} that a \textit{risk acceptance family} is a family $(\mathcal{A}^a)_{a \in \Bbb R}$, with $\mathcal{A}^a \subseteq L^{\infty}$ for any $a \in \Bbb R$, that is increasing in $a$ and satisfies
\begin{itemize}
\item[(i)] $\mathcal{A}^a$ is convex for any $a \in \Bbb R$;
\item[(ii)] $\mathcal{A}^a$ is monotone for any $a \in \Bbb R$, i.e., if $Y \geq X$ and $X \in \mathcal{A}^a$, then also $Y \in \mathcal{A}^a$;
\item[(iii)] right-continuity: $\mathcal{A}^a = \bigcap_{\bar{a} >a} \mathcal{A}^{\bar{a}}$ for any $a \in \Bbb R$.
\end{itemize}

As shown in Drapeau and Kupper \cite{DK13}, Theorem~1 and Propositions~2 and~3, given a risk measure $\rho$ and the family $(\mathcal{A}^a)_{a \in \Bbb R}$ defined by $\mathcal{A}^a=\mathcal{A}_{\rho}^a \triangleq \{Y \in L^{\infty}: \rho(-Y) \leq a\}$, it holds that \medskip

\noindent quasi-convex and monotone $\rho$ $\longleftrightarrow$ risk acceptance family $(\mathcal{A}^a)_{a \in \Bbb R}$; \medskip

\noindent ... plus cash-subadditive $\longleftrightarrow$ ... plus $\mathcal{A}^a \subseteq \mathcal{A}^{a+k} + k$ for any $a \in \Bbb R$ and $k \geq 0$;
\medskip

\noindent ... plus cash-additive (hence convex) $\longleftrightarrow$ ... plus $\mathcal{A}^0 = \mathcal{A}^{h} + h$ for any $h \in \Bbb R$.
\medskip

Furthermore, $\rho$ can be represented in terms of the family $(\mathcal{A}^a)_{a \in \Bbb R}$ as
\begin{equation*}
\rho(X)=\inf\{a \in \Bbb R: -X \in \mathcal{A}^a \} \quad \mbox{ for any } X \in L^{\infty}.
\end{equation*}
\smallskip

By the one-to-one correspondence \eqref{eq:trho-3}--\eqref{eq:trho-4},
the result above can be extended to quasi-logconvex risk measures.

For any $a \in \Bbb R$ and $b \in \Bbb R_+$, denote by
\begin{equation} \label{eq: accept sets}
\mathcal{A}_{\rho} ^a \triangleq \{Y \in L^{\infty}: \rho(-Y) \leq a\}, \quad \mathcal{B}_{\trho} ^{b} \triangleq \left\{X \in L^{\infty}_{++}: \trho\left(\frac{1}{X} \right) \leq b \right\},
\end{equation}
the acceptance sets of $\rho$ at the level $a$ and of $\trho$ at the level $b$, respectively.

It is easy to verify that
\begin{equation} \label{eq: relation accept sets}
\mathcal{A}_{\rho}^a= \left\{ \log Z: \trho\left(\frac{1}{Z} \right) \leq e^a \right\},
\end{equation}
hence
\begin{equation} \label{eq: relation accept sets-2}
Y \in \mathcal{A}_{\rho}^a \Leftrightarrow  e^Y \in \mathcal{B}_{\trho} ^{e^a}.
\end{equation}

Note that the multiplicative nature of the family $(\mathcal{B}^b)_{b \in \Bbb R_+}$ is not surprising due to the interpretation of geometric risk measures.
\medskip

We then define a \textit{log-risk acceptance family} to be a family $(\mathcal{B}^b)_{b \in \Bbb R_+}$, with $\mathcal{B}^b \subseteq L^{\infty}_{++}$ for any $b \in \Bbb R_+$, that is increasing in $b$ and satisfies
\begin{itemize}
\item[(i)] $(\mathcal{B}^b)_{b \in \Bbb R_+}$ is log-convex for any $b \in \Bbb R_+$,
i.e., if $\frac{1}{X}, \frac{1}{Y} \in \mathcal{B}^b$, then also $\frac{1}{X^{\alpha} Y^{1 - \alpha}} \in \mathcal{B}^b$ for any $\alpha \in [0,1]$;
\item[(ii)] $\mathcal{B}^b$ is monotone for any $b \in \Bbb R_+$;
\item[(iii)] right-continuity: $\mathcal{B}^b = \bigcap_{\bar{b} >b} \mathcal{B}^{\bar{b}}$ for any $b \in \Bbb R_+$.
\end{itemize}

Furthermore, the log-risk acceptance family is said to be:
\begin{itemize}
\item[(iv)] \noindent \textit{B-star-shaped} if: $X \in \mathcal{B}^{b}$ $\Rightarrow$ $\lambda X \in \mathcal{B}^{b/ \lambda}$ for any $\lambda \geq 1$;
\item[(v)] \noindent \textit{B-positively homogeneous} if: $\gamma X \in \mathcal{B}^{b/ \gamma}$ (with $\gamma >0$) $\Leftrightarrow$ $X \in \mathcal{B}^{b}$.
\end{itemize}

Note that B-positive homogeneity implies B-star-shapedness.
\medskip

The following theorems provide a one-to-one correspondence between quasi-logconvex risk measures and log-risk acceptance families.

\begin{theorem} \label{thm: risk acceptance - trho} \textbf{[Monotone and quasi-logconvex $\trho$
$\Leftrightarrow$
Log-risk acceptance families.]}
\begin{itemize}
\item[(a)] Given a quasi-logconvex and monotone risk measure $\trho$,
$(\mathcal{B}_{\trho} ^{b})_{b \in \Bbb R_+}$ defined in \eqref{eq: accept sets} is a log-risk acceptance family.

\item[(b)] Viceversa: given a log-risk acceptance family $(\mathcal{B} ^{b})_{b \in \Bbb R_+}$,
\begin{equation} \label{eq: rho-tilde from acceptance}
\trho_{\mathcal{B}}(X)= \inf \left\{b \in \Bbb R_+: \frac{1}{X} \in \mathcal{B} ^{b} \right\}
\end{equation}
is a quasi-logconvex and monotone risk measure.
\end{itemize}
\end{theorem}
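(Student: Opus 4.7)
My plan is to transport the Drapeau--Kupper correspondence (quoted just above the statement) through the bijection $Y \leftrightarrow e^Y$ between $L^{\infty}$ and $L^{\infty}_{++}$, using Lemma \ref{lem: rho_trho-2} and the identity \eqref{eq: relation accept sets-2}. Throughout, I will keep in mind the change of level parameter $b = e^a$, which is a strictly increasing bijection $\R \to \R_{++}$ and hence respects all the order-theoretic conditions in the definition of a (log-)risk acceptance family.

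For part (a), starting with a monotone quasi-logconvex $\trho$, set $\rho(X) \triangleq \log(\trho(e^X))$. By Lemma \ref{lem: rho_trho-2}, $\rho$ is monotone and quasi-convex, so by Drapeau and Kupper \cite{DK13} the family $\mathcal{A}_{\rho}^a$ is a risk acceptance family in the sense recalled above. I then translate each of the three axioms via \eqref{eq: relation accept sets-2}: convexity of $\mathcal{A}_{\rho}^a$ becomes log-convexity of $\mathcal{B}_{\trho}^{e^a}$ (because $\alpha Y + (1-\alpha)Y'$ corresponds to $e^{Y\alpha}(e^{Y'})^{1-\alpha}$); monotonicity transfers because $e^{\cdot}$ is strictly increasing; and right-continuity in $a$ transfers to right-continuity in $b$ via $b = e^a$. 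Alternatively, each property can be checked directly from the definition \eqref{eq: accept sets} and quasi-logconvexity of $\trho$ in one line apiece, which is perhaps cleaner.

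For part (b), given a log-risk acceptance family $(\mathcal{B}^{b})_{b \in \R_+}$, define $\mathcal{A}^a \triangleq \{Y \in L^{\infty} : e^Y \in \mathcal{B}^{e^a}\}$ and check it is a risk acceptance family by running the argument of part (a) in reverse. By \cite{DK13}, $\rho_{\mathcal{A}}(X) \triangleq \inf\{a \in \R : -X \in \mathcal{A}^a\}$ is then a monotone quasi-convex risk measure. Setting $\trho(X) \triangleq \exp(\rho_{\mathcal{A}}(\log X))$ gives a monotone quasi-logconvex risk measure by Lemma \ref{lem: rho_trho-2}. It remains to identify $\trho$ with the formula \eqref{eq: rho-tilde from acceptance}, which follows from the chain
\[
\trho(X) = \exp\bigl(\inf\{a : -\log X \in \mathcal{A}^a\}\bigr) = \exp\bigl(\inf\{a : 1/X \in \mathcal{B}^{e^a}\}\bigr) = \inf\{b \in \R_+ : 1/X \in \mathcal{B}^b\},
\]
where the last equality uses the substitution $b = e^a$ and monotonicity of $\exp$ to pull it inside the infimum.

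The only delicate step is bookkeeping around the two sign/reciprocal conventions ($-Y$ in the arithmetic setting versus $1/X$ in the geometric one) and making sure that right-continuity in $a \in \R$ is genuinely equivalent to right-continuity in $b \in \R_+$ under $b = e^a$ (which it is, since the map is an order-isomorphism). Beyond that, everything is routine transport of structure across the bijection furnished by the exponential.
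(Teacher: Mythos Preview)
Your proposal is correct. Your primary route---reducing to the Drapeau--Kupper correspondence via the exponential bijection and Lemma~\ref{lem: rho_trho-2}---is genuinely different from the paper's proof, which works directly: for (a) it simply observes that monotonicity in $b$, set-monotonicity, and right-continuity are immediate from the definition of $\mathcal{B}_{\trho}^b$, while log-convexity follows in one line from quasi-logconvexity of $\trho$; for (b) it checks monotonicity of $\trho_{\mathcal{B}}$ from set-monotonicity, and proves quasi-logconvexity by setting $\bar b = \max\{\trho_{\mathcal{B}}(X),\trho_{\mathcal{B}}(Y)\}$, noting $\tfrac{1}{X},\tfrac{1}{Y}\in\mathcal{B}^{\bar b}$ (this is where right-continuity is used), and invoking log-convexity of $\mathcal{B}^{\bar b}$. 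In fact, your parenthetical ``alternatively, each property can be checked directly \ldots\ in one line apiece'' \emph{is} the paper's proof. Your transport argument has the merit of making explicit that this theorem is nothing but the image of the Drapeau--Kupper result under the $\exp/\log$ dictionary already set up in the paper; the paper's direct approach is shorter and avoids the bookkeeping around the reciprocal/sign conventions and the $b=e^a$ reparametrization that you (correctly) flag as the only delicate points.
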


\begin{proof}

(a) Monotonicity in $b$ of $\mathcal{B}^b$ as well as monotonicity (ii) and right-continuity (iii) follow immediately by the properties of $\trho$ and \eqref{eq: accept sets}. Log-convexity (i) follows by quasi-logconvexity of $\trho$.

(b) Monotonicity of $\trho_{\mathcal{B}}$ follows immediately by monotonicity of any set $\mathcal{B}^b$.
Indeed, if $X \geq Y$, then
$$
\trho_{\mathcal{B}}(X)= \inf \left\{b \in \Bbb R_+: \frac{1}{X} \in \mathcal{B} ^{b} \right\} \geq  \inf \left\{b \in \Bbb R_+: \frac{1}{Y} \in \mathcal{B} ^{b} \right\}= \trho_{\mathcal{B}}(Y).
$$
It remains to prove quasi-logconvexity of $\trho$.
Given $X,Y \in L^{\infty}_{++}$, set $\bar{b}=\max \{\trho_{\mathcal{B}}(X); \trho_{\mathcal{B}}(Y)\}$. Hence, $\frac{1}{X},\frac{1}{Y} \in \mathcal{B} ^{\bar{b}}$.
By log-convexity of any $\mathcal{B}^b$ it follows that
$$
\trho_{\mathcal{B}}(X^{\alpha} Y^{1 - \alpha})= \inf \left\{b \in \Bbb R_+: \frac{1}{X^{\alpha} Y^{1 - \alpha}} \in \mathcal{B} ^{b} \right\} \leq \bar{b}=\max \{\trho_{\mathcal{B}}(X); \trho_{\mathcal{B}}(Y)\},
$$
that is, quasi-logconvexity of $\trho$.
\end{proof}

\begin{theorem} \label{thm: risk acceptance - trho  - star-shaped} \textbf{[Monotone, quasi-logconvex and star-shaped $\trho$
$\Leftrightarrow$ B-star-shaped
log-risk acceptance families.]}
\begin{itemize}
\item[(a)] Given a quasi-logconvex, monotone and star-shaped risk measure $\trho$,
$(\mathcal{B}_{\trho} ^{b})_{b \in \Bbb R_+}$ defined in \eqref{eq: accept sets} is a log-risk acceptance family satisfying B-star-shapedness.

\item[(b)] Viceversa: given a log-risk acceptance family $(\mathcal{B} ^{b})_{b \in \Bbb R_+}$ satisfying B-star-shapedness,
$\trho_{\mathcal{B}}$ defined in \eqref{eq: rho-tilde from acceptance}
is a quasi-logconvex, monotone and star-shaped risk measure.
\end{itemize}
\end{theorem}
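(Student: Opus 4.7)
The plan is to build directly on Theorem \ref{thm: risk acceptance - trho}, which already establishes the equivalence between monotone plus quasi-logconvex $\trho$ and the log-risk acceptance family structure. The only new content in both (a) and (b) is to isolate the role of star-shapedness at the level of the acceptance sets and match it with B-star-shapedness of the family $(\mathcal{B}^b)_{b \in \mathbb{R}_+}$.

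For part (a), I would start from $X \in \mathcal{B}_{\trho}^{b}$, which by definition means $\trho(1/X) \leq b$, and aim at showing $\lambda X \in \mathcal{B}_{\trho}^{b/\lambda}$ for $\lambda \geq 1$. Writing $\trho(1/(\lambda X)) = \trho((1/\lambda)(1/X))$ and invoking the equivalent form of star-shapedness noted in Section \ref{sec: qlc and star-shaped}, namely $\trho(\mu Y) \leq \mu \trho(Y)$ for $0 < \mu \leq 1$, with $\mu = 1/\lambda$ and $Y = 1/X$, I obtain
\[
\trho(1/(\lambda X)) \leq (1/\lambda)\, \trho(1/X) \leq b/\lambda,
\]
which is precisely the definition of $\lambda X \in \mathcal{B}_{\trho}^{b/\lambda}$, i.e., B-star-shapedness.

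For part (b), Theorem \ref{thm: risk acceptance - trho} already delivers a monotone and quasi-logconvex $\trho_{\mathcal{B}}$, so only star-shapedness needs verification. Fixing $\lambda \geq 1$ and $X \in L^{\infty}_{++}$, for every $b \in \mathbb{R}_+$ with $1/(\lambda X) \in \mathcal{B}^{b}$ I would set $Z = 1/(\lambda X)$ and apply B-star-shapedness with rescaling factor $\lambda \geq 1$ to conclude $\lambda Z = 1/X \in \mathcal{B}^{b/\lambda}$. By the definition \eqref{eq: rho-tilde from acceptance} this gives $\trho_{\mathcal{B}}(X) \leq b/\lambda$, equivalently $\lambda\, \trho_{\mathcal{B}}(X) \leq b$, and taking the infimum over admissible $b$ yields $\lambda\, \trho_{\mathcal{B}}(X) \leq \trho_{\mathcal{B}}(\lambda X)$.

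I do not expect a serious obstacle; the argument is bookkeeping on top of the preceding theorem. The only subtlety worth flagging is the \emph{direction convention}: star-shapedness is a superhomogeneity inequality for $\trho$, whereas the inversion $X \in \mathcal{B}^b \iff \trho(1/X) \leq b$ transforms the scaling $X \mapsto \lambda X$ (with $\lambda \geq 1$) into an action on the threshold as $b \mapsto b/\lambda$. Being consistent about $\lambda \geq 1$ versus $\mu = 1/\lambda \leq 1$ when passing between $\trho$ and its acceptance family is the only place where a sign/direction slip could occur.
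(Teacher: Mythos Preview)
Your proposal is correct and follows essentially the same approach as the paper: both parts reduce to Theorem~\ref{thm: risk acceptance - trho} plus the star-shapedness/B-star-shapedness correspondence, and in both (a) and (b) you exploit the equivalent subhomogeneity form $\trho(\mu Y)\leq \mu\,\trho(Y)$ for $0<\mu\leq 1$ applied to $Y=1/X$, exactly as the paper does. The paper presents (b) as a set-inclusion chain on the infimum, while you argue elementwise and take the infimum at the end, but the underlying logic is identical.
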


\begin{proof} By Theorem \ref{thm: risk acceptance - trho},
it remains to prove that star-shapedness of $\trho$ corresponds to B-star-shapedness of $(\mathcal{B}_{\trho} ^{b})_{b \in \Bbb R_+}$. Recall that star-shapedness of $\trho$ can be equivalently formulated as $\trho(\beta X) \leq \beta\trho(X)$ for any $X \in L^{\infty}_{++}$ and $0<\beta\leq 1$.
\smallskip

(a) If $X \in \mathcal{B}^{b}$,
then $ \trho\left(\frac{1}{\lambda X}\right) \leq \frac{1}{\lambda} \trho(\frac{1}{X}) \leq \frac{b}{\lambda}$ for any $\lambda \geq 1$ (by star-shapedness of $\trho$).
Hence, $\lambda X \in \mathcal{B}^{b/ \lambda}$ for any $\lambda \geq 1$,
that is, B-star-shapedness of $(\mathcal{B}_{\trho} ^{b})_{b \in \Bbb R_+}$.

(b) By B-star-shapedness, it follows that for any $\lambda \geq 1$ and $X \in L^{\infty}_{++}$
\begin{eqnarray*}
\trho_{\mathcal{B}}(\lambda X) &=& \inf \left\{b \in \Bbb R_+: \frac{1}{\lambda X} \in \mathcal{B} ^{b} \right\} \\
& \geq & \inf \left\{b \in \Bbb R_+: \frac{1}{X} \in \mathcal{B} ^{b/\lambda} \right\} \\
& = & \lambda \inf \left\{ \bar{b} \in \Bbb R_+: \frac{1}{X} \in \mathcal{B} ^{\bar{b}} \right\} \\
&=& \lambda \trho_{\mathcal{B}}(X).
\end{eqnarray*}
\end{proof}

\begin{theorem} \label{thm: risk acceptance - trho  - PH} \textbf{[Monotone, quasi-logconvex and positively homogeneous $\trho$
$\Leftrightarrow$
B-positively homogeneous log-risk acceptance families.]}
\begin{itemize}
\item[(a)] Given a quasi-logconvex, monotone and positively homogeneous risk measure $\trho$,
$(\mathcal{B}_{\trho} ^{b})_{b \in \Bbb R_+}$ defined in \eqref{eq: accept sets} is a log-risk acceptance family satisfying B-positive homogeneity.

\item[(b)] Viceversa: given a log-risk acceptance family $(\mathcal{B} ^{b})_{b \in \Bbb R_+}$ satisfying B-positive homogeneity,
$\trho_{\mathcal{B}}$ defined in \eqref{eq: rho-tilde from acceptance}
is a quasi-logconvex, monotone and positively homogeneous risk measure.
\end{itemize}
\end{theorem}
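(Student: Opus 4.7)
The plan is to follow the template of the proof of Theorem \ref{thm: risk acceptance - trho  - star-shaped}. By Theorem \ref{thm: risk acceptance - trho}, the correspondence between monotone, quasi-logconvex risk measures $\trho$ and log-risk acceptance families $(\mathcal{B}^b)_{b \in \Bbb R_+}$ via \eqref{eq: accept sets} and \eqref{eq: rho-tilde from acceptance} is already in place. Thus it suffices to show that positive homogeneity of $\trho$ corresponds exactly to B-positive homogeneity of the associated family.

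For part (a), I would start from the identity $\trho(1/(\gamma X)) = \trho((1/\gamma)\cdot(1/X)) = (1/\gamma)\,\trho(1/X)$, valid for every $\gamma > 0$ by positive homogeneity of $\trho$. Applying this identity to the definition $\mathcal{B}_{\trho}^b = \{X \in L^\infty_{++} : \trho(1/X) \leq b\}$ immediately yields, for any $b \in \Bbb R_+$, the chain of equivalences $\gamma X \in \mathcal{B}_{\trho}^{b/\gamma} \iff \trho(1/(\gamma X)) \leq b/\gamma \iff \trho(1/X) \leq b \iff X \in \mathcal{B}_{\trho}^b$, which is precisely B-positive homogeneity.

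For part (b), given a B-positively homogeneous family $(\mathcal{B}^b)_{b \in \Bbb R_+}$, I would compute $\trho_{\mathcal{B}}(\lambda X)$ directly from \eqref{eq: rho-tilde from acceptance}. Applying B-positive homogeneity with $\gamma = 1/\lambda$ to $Y = 1/X$ gives $(1/\lambda)(1/X) \in \mathcal{B}^b \iff 1/X \in \mathcal{B}^{b/\lambda}$, and then a change of variable $c = b/\lambda$ in the defining infimum produces
\[
\trho_{\mathcal{B}}(\lambda X) = \inf\{b \in \Bbb R_+ : 1/(\lambda X) \in \mathcal{B}^b\} = \inf\{\lambda c : 1/X \in \mathcal{B}^c\} = \lambda\, \trho_{\mathcal{B}}(X)
\]
for every $\lambda > 0$ and $X \in L^\infty_{++}$.

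No genuine obstacle is anticipated; the entire argument is careful book-keeping of the reindexing $b \leftrightarrow b/\gamma$. Conceptually, the one subtle point to emphasize is the difference from the star-shaped case: the full equality in positive homogeneity supplies both the $\geq$ and the $\leq$ directions, pairing naturally with the ``iff'' in the definition of B-positive homogeneity, whereas B-star-shapedness is the one-sided analogue. As an alternative route, one could also derive the result by combining Theorem \ref{thm: risk acceptance - trho  - star-shaped} with its ``reverse'' star-shaped counterpart $\trho(\lambda X) \leq \lambda \trho(X)$ for $\lambda \in (0,1]$, since positive homogeneity is equivalent to both inequalities holding simultaneously.
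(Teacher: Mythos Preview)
Your proposal is correct and follows essentially the same approach as the paper: invoke Theorem~\ref{thm: risk acceptance - trho} for the baseline correspondence, then verify directly from the definitions that positive homogeneity of $\trho$ is equivalent to B-positive homogeneity of the family, via the identity $\trho(1/(\gamma X)) = (1/\gamma)\,\trho(1/X)$ in part~(a) and the reindexing $b \leftrightarrow b/\gamma$ in part~(b). The paper's argument is the same computation, presented with the two directions of part~(a) written out separately rather than as a single chain of equivalences.
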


\begin{proof}
The proof can be established similarly to that of Theorem \ref{thm: risk acceptance - trho  - star-shaped}.
We include the proof for completeness.

(a) If $\gamma X \in \mathcal{B}^{b/\gamma}$ for some $\gamma >0$,
then $\trho \left(\frac{1}{X}\right) = \gamma \cdot \trho \left(\frac{1}{\gamma X}\right) \leq b$ (by positive homogeneity of $\trho$).
Hence, $X \in \mathcal{B}^{b}$.

Viceversa: if $X \in \mathcal{B}^{b}$ then, again by positive homogeneity of $\trho$, $\trho \left(\frac{1}{\gamma X}\right)= \frac{1}{\gamma} \trho \left(\frac{1}{X} \right) \leq \frac{b}{\gamma}$ for any $\gamma >0$. Hence $\gamma X \in \mathcal{B}^{b/\gamma}$ for any $\gamma >0$.

B-positive homogeneity of $(\mathcal{B}_{\trho}^{b})_{b \in \Bbb R_+}$ is therefore established.

(b) By B-positive homogeneity, it follows that for any $\gamma >0$ and $X \in L^{\infty}_{++}$,
\begin{eqnarray*}
\trho_{\mathcal{B}}(\gamma X)&=&\inf \left\{b \in \Bbb R_+: \frac{1}{\gamma X} \in \mathcal{B} ^{b} \right\} \\
&=&\inf \left\{b \in \Bbb R_+: \frac{1}{X} \in \mathcal{B} ^{b/\gamma} \right\} \\
&=&\inf \left\{b \gamma \in \Bbb R_+: \frac{1}{X} \in \mathcal{B} ^{b} \right\} \\
&=&\gamma \trho_{\mathcal{B}}(X).
\end{eqnarray*}
\end{proof}

%

\setcounter{equation}{0}

\section{Law-Invariant Representation}\label{sec:li}



In this section, we will focus on quasi-logconvex risk measures that are also law invariant.
In particular, we provide their dual representation in terms of Orlicz premia as building blocks or in terms of AR@R,
by applying the dual representation of quasi-convex risk measures in Cerreia-Vioglio, Maccheroni, Marinacci and Montrucchio \cite{CMMM11}.

To this aim, we recall that a risk measure $\rho\colon L^{\infty} \to [-\infty,\infty]$ is said to be
\begin{itemize}
\item [-]law invariant if $X \overset{d}{\sim} Y \Rightarrow \rho(X) = \rho (Y)$,
\end{itemize}
where $X \overset{d}{\sim} Y$ means that $X$ and $Y$ have the same distribution with respect to $P$. For law invariant risk measures we refer, e.g., to Kusuoka \cite{K01}, Frittelli and Rosazza Gianin \cite{FR05}, Jouini et al. \cite{JST08}, and Cerreia-Vioglio et al. \cite{CMMM11}.


\begin{lemma} \label{lem: law invar trho-rho}
Let $\rho \colon L^{\infty} \to [-\infty,\infty]$ be a risk measure and
let the associated $\trho \colon L^{\infty}_{++} \to [0, \infty]$ be given by
\eqref{eq:trho-3}.

$\rho$ is law invariant if and only if $\trho$ is law invariant.
\end{lemma}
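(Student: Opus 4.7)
The plan is to prove both implications directly using the definitions \eqref{eq:trho-3}--\eqref{eq:trho-4} together with the elementary measure-theoretic fact that if $X \overset{d}{\sim} Y$ then $f(X) \overset{d}{\sim} f(Y)$ for any Borel measurable function $f$. In particular, this applies to $f = \log$ on $(0,\infty)$ and to $f = \exp$ on $\mathbb{R}$, both of which are continuous (hence Borel) and map the relevant spaces appropriately ($L^{\infty}_{++}$ to $L^{\infty}$ under $\log$, and $L^{\infty}$ to $L^{\infty}_{++}$ under $\exp$).

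For the forward direction, assume $\rho$ is law invariant and let $X, Y \in L^{\infty}_{++}$ satisfy $X \overset{d}{\sim} Y$. Then $\log X \overset{d}{\sim} \log Y$, so by law invariance of $\rho$ we have $\rho(\log X) = \rho(\log Y)$, and applying $\exp$ to both sides yields $\trho(X) = \trho(Y)$ by \eqref{eq:trho-3}. For the converse, assume $\trho$ is law invariant and let $Z, W \in L^{\infty}$ satisfy $Z \overset{d}{\sim} W$. Then $\exp(Z) \overset{d}{\sim} \exp(W)$ as elements of $L^{\infty}_{++}$, so $\trho(\exp(Z)) = \trho(\exp(W))$, and taking logarithms gives $\rho(Z) = \rho(W)$ by \eqref{eq:trho-4}.

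There is no real obstacle here: the argument is essentially a one-line consequence of the fact that distributional equivalence is preserved under pointwise deterministic transformations, combined with the bijective correspondence between $\rho$ and $\trho$ implemented by $\log$/$\exp$. The only very minor point worth noting is that one should confirm $\log X \in L^{\infty}$ whenever $X \in L^{\infty}_{++}$ (which holds because $X$ is $P$-a.s.\ bounded away from $0$ and above) and symmetrically $\exp(Z) \in L^{\infty}_{++}$ whenever $Z \in L^{\infty}$, so that both sides of the equivalence are well defined on the correct domains.
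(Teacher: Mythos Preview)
Your proof is correct and follows essentially the same approach as the paper: both directions rely on the fact that $X \overset{d}{\sim} Y$ implies $f(X) \overset{d}{\sim} f(Y)$ for the Borel maps $\log$ and $\exp$, combined with the defining relations \eqref{eq:trho-3}--\eqref{eq:trho-4}. The paper writes out the ``only if'' direction exactly as you do and dismisses the converse with ``can be proved similarly'', whereas you spell it out and add the domain checks; these are minor expository differences, not a different argument.
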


\begin{proof}
\textit{Only if case.} For any $X,Y \in L^{\infty}_{++}$ with $X \overset{d}{\sim} Y$, i.e., with the same distribution, it holds that also $\log X \overset{d}{\sim} \log Y$.
By law invariance of $\rho$, it follows that
\begin{equation*}
\trho(X)=\exp \left( \rho \left( \log(X) \right) \right)=\exp \left( \rho \left( \log(Y) \right) \right)=\trho(Y).
\end{equation*}

\textit{If case.} It can be proved similarly.
\end{proof}
\medskip

Set $\mathcal{R}\triangleq \tilde{\mathfrak{R}}=\exp(\mathfrak{R})$.

\begin{theorem} \label{thm: law invar-repres}
\textbf{[Dual representation of law invariant quasi-logconvex risk measures]}
\smallskip
Let $(\Omega, \mathcal{F},P)$ be a non-atomic probability space.

(a) $\trho \colon L^{\infty}_{++} \to [0, \infty]$ is monotone, quasi-logconvex, continuous from below and law invariant if and only if  $\mathcal{R} :\mathbb{R}\times\mathcal{Q}\rightarrow[0,\infty]$ in \eqref{eq: dual repres rho-tilde -R tilde}--\eqref{eq: definition-R tilde} is law invariant, i.e., $\mathcal{R}(t, \cdot)$ is law invariant on $\mathcal{Q}$ for any $t \in \Bbb R$.

(b) If $\trho \colon L^{\infty}_{++} \to [0, \infty]$ is monotone, quasi-logconvex, continuous from below and law invariant,
then it has the following dual representation:
\begin{equation} \label{eq: dual repres rho-tilde-law inv}
\trho(X)= \sup_{Q \in \mathcal{Q}} \mathcal{R} \left(\int_0^1 \log (q_X (\beta)) \, q_{\frac{\mathrm{d}Q}{\mathrm{d}P}} (\beta) \,\mathrm{d}\beta;Q \right),
\end{equation}
where $q_Z(\beta)$ is any $\beta$-quantile of $Z$ and
\begin{equation} \label{eq: definition-tR-law inv}
\mathcal{R}(t;Q) \triangleq \inf_{ Y \in L^{\infty}_{++}} \left\{ \trho(Y): \int_0^1 \log (q_Y (1-\beta)) \, q_{\frac{\mathrm{d}Q}{\mathrm{d}P}} (\beta) \,\mathrm{d}\beta \geq t \right\}, \quad t \in \Bbb R,\quad Q \in \mathcal{Q},
\end{equation}
is law invariant.
\end{theorem}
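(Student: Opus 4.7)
The strategy is to reduce everything to the arithmetic side via the correspondence \eqref{eq:trho-3}--\eqref{eq:trho-4} and invoke the known law-invariant dual representation of monotone, continuous-from-below, quasi-convex risk measures (Cerreia-Vioglio, Maccheroni, Marinacci and Montrucchio \cite{CMMM11}).

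For part (a), I would combine Lemma \ref{lem: law invar trho-rho} with Theorem \ref{thm: dual repres rho-tilde}. By Lemma \ref{lem: law invar trho-rho}, $\trho$ is law invariant if and only if $\rho(Z) = \log \trho(e^Z)$ is law invariant. By Theorem \ref{thm: dual repres rho-tilde}, $\rho$ is monotone, quasi-convex and continuous from below with dual representation determined by $\mathfrak{R}(t;Q) = \inf_X\{\rho(X): \mathbb{E}_Q[X] \geq t\}$. The corresponding result for quasi-convex risk measures in \cite{CMMM11} states that $\rho$ is law invariant if and only if $\mathfrak{R}(t;\cdot)$ depends only on the law of $\mathrm{d}Q/\mathrm{d}P$ for each $t$. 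Since $\mathcal{R}(t;Q) = \exp \mathfrak{R}(t;Q)$, the equivalence transfers to $\mathcal{R}$.

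For the representation in part (b), I would start from Theorem \ref{thm: dual repres rho-tilde},
$$\trho(X) = \sup_{Q \in \mathcal{Q}} \exp\bigl(\mathfrak{R}(\mathbb{E}_Q[\log X]; Q)\bigr),$$
and invoke Hardy--Littlewood. On a non-atomic $(\Omega,\mathcal{F},P)$, for each $Q \in \mathcal{Q}$ there exists $Q' \in \mathcal{Q}$ with $\mathrm{d}Q'/\mathrm{d}P$ equidistributed with $\mathrm{d}Q/\mathrm{d}P$ and co-monotone with $X$, so that
$$\sup_{Q' : \mathrm{d}Q'/\mathrm{d}P \sim \mathrm{d}Q/\mathrm{d}P} \mathbb{E}_{Q'}[\log X] = \int_0^1 q_{\log X}(\beta)\, q_{\mathrm{d}Q/\mathrm{d}P}(\beta)\, \mathrm{d}\beta = \int_0^1 \log q_X(\beta)\, q_{\mathrm{d}Q/\mathrm{d}P}(\beta)\, \mathrm{d}\beta,$$
using $q_{\log X}=\log q_X$ by monotonicity of $\log$. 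Because by part~(a) the map $\mathfrak{R}(\cdot;Q)$ depends on $Q$ only through the law of $\mathrm{d}Q/\mathrm{d}P$ and is monotone increasing (and continuous from below) in its first argument, the outer supremum over $Q$ absorbs the inner supremum over equivalents in law, and the two-sided inequality yields \eqref{eq: dual repres rho-tilde-law inv}. The formula \eqref{eq: definition-tR-law inv} for $\mathcal{R}$ is then obtained by applying the same rearrangement idea to the defining infimum from Theorem \ref{thm: dual repres rho-tilde}: law invariance of $\trho$ lets us regroup the infimum by the law of $Y$, and for each fixed law the constraint $\mathbb{E}_Q[\log Y] \ge t$ is achievable precisely when the extremal Hardy--Littlewood coupling against $\mathrm{d}Q/\mathrm{d}P$ meets it, which rewrites the constraint as a quantile integral.

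\textbf{Main obstacle.} The principal technical care is in the Hardy--Littlewood step: one must use non-atomicity to actually realize the co-monotone rearrangement within $\mathcal{Q}$ (not merely as an abstract law) and verify that the monotonicity and continuity-from-below of $\mathfrak{R}(\cdot;Q)$ in its first argument are enough to exchange the supremum with the rearrangement while preserving equality. A secondary bookkeeping point is keeping the quantile convention (lower $\beta$-quantile, increasing in $\beta$) aligned with whether the co-monotone or anti-monotone coupling delivers the relevant extremum in the representation of $\trho$ versus the infimum defining $\mathcal{R}$.
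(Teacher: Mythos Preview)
Your proposal is correct and follows essentially the same approach as the paper: both reduce to the arithmetic counterpart $\rho$ via \eqref{eq:trho-3}--\eqref{eq:trho-4}, invoke the law-invariant quasi-convex representation of \cite{CMMM11} (their Theorem~5.1), and then transfer back using $\mathcal{R}=\exp\mathfrak{R}$ and $q_{\log X}=\log q_X$. The only stylistic difference is that the paper applies \cite{CMMM11}, Theorem~5.1, to $\rho$ as a black box and then transforms, whereas you sketch the Hardy--Littlewood rearrangement step more explicitly on the $\trho$ side; the content is the same.
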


\begin{proof}
The result can be proved by applying the one-to-one correspondence between $\rho$ and $\trho$ and Theorem 5.1 of Cerreia-Vioglio, Maccheroni, Marinacci and Montrucchio \cite{CMMM11}. Notice that the proof of this result holds even if $\mathfrak{R}$ is defined as in \eqref{eq: definition-R} (see footnote 21 in \cite{CMMM11}) and if the supremum over $Q$ is not attained in the dual representation of quasi-convex risk measures (or under continuity from above instead of below).
\medskip

(a) follows immediately by Theorem 5.1 of Cerreia-Vioglio, Maccheroni, Marinacci and Montrucchio \cite{CMMM11}, by the one-to-one correspondence between $\rho$ and $\trho$ as well as $\mathfrak{R}$ and $\mathcal{R}=\exp(\mathfrak{R})$ and by the equivalence between law invariance of $\rho$ (resp.~of $\mathfrak{R}$) and that of $\trho$ (resp.~of $\mathcal{R}$ ).

(b)
Since $\trho$ is monotone, quasi-logconvex, continuous from below and law invariant, the associated $\rho$ is monotone, quasi-convex, continuous from below and law invariant (see Lemmas \ref{lem:rho-trho}, \ref{lem: rho_trho-2} and \ref{lem: law invar trho-rho}). By Theorem 5.1 of Cerreia-Vioglio, Maccheroni, Marinacci and Montrucchio \cite{CMMM11},
\begin{equation} \label{eq: R-law inv}
\rho(X)= \sup_{Q \in \mathcal{Q}} \mathfrak{R} \left(\int_0^1 q_X (\beta) \, q_{\frac{\mathrm{d}Q}{\mathrm{d}P}} (\beta) \,\mathrm{d}\beta;Q \right),
\end{equation}
where
\begin{equation} \label{eq: definition-R-law inv}
\mathfrak{R}(t;Q) \triangleq \inf_{ Z \in L^{\infty}} \left\{ \rho(Z): \int_0^1 q_Z (1-\beta) \, q_{\frac{\mathrm{d}Q}{\mathrm{d}P}} (\beta) \,\mathrm{d}\beta \geq t \right\}, \quad t \in \Bbb R,\quad Q \in \mathcal{Q},
\end{equation}
is law invariant. By \eqref{eq:trho-3}, it follows that
\begin{equation} \label{eq: trho-law inv-1}
\trho(X)= \exp \left( \sup_{Q \in \mathcal{Q}} \mathfrak{R} \left(\int_0^1 q_{\log X} (\beta) \, q_{\frac{\mathrm{d}Q}{\mathrm{d}P}} (\beta) \,\mathrm{d}\beta;Q \right) \right).
\end{equation}
Because it is easy to check that $q_{\log X} (\beta)= \log (q_X (\beta))$ for any $X \in L^{\infty}_{++}$ and any $\beta \in [0,1]$ (see also Lemma A.23 in F\"{o}llmer and Schied \cite{FS11}), \eqref{eq: trho-law inv-1} becomes
\begin{eqnarray}
\trho(X) &=& \sup_{Q \in \mathcal{Q}} \exp \left(\mathfrak{R} \left(\int_0^1 \log (q_{X} (\beta)) \, q_{\frac{\mathrm{d}Q}{\mathrm{d}P}} (\beta) \,\mathrm{d}\beta;Q \right) \right) \label{eq: lawinv-Rtt}\\
&=& \sup_{Q \in \mathcal{Q}} \mathcal{R} \left(\int_0^1 \log (q_{X} (\beta)) \, q_{\frac{\mathrm{d}Q}{\mathrm{d}P}} (\beta) \,\mathrm{d}\beta;Q \right), \notag
\end{eqnarray}
where $\mathcal{R}(t;Q)= \exp\left( \mathfrak{R}(t;Q)\right)$.
Furthermore, for any $t \in \Bbb R$ and $Q \in \mathcal{Q}$ it holds that
\begin{eqnarray*}
\mathcal{R}(t;Q) &=& \inf_{ Y \in L^{\infty}_{++}} \{ \trho(Y): \mathbb{E}_Q [\log (Y)] \geq t \} \\
&=&\inf_{ Y \in L^{\infty}_{++}} \left\{ \trho(Y): \int_0^1 q_{\log Y} (1-\beta) \, q_{\frac{\mathrm{d}Q}{\mathrm{d}P}} (\beta) \,\mathrm{d}\beta \geq t \right\} \\
&=&\inf_{ Y \in L^{\infty}_{++}} \left\{ \trho(Y): \int_0^1 \log (q_Y (1-\beta)) \, q_{\frac{\mathrm{d}Q}{\mathrm{d}P}} (\beta) \,\mathrm{d}\beta \geq t \right\},
\end{eqnarray*}
where the second equality can be proved proceeding as in \cite{CMMM11}.
\end{proof}

Note that, by
\begin{equation*}
\mathcal{R}(t;Q)=\exp\left(\mathfrak{R}(t;Q) \right)=R(\exp t;Q),
\end{equation*}
\eqref{eq: dual repres rho-tilde-law inv} can be rewritten as
\begin{equation} \label{eq: dual repres rho-tilde-law inv-2}
\trho(X)= \sup_{Q \in \mathcal{Q}} R \left(\exp\left(\int_0^1 \log (q_X (\beta)) \, q_{\frac{\mathrm{d}Q}{\mathrm{d}P}} (\beta) \,\mathrm{d}\beta\right) ;Q \right).
\end{equation}

\begin{proposition}
\begin{itemize}
\item[(a)] $\trho \colon L^{\infty}_{++} \to [0, \infty]$ is monotone, quasi-logconvex, continuous from below, law invariant and star-shaped if and only if it the associated $\mathcal{R}$ is law invariant and multiplicatively expansive
in the first coordinate.

\item[(b)] $\trho \colon L^{\infty}_{++} \to [0, \infty]$ is monotone, quasi-logconvex, continuous from below, law invariant and positively homogeneous if and only if the associated $\mathcal{R}$ is law invariant and multiplicatively homogeneous
in the first coordinate.
\end{itemize}
\end{proposition}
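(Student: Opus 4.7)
The plan is to obtain both parts by combining the law-invariance characterization of Theorem~\ref{thm: law invar-repres} with the first-coordinate characterizations of star-shapedness and positive homogeneity given in Theorems~\ref{thm: dual repres rho-tilde -star-shaped} and~\ref{thm: dual repres rho-tilde -PH}. The key observation is that the two properties act on different ``coordinates'' of $\mathcal{R}$: law invariance is a property of $\mathcal{R}(t,\cdot)$ on $\mathcal{Q}$ for each fixed $t$, while multiplicative expansivity (resp.\ multiplicative homogeneity) in the first coordinate is a property of $\mathcal{R}(\cdot,Q)$ on $\mathbb{R}$ for each fixed $Q$, so the two conditions combine without interference.

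For part (a), I would first apply Theorem~\ref{thm: law invar-repres}(a) to replace ``$\trho$ monotone, quasi-logconvex, continuous from below and law invariant'' by ``$\trho$ admits representation~\eqref{eq: dual repres rho-tilde -R tilde} with $\mathcal{R}$ law invariant in the second coordinate''. Starting from this representation, I would then add the star-shapedness requirement and invoke the equivalence (a)$\Leftrightarrow$(c) of Theorem~\ref{thm: dual repres rho-tilde -star-shaped} to conclude that star-shapedness of $\trho$ is equivalent to $\mathcal{R}(t+h;Q)\geq e^{h}\mathcal{R}(t;Q)$ for all $t\in\mathbb{R}$, $h\in\mathbb{R}_+$, $Q\in\mathcal{Q}$. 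The direct implications in Theorem~\ref{thm: dual repres rho-tilde -star-shaped} operate pointwise in $Q$, so they neither require nor destroy law invariance in $Q$; both properties of $\mathcal{R}$ therefore hold simultaneously, yielding the forward direction. The converse is immediate: if $\mathcal{R}$ is both law invariant and multiplicatively expansive in $t$, then Theorem~\ref{thm: law invar-repres}(a) supplies law invariance of $\trho$ together with monotonicity, quasi-logconvexity and continuity from below, and Theorem~\ref{thm: dual repres rho-tilde -star-shaped} supplies star-shapedness.

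Part (b) follows by the same argument, using Theorem~\ref{thm: dual repres rho-tilde -PH} in place of Theorem~\ref{thm: dual repres rho-tilde -star-shaped}: positive homogeneity of $\trho$ corresponds to the two-sided condition $\mathcal{R}(t+h;Q)=e^{h}\mathcal{R}(t;Q)$, i.e.\ multiplicative homogeneity of $\mathcal{R}$ in the first coordinate, and again this requirement is independent of the law invariance of $\mathcal{R}(\cdot,Q)$ over $Q$.

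The only mild subtlety is to check that the two characterizations refer to the \emph{same} object $\mathcal{R}$. In Theorem~\ref{thm: dual repres rho-tilde -star-shaped}, $\mathcal{R}=\tilde{\mathfrak{R}}$ is defined through~\eqref{eq: definition-R tilde} via $\mathbb{E}_Q[\log Y]$, whereas in Theorem~\ref{thm: law invar-repres} it is given by~\eqref{eq: definition-tR-law inv} via $\int_0^1 \log q_Y(1-\beta)\,q_{\mathrm{d}Q/\mathrm{d}P}(\beta)\,\mathrm{d}\beta$. Under law invariance of $\trho$ these two definitions agree, as recorded in the proof of Theorem~\ref{thm: law invar-repres} (and as follows from the Hardy--Littlewood identity together with $q_{\log Y}=\log q_Y$ on $L^{\infty}_{++}$). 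This identification is the only non-bookkeeping point in the argument, and it is already provided by the cited theorems, so no new computation is required.
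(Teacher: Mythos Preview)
Your proposal is correct and follows essentially the same approach as the paper: the paper's proof simply states that (a) and (b) follow by combining Theorems~\ref{thm: dual repres rho-tilde -star-shaped}, \ref{thm: dual repres rho-tilde -PH} and~\ref{thm: law invar-repres}, which is exactly the combination you describe. Your additional remark on the identification of the two definitions of $\mathcal{R}$ is a useful clarification that the paper leaves implicit.
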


\begin{proof}
(a) and (b) follow by Theorems~\ref{thm: dual repres rho-tilde -star-shaped}, \ref{thm: dual repres rho-tilde -PH} and \ref{thm: law invar-repres}. 
Alternatively, they can be proved directly thanks to \eqref{eq: definition-tR-law inv}.
\end{proof}

Two alternative representations of law invariant quasi-logconvex risk measures are established in the following results: the former in terms of Orlicz premia, the latter in terms of AR@R.

\begin{theorem} \label{thm: law invar-repres-Orlicz}
\textbf{[Representation of law invariant quasi-logconvex risk measures via Orlicz premia]}

Let $(\Omega, \mathcal{F},P)$ be a non-atomic probability space.

If $\trho \colon L^{\infty}_{++} \to [0, \infty]$ is monotone, quasi-logconvex, continuous from below and law invariant,
then it has the following representation:
\begin{equation} \label{eq: repres rho-tilde-law inv-Orlicz}
\trho(X)= \sup_{Q \in \mathcal{Q}} R \left(\sup_{\tilde{Q}\ll P: \frac{\mathrm{d}\tilde{Q}}{\mathrm{d}P} \overset{d}{\sim} \frac{\mathrm{d}Q}{\mathrm{d}P}} H_{0, \tilde{Q}} (X);Q \right)
= \sup_{Q \in \mathcal{Q}} \sup_{\tilde{Q}\ll P: \frac{\mathrm{d}\tilde{Q}}{\mathrm{d}P} \overset{d}{\sim} \frac{\mathrm{d}Q}{\mathrm{d}P}} R \left( H_{0, \tilde{Q}} (X);Q \right).
\end{equation}
\end{theorem}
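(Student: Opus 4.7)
The plan is to combine Theorem \ref{thm: law invar-repres} with the Hardy--Littlewood rearrangement inequality applied to $\log X$, identifying the resulting quantile integral as a supremum of canonical Orlicz premia over equidistributed densities. The identity $H_{0,\tilde Q}(X)=\exp(\mathbb{E}_{\tilde Q}[\log X])$ (from \eqref{eq:buildingblockQ}, since the infimum there is attained at $k=\exp(\mathbb{E}_{\tilde Q}[\log X])$) is the bridge between the two formulations.

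First I would start from the representation \eqref{eq: dual repres rho-tilde-law inv-2} already established in Theorem \ref{thm: law invar-repres}:
\begin{equation*}
\trho(X)= \sup_{Q \in \mathcal{Q}} R \left(\exp\left(\int_0^1 \log (q_X (\beta)) \, q_{\frac{\diff Q}{\diff P}} (\beta) \,\diff\beta\right) ;Q \right).
\end{equation*}
Fix $Q\in\mathcal{Q}$ and write $f_Q=\diff Q/\diff P$. By the classical Hardy--Littlewood (rearrangement) inequality on the non-atomic space $(\Omega,\F,P)$, applied to $\log X\in L^{\infty}$ and $f_Q\in L^{1}_{+}$,
\begin{equation*}
\sup_{\tilde Q \ll P :\, f_{\tilde Q}\overset{d}{\sim} f_Q}\mathbb{E}_{\tilde Q}[\log X]
= \sup_{\tilde f \overset{d}{\sim} f_Q}\mathbb{E}[\tilde f \log X]
= \int_0^1 q_{\log X}(\beta)\, q_{f_Q}(\beta)\,\diff\beta,
\end{equation*}
and the supremum is attained, namely by choosing $\tilde f^{\ast}$ comonotonic with $\log X$ (and equidistributed with $f_Q$), which exists precisely because $(\Omega,\F,P)$ is non-atomic. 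Using $q_{\log X}(\beta)=\log q_X(\beta)$ (by monotonicity of $\log$; cf.\ F\"ollmer--Schied, Lemma A.23), exponentiating both sides and invoking $H_{0,\tilde Q}(X)=\exp(\mathbb{E}_{\tilde Q}[\log X])$ yields
\begin{equation*}
\sup_{\tilde Q:\, f_{\tilde Q}\overset{d}{\sim} f_Q} H_{0,\tilde Q}(X)
= \exp\left(\int_0^1 \log q_X(\beta)\, q_{f_Q}(\beta)\,\diff\beta\right),
\end{equation*}
and this supremum is attained at the comonotonic $\tilde Q^{\ast}$.

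Substituting this identity into \eqref{eq: dual repres rho-tilde-law inv-2} gives the first equality in \eqref{eq: repres rho-tilde-law inv-Orlicz}. For the second equality I would interchange $\sup_{\tilde Q}$ with $R(\cdot;Q)$: monotonicity of $R(\cdot;Q)$ in its first coordinate gives the inequality
\begin{equation*}
\sup_{\tilde Q} R(H_{0,\tilde Q}(X);Q)\leq R\bigl(\sup_{\tilde Q} H_{0,\tilde Q}(X);Q\bigr),
\end{equation*}
while the attainment of the supremum at $\tilde Q^{\ast}$ provides the reverse inequality, so equality holds.

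The main obstacle is step (3): correctly formulating Hardy--Littlewood as a supremum over equidistributed densities rather than merely as an inequality, and ensuring that the supremum is a maximum. Both require the non-atomicity hypothesis, which is exactly the standing assumption of the theorem; otherwise the other steps (identification of $H_{0,\tilde Q}(X)$ with $\exp(\mathbb{E}_{\tilde Q}[\log X])$, the quantile identity $q_{\log X}=\log q_X$, and the sup-exchange using monotonicity of $R$) are routine.
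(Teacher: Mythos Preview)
Your proposal is correct and follows essentially the same route as the paper's proof: start from the law-invariant representation of Theorem~\ref{thm: law invar-repres}, identify the quantile integral $\int_0^1 q_{\log X}(\beta)\,q_{\mathrm{d}Q/\mathrm{d}P}(\beta)\,\mathrm{d}\beta$ with $\sup_{\tilde Q\sim Q}\mathbb{E}_{\tilde Q}[\log X]$ via the Hardy--Littlewood rearrangement identity (which the paper cites as Proposition~14 of Kusuoka~\cite{K01}), and then use $H_{0,\tilde Q}(X)=\exp(\mathbb{E}_{\tilde Q}[\log X])$ together with monotonicity of $R$. Your justification of the interchange $R(\sup_{\tilde Q}\cdot\,;Q)=\sup_{\tilde Q}R(\cdot\,;Q)$ via attainment of the comonotonic coupling is a bit more explicit than the paper's, which simply writes the equality without further comment.
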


\begin{proof}
By \eqref{eq: lawinv-Rtt},
\begin{equation} \label{eq: trho-law inv-R-q}
\trho(X)= \sup_{Q \in \mathcal{Q}} \exp \left( \mathfrak{R} \left(\int_0^1 q_{\log X} (\beta) \, q_{\frac{\mathrm{d}Q}{\mathrm{d}P}} (\beta) \,\mathrm{d}\beta;Q \right) \right).
\end{equation}
It then follows that
\begin{eqnarray*}
\trho(X)&=& \sup_{Q \in \mathcal{Q}} \exp \left( \mathfrak{R} \left( \sup_{\tilde{Q}\ll P: \frac{\mathrm{d}\tilde{Q}}{\mathrm{d}P} \overset{d}{\sim} \frac{\mathrm{d}Q}{\mathrm{d}P}} \mathbb{E}_{\tilde{Q}}[\log X];Q \right) \right) \\
&=& \sup_{Q \in \mathcal{Q}} R \left( \exp\left(\sup_{\tilde{Q}\ll P: \frac{\mathrm{d}\tilde{Q}}{\mathrm{d}P} \overset{d}{\sim} \frac{\mathrm{d}\tilde{Q}}{\mathrm{d}P}} \mathbb{E}_{\tilde{Q}}[\log X] \right);Q \right) \\
&=& \sup_{Q \in \mathcal{Q}} R \left(\sup_{\tilde{Q}\ll P: \frac{\mathrm{d}\tilde{Q}}{\mathrm{d}P} \overset{d}{\sim} \frac{\mathrm{d}Q}{\mathrm{d}P}} \exp\left(\mathbb{E}_{\tilde{Q}}[\log X] \right);Q \right)  \\
&=& \sup_{Q \in \mathcal{Q}} R \left(\sup_{\tilde{Q}\ll P: \frac{\mathrm{d}\tilde{Q}}{\mathrm{d}P} \overset{d}{\sim} \frac{\mathrm{d}Q}{\mathrm{d}P}} H_{0, \tilde{Q}} (X);Q \right)  \\
&=& \sup_{Q \in \mathcal{Q}} \sup_{\tilde{Q}\ll P: \frac{\mathrm{d}\tilde{Q}}{\mathrm{d}P} \overset{d}{\sim} \frac{\mathrm{d}Q}{\mathrm{d}P}} R \left( H_{0, \tilde{Q}} (X);Q \right),
\end{eqnarray*}
where the first equality is due to Proposition~14 of Kusuoka \cite{K01}, while the second is due to the relation  $\exp\left(\mathfrak{R}(t;Q)\right) = R(\exp(t);Q)$.
\end{proof}

Finally, the following result provides a dual representation of $\trho$ in terms of AR@R, in line with similar results by Kusuoka \cite{K01} and Frittelli and Rosazza Gianin \cite{FR05} for coherent and convex risk measures, respectively.

\begin{corollary}
If $\trho \colon L^{\infty}_{++} \to [0, \infty]$ is monotone, quasi-logconvex, continuous from below and law invariant,
then it has the following dual representation:
\begin{equation} \label{eq: dual repres rho-tilde-law inv-3}
\trho(X)= \sup_{Q \in \mathcal{Q}} \mathcal{R} \left(\int_{[0,1)} \log (AR@R_{\alpha} (X)) \, m_Q( \mathrm{d} \alpha);Q \right),
\end{equation}
where $m_Q$ is a probability measure on $[0,1)$ depending on $Q$ and AR@R is defined in Example~\ref{ex: ARaR}.
\end{corollary}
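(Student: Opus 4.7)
The plan is to derive this corollary directly from Theorem~\ref{thm: law invar-repres} by rewriting the ``Choquet-type'' integral $\int_0^1 \log(q_X(\beta))\, q_{\mathrm{d}Q/\mathrm{d}P}(\beta)\,\mathrm{d}\beta$ as a mixture of (logarithms of) AR@R over probability levels $\alpha\in[0,1)$. The substitution $Z=\log X$ brings the integral into the classical form $\int_0^1 q_Z(\beta)\, q_{\mathrm{d}Q/\mathrm{d}P}(\beta)\,\mathrm{d}\beta$, because $q_{\log X}(\beta)=\log q_X(\beta)$ on $L^\infty_{++}$ (by Lemma A.23 of F\"ollmer--Schied, as already invoked in the proof of Theorem~\ref{thm: law invar-repres}).

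The key ingredient is Kusuoka's representation of comonotone spectral integrals: for every $Q\in\mathcal{Q}$ there is a probability measure $m_Q$ on $[0,1)$ such that, for all $Z\in L^\infty$,
\begin{equation*}
\int_0^1 q_Z(\beta)\, q_{\frac{\mathrm{d}Q}{\mathrm{d}P}}(\beta)\,\mathrm{d}\beta \;=\; \int_{[0,1)} AV@R_\alpha(Z)\, m_Q(\mathrm{d}\alpha).
\end{equation*}
This is the standard Kusuoka lemma (Kusuoka~\cite{K01}; see also the version used by Frittelli and Rosazza Gianin~\cite{FR05} in the law-invariant convex case). I would then apply this with $Z=\log X$ and use the identity
\begin{equation*}
AV@R_\alpha(\log X)\;=\;\int_\alpha^1\frac{1}{1-\alpha}\,F_{\log X}^{-1}(\beta)\,\mathrm{d}\beta\;=\;\log AR@R_\alpha(X),
\end{equation*}
which is immediate from Example~\ref{ex: ARaR} (it is essentially the very definition of AR@R as the geometric counterpart of AV@R).

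Putting these two identities together yields
\begin{equation*}
\int_0^1\log(q_X(\beta))\,q_{\frac{\mathrm{d}Q}{\mathrm{d}P}}(\beta)\,\mathrm{d}\beta \;=\;\int_{[0,1)}\log\bigl(AR@R_\alpha(X)\bigr)\,m_Q(\mathrm{d}\alpha),
\end{equation*}
and substituting this into the representation~\eqref{eq: dual repres rho-tilde-law inv} of Theorem~\ref{thm: law invar-repres} gives the claimed formula. The bulk of the work is bookkeeping rather than a new argument; the one subtle point (and the only place where care is needed) is the Kusuoka mixing step: one has to justify that the measure $m_Q$ depends only on $Q$ and not on the test random variable, which is precisely the content of the classical lemma and is why the non-atomicity hypothesis on $(\Omega,\mathcal{F},P)$ inherited from Theorem~\ref{thm: law invar-repres} is essential.
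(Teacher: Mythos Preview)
Your proposal is correct and follows essentially the same route as the paper: both start from the representation of Theorem~\ref{thm: law invar-repres}, rewrite $\int_0^1 q_{\log X}(\beta)\,q_{\mathrm{d}Q/\mathrm{d}P}(\beta)\,\mathrm{d}\beta$ via Kusuoka's mixing lemma as $\int_{[0,1)} AV@R_\alpha(\log X)\,m_Q(\mathrm{d}\alpha)$, and then identify $AV@R_\alpha(\log X)=\log AR@R_\alpha(X)$. The only difference is cosmetic: the paper spells out the construction of $m_Q$ explicitly (via $m_Q(\mathrm{d}\alpha)=(1-\alpha)\,\mathrm{d}\hat{q}_{\mathrm{d}Q/\mathrm{d}P}(\alpha)$ following Kusuoka~\cite{K01}, Theorem~4) rather than citing the lemma as a black box.
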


\begin{proof}
By Theorem \ref{thm: law invar-repres}, $\trho$ has the following dual representation:
\begin{eqnarray}
\trho(X)&=& \sup_{Q \in \mathcal{Q}} \mathcal{R} \left(\int_0^1 \log (q_X (\beta)) \, q_{\frac{\mathrm{d}Q}{\mathrm{d}P}} (\beta) \,\mathrm{d}\beta;Q \right) \notag\\
&=& \sup_{Q \in \mathcal{Q}} \mathcal{R} \left(\int_0^1 q_{\log X} (\beta) \, q_{\frac{\mathrm{d}Q}{\mathrm{d}P}} (\beta) \,\mathrm{d}\beta;Q \right) \label{eq: law inv-quant-1}
\end{eqnarray}
where $q_{Z}(\beta)$ is any $\beta$-quantile of $Z$.

Proceeding similarly as in Theorem~4 of Kusuoka \cite{K01} (see also Theorem~7 of Frittelli and Rosazza Gianin \cite{FR05}),
we consider now the smallest quantiles $q^-$ and
\begin{equation*}
\hat{q}_{\frac{\mathrm{d}Q}{\mathrm{d}P}}(y)= \left\{
\begin{array}{rl}
\lim_{y \to 0^+} q_{\frac{\mathrm{d}Q}{\mathrm{d}P}}^-(y),& \quad y \leq 0; \\
 q_{\frac{\mathrm{d}Q}{\mathrm{d}P}}^-(y),& \quad 0 < y \leq 1; \\
1,& \quad y >1. \\
\end{array}
\right.
\end{equation*}
It follows that
\begin{eqnarray}
\int_0^1 q_{\log X} (\beta) \, q_{\frac{\mathrm{d}Q}{\mathrm{d}P}} (\beta) \,\mathrm{d}\beta &=& \int_0^1 q^-_{\log X} (\beta) \, q^-_{\frac{\mathrm{d}Q}{\mathrm{d}P}} (\beta) \,\mathrm{d}\beta \notag\\
&=& \int_0^1 q^-_{\log X} (\beta) \, \hat{q}_{\frac{\mathrm{d}Q}{\mathrm{d}P}} (\beta) \,\mathrm{d}\beta \notag\\
&=& \int_{[0,1)} \left[\int_{\alpha}^1  q^-_{\log X} (\beta) \,\mathrm{d}\beta \right] \mathrm{d} \hat{q}_{\frac{\mathrm{d}Q}{\mathrm{d}P}} (\alpha) \notag\\
&=& \int_{[0,1)} \left[\frac{1}{1-\alpha} \int_{\alpha}^1  q^-_{\log X} (\beta) \,\mathrm{d}\beta \right] m_Q (\mathrm{d} \alpha) \notag\\
&=& \int_{[0,1)} \log (AR@R_{\alpha} (X)) \, m_Q (\mathrm{d} \alpha), \label{eq: law inv-quant-2}
\end{eqnarray}
where $m_Q (\mathrm{d} \alpha) \triangleq (1-\alpha) \,\mathrm{d} \hat{q}_{\frac{\mathrm{d}Q}{\mathrm{d}P}} (\alpha)$ is a probability measure on $[0,1)$ depending on $Q$.

Eqns.~\eqref{eq: law inv-quant-1} and \eqref{eq: law inv-quant-2} then imply \eqref{eq: dual repres rho-tilde-law inv-3}.
\end{proof}

\setcounter{equation}{0}

\section{Examples}\label{sec:examples}

In this section, we provide some illustrative examples of (law-invariant) quasi-logconvex risk measures.

\begin{example}
Consider the monetary risk measure $\rho$ with $\mathfrak{R}(t,Q)=t$.
It follows that $\rho$ is law invariant, coherent and continuous from below.

The return counterpart $\trho$ of $\rho$ is then logcoherent, law invariant, and given by
\begin{align*}
\trho(X)&=\sup_{Q\in\mathcal{Q}}\exp \left(\mathfrak{R}\left(\mathbb{E}_{Q}\left[\log X\right]; Q\right)\right) \\
&=\sup_{Q\in\mathcal{Q}}\exp \left(\mathbb{E}_{Q}\left[\log X\right]\right) \\
&=\sup_{Q\in\mathcal{Q}}H_{0, Q} (X).
\end{align*}
$\trho$ is therefore a supremum of Orlicz premia over the set $\mathcal{Q}$.
\end{example}

\begin{example}
Consider now $\rho$ corresponding to $\mathfrak{R}(t,Q)=t-c(Q)$ for a given convex penalty function $c$.
It follows that $\rho$ is convex and continuous from below, while its return counterpart $\trho$ is logconvex and given by
\begin{align*}
\trho(X)&=\sup_{Q\in\mathcal{Q}}\exp \left(\mathfrak{R}\left(\mathbb{E}_{Q}\left[\log X\right]; Q\right)\right) \\
&=\sup_{Q\in\mathcal{Q}}\exp \left(\mathbb{E}_{Q}\left[\log X\right] - c(Q)\right) \\
&=\sup_{Q\in\mathcal{Q}} \{ \beta(Q) H_{0, Q} (X)\},
\end{align*}
with $\beta(Q)=e^{-c(Q)}$.
Furthermore, if $c(\cdot)$ is law invariant, then the same holds for $\rho$ and $\trho$.
$\trho$ can thus be interpreted as a supremum of discounted Orlicz premia over the set $\mathcal{Q}$.
\end{example}

\begin{example}
Let $\rho$ be associated to $\mathfrak{R}(t,Q)=\sup_{q\in[0,1]}\{qt-c(qQ)\}$ for a given convex penalty function $c$.
It follows that $\rho$ is convex and continuous from below, while its return counterpart $\trho$ is logconvex and given by
\begin{align*}
\trho(X)&=\sup_{Q\in\mathcal{Q}}\exp \left(\sup_{q\in[0,1]}\{q\mathbb{E}_{Q}\left[\log X\right]-c(qQ)\} \right) \\
&=\sup_{Q\in\mathcal{Q},q\in[0,1] }\exp \left(q\mathbb{E}_{Q}\left[\log X\right]-c(qQ) \right) \\
&=\sup_{Q\in\mathcal{Q}, q \in [0,1]} \{ e^{-c(qQ)} H_{0, Q} (X^q)\}.
\end{align*}
Furthermore, if $c(\cdot)$ is law invariant, then the same holds for $\rho$ and $\trho$.

In other words, $\trho$ can be written as a supremum of discounted Orlicz premia of $X^q$.
\end{example}

\begin{example}
Let $\rho$ correspond to $\mathfrak{R}(t,Q)=t \vee C$ for a given level $C \in \Bbb R$. 
It follows that $\mathfrak{R}(t,Q)$ is law invariant, increasing and continuous in $t$, hence $\rho$ is quasi-convex, law invariant and continuous from below.

Furthermore, the geometric counterpart $\trho$ of $\rho$ is quasi-logconvex, law invariant, and given by
\begin{align*}
\trho(X)&=\sup_{Q\in\mathcal{Q}}\exp \left(\mathbb{E}_{Q}\left[\log X\right] \vee C\right) \\
&=\sup_{Q\in\mathcal{Q}} \{\exp \left(\mathbb{E}_{Q}\left[\log X\right]\vee C \right)\} \\
&=\sup_{Q\in\mathcal{Q}} H_{0, Q} (X) \vee e^C.
\end{align*}
\end{example}

\begin{example}
Consider now $\rho$ with
\begin{equation*}
\mathfrak{R}(t,Q)=\left\{ \begin{array}{rl} \log a ,& \, t\leq a; \\ \log t,& \, t >a; \end{array}\right.
\end{equation*}
for a given $a \in (0,1)$. 
Law invariance, increasing monotonicity and continuity in $t$ of $\mathfrak{R}(t,Q)$ then implies that $\rho$ is then quasi-convex, law invariant and continuous from below.

It then follows that its geometric counterpart
\begin{align*}
\trho(X)&=\sup_{Q\in\mathcal{Q}}\exp \left(\mathfrak{R}\left(\mathbb{E}_{Q}\left[\log X\right]\right)\right) \\
&=\sup_{Q\in\mathcal{Q}} \{\mathbb{E}_{Q}\left[\log X\right]\vee a \} \\
&=\sup_{Q\in\mathcal{Q}} \mathbb{E}_{Q}\left[\log X\right] \vee a
\end{align*}
is quasi-logconvex and law invariant.
\end{example}

\setcounter{equation}{0}

\section{Applications}\label{sec:app}

In this section, we investigate the implications of quasi-logconvex risk measures in applications to portfolio choice and capital allocation, both supporting and further motivating the use of quasi-logconvex risk measures.

\subsection{Portfolio Choice}

Let $\trho \colon L^{\infty}_{++} \to [0, \infty]$ be a monotone, quasi-logconvex and continuous from below risk measure (that is not necessarily quasi-convex) and let $\rho \colon L^{\infty} \to [-\infty,\infty]$ be the associated arithmetic risk measure that is monotone, quasi-convex and continuous from below.

The portfolio choice problem for (quasi-)convex risk measures (see Ruszczynski and Shapiro \cite{RS06} and Mastrogiacomo and Rosazza Gianin \cite{MRG15}), adapted to risk assessment of portfolio returns rather than assessment of a portfolio's monetary value, is given by
\begin{equation} \label{eq: portfolio-rho}
\min_{\mathbf{w}
\in \mathcal{W}: \, \mathbb{E} [\sum_{i=1}^n X_i w_i ] \leq r } \, \rho\left(\sum_{i=1}^n X_i w_i \right),
\end{equation}
where $\mathcal{W} \triangleq \{\mathbf{w}=(w_1, \ldots, w_n) \in \mathbb{R}^n: \sum_{i=1}^n w_i=1; w_i\geq 0 \, \forall i=1,\ldots,n\}$, $w_i$ denotes the weight to be invested in risky asset $i$ with log return $X_i$, and $r \in \mathbb{R}$ stands for a target level. 
Note that the constraint set is formulated in terms of ``$\leq r$'' instead of ``$\geq r$'' because of the sign convention used in the paper (positive realizations of random variables represent losses).
For simplicity of notation, we will denote $C \triangleq \{\mathbf{w} \in \mathcal{W}: \mathbb{E} [\sum_{i=1}^n X_i w_i ] \leq r \}$.

By the one-to-one correspondence between $\rho$ and $\trho$ (see~\eqref{eq:trho-3} and~\eqref{eq:trho-4}), the portfolio choice problem~\eqref{eq: portfolio-rho} can be formulated in terms of the geometric counterpart as follows:
\begin{equation*}
\min_{\mathbf{w} \in C} \log \left(\trho\left(e^{\sum_{i=1}^n X_i w_i} \right) \right)= \log \left(\min_{\mathbf{w} \in C}  \trho\left(\prod_{i=1}^n e^{X_i w_i} \right)  \right).
\end{equation*}
Hence, the portfolio choice problem above reduces to
\begin{equation} \label{eq: portfolio-rho-tilde}
\min_{\mathbf{w} \in \mathcal{W}: \, \mathbb{E} [\sum_{i=1}^n \log (Y_i) w_i ] \leq r} \trho\left(\prod_{i=1}^n Y_i^{w_i} \right),
\end{equation}
with $Y_i= e^{X_i}$.\smallskip

The optimization problem \eqref{eq: portfolio-rho-tilde} can be seen as the multiplicative version of the classical portfolio choice problem (hence, in view of Section~\ref{sec:mot}, is natural and financially reasonable for geometric risk measures and rebalanced portfolios).

In the following, and by extending the results in \cite{RS06} and \cite{MRG15}, we will deal with two different (but related) problems: we start with the general problem
\begin{equation*}
\min_{Z \in C} \trho(F(Z)),
\end{equation*}
where $C$ is a convex subset of a normed vector space $\mathcal{Z}$ and $F$ is a suitable functional defined on $\mathcal{Z}$, to end up with the portfolio choice for $\trho$ in~\eqref{eq: portfolio-rho-tilde}.

\subsubsection{General optimization problem} \label{sec: general optim}

In this subsection, we let $\trho \colon L^{\infty}_{++} \to [0, \infty]$ be a monotone, quasi-logconvex and continuous from below risk measure (that is not necessarily quasi-convex).

We consider the general optimization problem
\begin{equation} \label{eq: general minimization}
\min_{Z \in C} \trho(F(Z)),
\end{equation}
where $C$ is a convex subset of a normed vector space $\mathcal{Z}$ and $F: \mathcal{Z} \to L^{\infty}$ is a concave and continuous functional
such that $F(C) \subseteq L^{\infty}_{++}$.
By Theorem~\ref{thm: dual repres rho-tilde}, the problem above becomes
\begin{equation} \label{eq: general minimization-trho-R}
\min_{Z \in C} \exp\left(\sup_{Q \in \mathcal{Q}} \mathfrak{R}(\mathbb{E}_{Q} [\log F(Z)];Q) \right).
\end{equation}
Furthermore, in the following we will assume that the supremum over $Q$ can be restricted over a convex set $\mathcal{Q}_0 \subseteq \mathcal{Q}$ and that
$L(X;Q) \triangleq \mathfrak{R}(\mathbb{E}_{Q} [X];Q)$ is
lower semi-continuous in $X$, while quasi-concave and upper semi-continuous in $Q$. 
Note that $L$ is automatically quasi-convex in $X$ because of increasing monotonicity of $\mathfrak{R}(t;Q)$ in $t$ (implying also quasi-convexity of $\mathfrak{R}(t;Q)$ in $t$).
%
\medskip

The following result provides a sufficient condition for $Z$ to be an optimal solution of \eqref{eq: general minimization}. 
For the reader's convenience, we recall (see \cite{PZ00}) the definitions of the normal cone and of the star and of the Greenberg-Pierskalla subdifferentials of a functional $f: \mathcal{Y} \to \bar{\mathbb{R}}$:
\begin{eqnarray*}
N(C,Y) &\triangleq& \{Y^* \in \mathcal{Y}^*: \langle Y^*, U-Y \rangle \leq 0 \text{ for any } U \in C \}, \, Y \in \mathcal{Y}, \\
\partial ^{(*)} f(\bar{Y}) &\triangleq& \{Y^* \in \mathcal{Y}^*: \langle Y^*, Y-\bar{Y} \rangle \leq 0 \text{ for any } Y \in \mathcal{Y} \text{ s.t. } f(Y) < f(\bar{Y}) \}, \\
\partial ^{GP} f(\bar{Y}) &\triangleq& \{Y^* \in \mathcal{Y}^*: \langle Y^*, Y-\bar{Y} \rangle < 0 \text{ for any } Y \in \mathcal{Y}\text{ s.t. } f(Y) < f(\bar{Y}) \},
\end{eqnarray*}
where $\mathcal{Y}^*$ denotes the dual space of $\mathcal{Y}$ and $\langle \cdot, \cdot \rangle$ the pairing functional on $\mathcal{Y} \times \mathcal{Y}^*$.

\begin{proposition} \label{prop: suff cond- opt problem 1}
Let $\mathcal{Z}= \mathbb{R}^n$, $\mathcal{Q}_0$ be weakly compact, $C$ be a closed subset of $\mathcal{Z}$, and let $\trho$ and the associated $\mathfrak{R}$ satisfy the assumptions above. 
Let $(\bar{Z},\bar{Q}) \in \mathcal{Z} \times \mathcal{Q}_0$ be given and assume that $\bar{Z}$ is not a local minimizer of $\mathfrak{R}(\mathbb{E}_{\bar{Q}} [\log F(\cdot)];\bar{Q})$.

If $(\bar{Z},\bar{Q}) \in \mathcal{Z} \times \mathcal{Q}_0$ satisfies
\begin{equation} \label{eq: optimality condit-trho}
\partial ^{(*)} \mathbb{E}_{\bar{Q}} [\log (F(\bar{Z}))] \cap \left(-N(C,\bar{Z}) \right) \neq \{0\} \quad \text{and} \quad \bar{Q} \in \partial^{GP} \trho(F(\bar{Z})),
\end{equation}
then $(\bar{Z},\bar{Q})$ is an optimal solution of \eqref{eq: general minimization-trho-R}.
\end{proposition}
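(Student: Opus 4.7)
The plan is to reduce the geometric optimization problem to a scalar primal-dual optimality argument via the dual representation of $\trho$, then assemble a short chain of inequalities witnessing optimality. Concretely, I would first invoke Theorem \ref{thm: dual repres rho-tilde} (together with the restriction of the dual supremum to the convex set $\mathcal{Q}_0$) to rewrite
\begin{equation*}
\trho(F(Z)) \;=\; \exp\!\left(\sup_{Q\in\mathcal{Q}_0} \mathfrak{R}\bigl(\mathbb{E}_Q[\log F(Z)];\,Q\bigr)\right) \;=\; \exp\!\left(\sup_{Q\in\mathcal{Q}_0} L(\log F(Z);Q)\right).
\end{equation*}
Since $\exp$ is strictly increasing, minimizing $\trho(F(Z))$ over $C$ is equivalent to minimizing the inner sup in the exponent.

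The proof then proceeds by verifying two ``side'' optimality statements using the two conditions in \eqref{eq: optimality condit-trho}. First, I would use the star-subdifferential condition $\partial^{(*)} \mathbb{E}_{\bar Q}[\log F(\bar Z)] \cap (-N(C,\bar Z)) \ne \{0\}$, together with the non-degeneracy hypothesis that $\bar Z$ is not a local minimizer of the unconstrained functional, to conclude via the Penot-Zali\-nescu sufficient optimality theorem for quasi-convex programs (see \cite{PZ00}) that $\bar Z$ solves the scalarized inner problem $\min_{Z\in C}\mathfrak{R}(\mathbb{E}_{\bar Q}[\log F(Z)];\bar Q)$. Second, I would interpret the Greenberg-Pierskalla condition $\bar Q \in \partial^{GP}\trho(F(\bar Z))$ as the dual optimality condition identifying $\bar Q$ as an attainer of the supremum in the dual representation at $F(\bar Z)$, so that $\trho(F(\bar Z)) = \exp\!\bigl(\mathfrak{R}(\mathbb{E}_{\bar Q}[\log F(\bar Z)];\bar Q)\bigr)$.

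Putting these together, for every admissible $Z\in C$,
\begin{align*}
\trho(F(Z)) &= \sup_{Q\in\mathcal{Q}_0}\exp\!\bigl(\mathfrak{R}(\mathbb{E}_Q[\log F(Z)];Q)\bigr) \\
&\geq \exp\!\bigl(\mathfrak{R}(\mathbb{E}_{\bar Q}[\log F(Z)];\bar Q)\bigr) \\
&\geq \exp\!\bigl(\mathfrak{R}(\mathbb{E}_{\bar Q}[\log F(\bar Z)];\bar Q)\bigr) \\
&= \trho(F(\bar Z)),
\end{align*}
where the first inequality is the trivial lower bound on the supremum at $Q=\bar Q$, the second comes from the primal optimality of $\bar Z$ established above, and the final equality comes from the dual optimality of $\bar Q$.

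The main obstacle is the second paragraph: translating the two subdifferential conditions into the precise primal-dual optimality statements. The star-subdifferential inclusion is a sufficient condition for the quasi-convex program only under the non-degeneracy ``$\bar Z$ is not a local minimizer,'' and one has to be careful because $Z\mapsto \mathbb{E}_{\bar Q}[\log F(Z)]$ is concave (as the composition of concave increasing $\log$ with concave $F$ followed by integration), so the relevant sublevel-set/normal-cone geometry has to be handled through the Penot-Zali\-nescu machinery rather than via classical convex subgradients. The Greenberg-Pierskalla step requires a parallel argument but for the outer supremum in the dual representation, where quasi-concavity and upper semicontinuity of $L(X;\cdot)$ on the weakly compact convex set $\mathcal{Q}_0$ guarantee that the supremum is attained and that $\partial^{GP}\trho(F(\bar Z))$ indeed contains the maximizers.
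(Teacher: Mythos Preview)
Your outline is essentially correct and amounts to reproving the saddle-point machinery of Theorem~4 in \cite{MRG15} directly in the geometric setting. The paper instead takes a shortcut: it uses the log--exp correspondence \eqref{eq:trho-3}--\eqref{eq:trho-4} to rewrite $\min_{Z\in C}\trho(F(Z))$ as $\exp\bigl(\min_{Z\in C}\rho(\bar F(Z))\bigr)$ with $\bar F=\log\circ F$ still concave and continuous, applies Theorem~4 of \cite{MRG15} as a black box to the arithmetic quasi-convex problem for $\rho$, and then only needs to check that the Greenberg--Pierskalla condition translates. That last point is handled in one line via Proposition~2.7 of Penot--Z\u alinescu \cite{PZ00} (invariance of $\partial^{GP}$ under post-composition with a strictly increasing function, here $\exp$), giving $\partial^{GP}\rho(\bar F(\bar Z))=\partial^{GP}\trho(F(\bar Z))$.

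Your route is more self-contained but leaves the hardest step --- showing that $\bar Q\in\partial^{GP}\trho(F(\bar Z))$ forces $\bar Q$ to attain the dual supremum at $F(\bar Z)$ --- asserted rather than proved; you flag it in your last paragraph, but note that the dual representation of $\trho$ pairs $Q$ with $\log X$ while the GP subdifferential pairs $Q$ with $X$ itself, so the ``parallel argument'' you allude to is not entirely symmetric and is precisely the content the paper offloads to \cite{MRG15} together with the Penot--Z\u alinescu invariance. What each approach buys: the paper's reduction is shorter and reuses existing results verbatim; yours would, if completed, give a direct proof not relying on \cite{MRG15}, at the cost of re-deriving its core lemma.
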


\begin{proof}
By the one-to-one correspondence between $\rho$ and $\trho$ (see \eqref{eq:trho-3} and \eqref{eq:trho-4}), the optimization problem \eqref{eq: general minimization} can be rewritten as
$\exp\left(\min_{Z \in C} \rho(\log F(Z))\right)$. In order to solve the initial problem \eqref{eq: general minimization}, it is therefore enough to focus on
\begin{equation*}
\min_{Z \in C} \rho(\log F(Z)),
\end{equation*}
while for \eqref{eq: general minimization-trho-R} we can restrict attention to
\begin{equation} \label{eq: general minimization-rho-R}
\min_{Z \in C} \sup_{Q \in \mathcal{Q}_0} \mathfrak{R}(\mathbb{E}_{Q} [\log F(Z)];Q),
\end{equation}
where $\mathcal{Q}$ has been replaced by $\mathcal{Q}_0$ because of the assumptions on $\trho$.

Since $\bar{F}(x) \triangleq \log (F(x))$ remains concave and continuous, Theorem~4 of \cite{MRG15} guarantees that if
\begin{equation*} 
\partial ^{(*)} \mathbb{E}_{\bar{Q}} [\bar{F}(\bar{Z})] \cap \left(-N(C,\bar{Z}) \right) \neq \{0\} \quad \text{and} \quad \bar{Q} \in \partial^{GP} \rho(\bar{F}(\bar{Z})),
\end{equation*}
then $(\bar{Z},\bar{Q})$ is an optimal solution of \eqref{eq: general minimization-rho-R}, hence also of \eqref{eq: general minimization-trho-R}.
The thesis follows because
\begin{equation*}
\partial^{GP} \rho(\bar{F}(\bar{Z}))=\partial^{GP} \rho(\log (F(\bar{Z}))=\partial^{GP} \trho(F(\bar{Z})),
\end{equation*}
where the last equality is due to Prop.~2.7 of Penot and Z\v{a}linescu \cite{PZ00} since $\trho(F(\bar{Z}))=\exp\left(\rho(\log (F(\bar{Z}))) \right)$.
\end{proof}

We note that the assumption of $\bar{Z}$ not being a local minimizer is quite reasonable.
Whenever $\bar{Z}$ were a local minimizer for $\mathfrak{R}(\mathbb{E}_{\bar{Q}} [\log F(\cdot)];\bar{Q})$, indeed, it would hold that $\mathfrak{R}(\mathbb{E}_{\bar{Q}} [\log F(\bar{Z})];\bar{Q}) \leq \mathfrak{R}(\mathbb{E}_{\bar{Q}} [\log F(Z)];\bar{Q})$ for any $Z$ in a ball $B_{\varepsilon}(\bar{Z})=\{Z \in \Bbb R^n: \Vert Z- \bar{Z}\Vert <\varepsilon \}$ for some $\varepsilon >0$. 
Nevertheless, any $Z_{\varepsilon}$---close enough to $\bar{Z}$ and obtained by $\bar{Z}$ as $Z_{\varepsilon}=(\bar{Z}_1-\frac{\varepsilon}{2n}, \ldots,\bar{Z}_n -\frac{\varepsilon}{2n})$---belongs to $B_{\varepsilon}({\bar{Z}})$ and satisfies
\begin{equation*}
\mathfrak{R}(\mathbb{E}_{\bar{Q}} [\log F(Z_{\varepsilon})];\bar{Q})<\mathfrak{R}(\mathbb{E}_{\bar{Q}} [\log F(\bar{Z})];\bar{Q}),
\end{equation*}
if $\log F$ and $\mathfrak{R}(\cdot;\bar{Q})$ are strictly increasing. Hence, in this case, $\bar{Z}$ could not be a local minimizer for $\mathfrak{R}(\mathbb{E}_{\bar{Q}} [\log F(\cdot)];\bar{Q})$.

\begin{remark}
Note that the geometric nature of quasi-logconvex risk measures emerges from
\begin{align*}
&\partial ^{(*)} \mathbb{E}_{\bar{Q}} [\log (F(\bar{Z}))] \\
&= \left\{Z^* \in \mathbb{R}^n: \langle Z^*, Z-\bar{Z} \rangle  \leq 0 \text{ for any } Z \in \mathbb{R}^n \text{ s.t. } \mathbb{E}_{\bar{Q}} \left[\log \left( \frac{F(Z)}{F(\bar{Z})}\right)\right] <0\right\},
\end{align*}
and from a comparison between (the conditions on) the optimal solutions of the general optimization problem.

For a quasi-logconvex risk measure $\trho \colon L^{\infty}_{++} \to [0, \infty]$ satisfying the respective hypotheses, Proposition~\ref{prop: suff cond- opt problem 1} guarantees that if $(\bar{Z},\bar{Q}) \in \mathcal{Z} \times \mathcal{Q}_0$ satisfies
\begin{equation*}
\partial ^{(*)} \mathbb{E}_{\bar{Q}} [\log (F(\bar{Z}))] \cap \left(-N(C,\bar{Z}) \right) \neq \{0\} \quad \text{and} \quad \bar{Q} \in \partial^{GP} \trho(F(\bar{Z})),
\end{equation*}
then $(\bar{Z},\bar{Q})$ is an optimal solution of \eqref{eq: general minimization}--\eqref{eq: general minimization-trho-R}.

If $\trho \colon L^{\infty} \to [-\infty, \infty]$ were quasi-convex, instead, then Theorem~4 of \cite{MRG15} guarantees that if
\begin{equation*}
\partial ^{(*)} \mathbb{E}_{\bar{Q}} [F(\bar{Z})] \cap \left(-N(C,\bar{Z}) \right) \neq \{0\} \quad \text{and} \quad \bar{Q} \in \partial^{GP} \trho(F(\bar{Z})),
\end{equation*}
then $(\bar{Z},\bar{Q})$ is an optimal solution of \eqref{eq: general minimization}.

Intuitively, the geometric, nature of quasi-logconvex risk measures appears evident in the condition on $\partial ^{(*)}$.
\end{remark}

\subsubsection{Portfolio choice---multiplicative version}

Let us now focus on the multiplicative version of the portfolio choice problem for $\trho$, i.e.,
\begin{equation} \label{eq: portfolio-rho-tilde-2}
\min_{\mathbf{w} \in \mathcal{W}: \, \mathbb{E} [\sum_{i=1}^n \log (Y_i) w_i ] \leq r} \, \trho\left(\prod_{i=1}^n Y_i^{w_i} \right).
\end{equation}

We state the following proposition:
\begin{proposition}
Let $\trho$ and $\mathfrak{R}$ satisfy the same assumptions as in Section \ref{sec: general optim} and assume that $\bar{\mathbf{w}}$ is not a local minimizer of $\mathfrak{R}(\mathbb{E}_{\bar{Q}} [\mathbf{w} \cdot \log \mathbf{Y}];\bar{Q})$.

If $(\bar{\mathbf{w}},\bar{Q}) \in \mathcal{W} \times \mathcal{Q}_0$ satisfies
\begin{equation} \label{eq: optimality condit-trho-2}
\partial ^{(*)} \mathbb{E}_{\bar{Q}} \left[\log \left(\prod_{i=1}^n Y_i^{\bar{w}_i} \right)\right] \cap \left(-N(C,\bar{\mathbf{w}}) \right) \neq \{0\} \quad \text{and} \quad \bar{Q} \in \partial^{GP} \trho \left(\prod_{i=1}^n Y_i^{\bar{w}_i}\right),
\end{equation}
then $(\bar{\mathbf{w}},\bar{Q})$ is an optimal solution of \eqref{eq: portfolio-rho-tilde-2}.
\end{proposition}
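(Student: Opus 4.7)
The plan is to recognize \eqref{eq: portfolio-rho-tilde-2} as a concrete instance of the general optimization problem \eqref{eq: general minimization} and then invoke Proposition \ref{prop: suff cond- opt problem 1} directly. To that end, I would take $\mathcal{Z} = \mathbb{R}^n$, define the map $F \colon \mathbb{R}^n \to L^{\infty}_{++}$ by
\[
F(\mathbf{w}) \triangleq \prod_{i=1}^n Y_i^{w_i},
\]
and let $C \triangleq \{\mathbf{w} \in \mathcal{W} : \mathbb{E}[\sum_{i=1}^n \log(Y_i) w_i] \leq r\}$ play the role of the feasible set. Since $\mathcal{W}$ is a closed convex simplex in $\mathbb{R}^n$ and the expectation constraint defines a closed half-space (the integrand is linear in $\mathbf{w}$), $C$ is a closed convex subset of $\mathcal{Z}$, as required.

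Next I would verify that $F$ fits the hypotheses driving the proof of Proposition \ref{prop: suff cond- opt problem 1}, namely that $\bar{F} \triangleq \log F$ is concave and continuous and that $F(C) \subseteq L^{\infty}_{++}$. Here $\log F(\mathbf{w}) = \sum_{i=1}^n w_i \log Y_i$ is linear (hence concave) and continuous on $\mathbb{R}^n$, and since each $Y_i = e^{X_i} \in L^{\infty}_{++}$, the map $F$ has range in $L^{\infty}_{++}$. The standing assumptions on $\trho$, $\mathfrak{R}$, $\mathcal{Q}_0$ and $L(X;Q) = \mathfrak{R}(\mathbb{E}_Q[X];Q)$ are inherited unchanged from Section \ref{sec: general optim}.

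With this specialization, the objective of \eqref{eq: portfolio-rho-tilde-2} coincides exactly with $\trho(F(\mathbf{w}))$, and by Theorem \ref{thm: dual repres rho-tilde} minimizing $\trho(F(\mathbf{w}))$ over $C$ is equivalent to solving
\[
\min_{\mathbf{w} \in C} \exp\!\left(\sup_{Q \in \mathcal{Q}_0} \mathfrak{R}\bigl(\mathbb{E}_Q[\mathbf{w}\cdot \log \mathbf{Y}];Q\bigr)\right),
\]
i.e.\ the problem \eqref{eq: general minimization-trho-R} for the chosen $F$. The non-local-minimizer hypothesis on $\bar{\mathbf{w}}$ translates verbatim into the corresponding hypothesis on $\bar Z$ in Proposition \ref{prop: suff cond- opt problem 1}, and the two sufficient conditions \eqref{eq: optimality condit-trho-2} are exactly \eqref{eq: optimality condit-trho} with $\bar Z = \bar{\mathbf{w}}$ and $F(\bar Z) = \prod_{i=1}^n Y_i^{\bar w_i}$. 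Applying Proposition \ref{prop: suff cond- opt problem 1} therefore yields that $(\bar{\mathbf{w}},\bar Q)$ is optimal.

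Since the argument is essentially a direct specialization, the only potentially delicate point is that $F(\mathbf{w}) = \prod_i Y_i^{w_i}$ is log-linear rather than concave in the usual sense, so what genuinely needs to be verified is that $\log F$—and not $F$ itself—is concave; this is precisely what powers the proof of Proposition \ref{prop: suff cond- opt problem 1} via the correspondence $\rho(X) = \log\trho(e^X)$, and it is why the constraint in \eqref{eq: portfolio-rho-tilde-2} is formulated in terms of $\mathbb{E}[\sum_i \log(Y_i) w_i]$ rather than $\mathbb{E}[\prod_i Y_i^{w_i}]$.
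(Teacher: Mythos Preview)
Your proposal is correct and follows essentially the same route as the paper: both reduce the multiplicative portfolio problem to the quasi-convex optimality result (Theorem~4 of \cite{MRG15}) via the $\log$--$\exp$ correspondence. The paper applies that theorem directly to $\rho$ with the linear map $\mathbf{w}\mapsto \mathbf{w}\cdot\log\mathbf{Y}$ and then identifies $\partial^{GP}\rho=\partial^{GP}\trho$ via Penot--Z\v{a}linescu, whereas you invoke Proposition~\ref{prop: suff cond- opt problem 1} with $F(\mathbf{w})=\prod_i Y_i^{w_i}$; your closing observation that only concavity of $\log F$ (not of $F$ itself) is actually used is precisely the point, and is what the paper's direct route sidesteps by working with $\rho$ and the linear $\bar F$ from the outset.
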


\begin{proof}
As underlined before, to solve \eqref{eq: portfolio-rho-tilde-2} it is enough to focus on
\begin{equation} \label{eq: portfolio-choice-rho}
\min_{\mathbf{w} \in \mathcal{W}: \, \mathbb{E} [\sum_{i=1}^n \log (Y_i) w_i ] \leq r} \rho\left(\sum_{i=1}^n w_i \log Y_i \right)=\min_{\mathbf{w} \in \mathcal{W}: \, \mathbb{E} [F(\mathbf{w})] \leq r} \rho\left(F(\mathbf{w}) \right),
\end{equation}
with $F(\mathbf{w})= \mathbf{w} \cdot \log \mathbf{Y}$.

By Theorem~4 in \cite{MRG15}, if $(\bar{\mathbf{w}},\bar{Q}) \in \mathcal{W} \times \mathcal{Q}_0$ satisfies
\begin{equation*} 
\partial ^{(*)} \mathbb{E}_{\bar{Q}} \left[\sum_{i=1}^n \log (Y_i) \bar{w}_i \right] \cap \left(-N(C,\bar{\mathbf{w}}) \right) \neq \{0\} \quad \text{and} \quad \bar{Q} \in \partial^{GP} \rho \left(\sum_{i=1}^n \log (Y_i) \bar{w}_i \right),
\end{equation*}
then $(\bar{\mathbf{w}},\bar{Q})$ is an optimal solution of \eqref{eq: portfolio-choice-rho}.
The thesis then follows immediately because $\partial^{GP} \rho= \partial^{GP} \trho$ holds by Prop.~2.7 of Penot and Z\v{a}linescu \cite{PZ00}.
\end{proof}


\subsubsection{Efficient frontier}

We recall that the efficient frontier associated to the portfolio choice problem \eqref{eq: portfolio-rho} corresponds to
\begin{equation} \label{eq: efficient frontier-rho}
r \in \mathbb{R}^+ \mapsto \rho( \bar{\mathbf{w}} (r) \cdot \mathbf{X}),
\end{equation}
where $\bar{\mathbf{w}} (r) \in \Bbb R^n$ denotes the optimal portfolio choice when the return target level is $r$.
We note that, in our setting, the efficient frontier is decreasing (because of our convention on the signs). 
Furthermore, it is well known that the efficient frontier is convex for a convex risk measure $\rho$ (see \cite{BLS04}) and quasi-convex for a quasi-convex risk measure (see \cite{MRG15}).

We now focus momentarily on the following portfolio choice problem for a quasi-logconvex risk measure $\trho$:
\begin{equation} \label{eq: portfolio-general minimization}
\min_{\mathbf{w}
\in \mathcal{W}: \, \mathbb{E} [\sum_{i=1}^n X_i w_i ] \leq r } \, \trho\left(\sum_{i=1}^n X_i w_i \right),
\end{equation}
that is a particular case of the general problem \eqref{eq: general minimization}, with $\mathcal{Z}= \mathbb{R}^n$, $F(\mathbf{w})=\mathbf{w} \cdot \mathbf{X}$,
where $w_i$ denotes the weight to be invested in the risky asset $X_i$ and $r \in \mathbb{R}$ stands for a target level.

Consider now the efficient frontier associated to \eqref{eq: portfolio-general minimization}, defined as
\begin{equation} \label{eq: efficient frontier-trho-general}
r \in \mathbb{R}^+ \mapsto \trho( \bar{\mathbf{w}} (r) \cdot \mathbf{X}),
\end{equation}
where $\bar{\mathbf{w}} (r) \in \Bbb R^n$ denotes the optimal portfolio choice when the return target level is $r$.
It can easily be checked that the efficient frontier above is decreasing (again because of our convention on the signs). However, for quasi-logconvex return measures, it fails to satisfy convexity, quasi-convexity and quasi-logconvexity, in general. 
The main reason is that, indeed, this formulation of the efficient frontier does not make fully sense for quasi-logconvex risk measures $\trho$ because of their multiplicative nature.

For this reason, in the following, we consider a more general definition of efficient frontier associated to the following optimization problem:
\begin{equation} \label{eq: general minimization Cr}
\min_{\mathbf{w} \in C_r} \trho(F(\mathbf{w})),
\end{equation}
where $C_r \subseteq \mathbb{R}^n$ is a convex subset depending on a target level $r \in \mathbb{R}^+$ and $F: \mathbb{R}^n \to L^{\infty}$ is a concave and continuous functional with $F(C_r) \subseteq L^{\infty}_{++}$, and define the generalized efficient frontier as
\begin{equation} \label{eq: efficient frontier-trho-general Cr}
r \in \mathbb{R}^+ \mapsto \trho( F(\bar{\mathbf{w}} (r))),
\end{equation}
where $\bar{\mathbf{w}} (r) \in \Bbb R^n$ denotes the optimal portfolio choice of \eqref{eq: general minimization Cr} whose existence is guaranteed under suitable assumptions (see Proposition~\ref{prop: suff cond- opt problem 1}).

It easily follows that if $(C_r)_{r \in \mathbb{R}^+}$ is an increasing family of subsets of $\mathbb{R}^n$ in $r$, then the generalized efficient frontier is decreasing in $r$.

In the following, $\mathbf{w}_1 \mathbf{w}_2$ will denote improperly a vector in $\mathbb{R}^n$ whose components are given by the componentwise product of $\mathbf{w}_1$ and $\mathbf{w}_2$.

\begin{proposition}
Assume that both the following hypothesis hold: (a) $\trho \circ F$ is quasi-logconvex; and (b) the family $(C_r)_{r \in \mathbb{R}^+}$ is quasi-logconvex, i.e., $\mathbf{w}_1 \in C_{r_1}, \mathbf{w}_2 \in C_{r_2}$ implies that $\mathbf{w}_1^{\alpha} \mathbf{w}_2^{1-\alpha} \in C_{r_{\alpha}}$ with $r_{\alpha}=r_1 ^{\alpha} r_2^{1-\alpha}$ for any $\alpha \in [0,1]$.

Then, the generalized efficient frontier is quasi-logconvex.
\end{proposition}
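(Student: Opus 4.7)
The plan is to follow the same template used for quasi-convex efficient frontiers in Mastrogiacomo and Rosazza Gianin \cite{MRG15}, but transposing every convex combination into a geometric (multiplicative) combination, in line with the geometric nature of $\trho$ stressed throughout the paper. Write $V(r) \triangleq \min_{\mathbf{w} \in C_r}\trho(F(\mathbf{w})) = \trho(F(\bar{\mathbf{w}}(r)))$ for the generalized efficient frontier. Quasi-logconvexity of $V$ amounts to showing that, for any $r_1, r_2 \in \mathbb{R}^+$ and any $\alpha \in [0,1]$,
\[
V(r_\alpha) \leq \max\{V(r_1), V(r_2)\}, \qquad r_\alpha \triangleq r_1^{\alpha} r_2^{1-\alpha}.
\]

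First I would pick an optimizer $\bar{\mathbf{w}}(r_i)\in C_{r_i}$ for each $i=1,2$ (their existence is granted by the standing assumptions, e.g.\ via Proposition~\ref{prop: suff cond- opt problem 1}), and form the geometric combination $\mathbf{w}_\alpha \triangleq \bar{\mathbf{w}}(r_1)^{\alpha}\,\bar{\mathbf{w}}(r_2)^{1-\alpha}$. Hypothesis (b) on the family $(C_r)_{r \in \mathbb{R}^+}$ is exactly designed to ensure that $\mathbf{w}_\alpha \in C_{r_\alpha}$, so $\mathbf{w}_\alpha$ is a feasible competitor for the minimization problem defining $V(r_\alpha)$. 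Hence
\[
V(r_\alpha) \;=\; \min_{\mathbf{w} \in C_{r_\alpha}}\trho(F(\mathbf{w})) \;\leq\; \trho\!\left(F(\mathbf{w}_\alpha)\right).
\]

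Next I would invoke hypothesis (a): since $\trho\circ F$ is quasi-logconvex, it follows that
\[
\trho\!\left(F(\bar{\mathbf{w}}(r_1)^{\alpha}\bar{\mathbf{w}}(r_2)^{1-\alpha})\right) \;\leq\; \max\!\left\{\trho(F(\bar{\mathbf{w}}(r_1))),\, \trho(F(\bar{\mathbf{w}}(r_2)))\right\} \;=\; \max\{V(r_1),V(r_2)\}.
\]
Chaining the two inequalities yields $V(r_\alpha)\leq \max\{V(r_1),V(r_2)\}$, which is precisely the required quasi-logconvexity of the generalized efficient frontier.

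The argument is essentially a feasibility-plus-monotonicity sandwich, with no hidden measurability or continuity issues, so there is no real obstacle; the only conceptual point is that assumption (a) is stated as quasi-logconvexity of the composition $\trho\circ F$ rather than of $\trho$ alone, which frees the proof from having to relate convexity properties of $F$ (typically concavity) to logconvexity in the $\mathbf{w}$-argument. Because the ambient combination on $\mathbb{R}^n$ is itself multiplicative, hypothesis (b) on $(C_r)_{r\in\mathbb{R}^+}$ is the natural multiplicative analogue of the convexity of the constraint family used in \cite{MRG15}, and the whole argument parallels the quasi-convex case after replacing arithmetic means by geometric means.
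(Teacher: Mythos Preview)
Your proof is correct and follows essentially the same approach as the paper's: pick optimizers at $r_1$ and $r_2$, use hypothesis (b) to ensure their geometric combination is feasible at $r_\alpha$, and then apply hypothesis (a) to bound $\trho\circ F$ at that combination by the maximum of the two values. The only cosmetic difference is your introduction of the notation $V(r)$ for the efficient frontier, which the paper leaves implicit.
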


\begin{proof}
Let $r_1, r_2 \in \mathbb{R}^+$ be arbitrarily fixed, let $\bar{\mathbf{w}}_1=\bar{\mathbf{w}}(r_1) \in C_{r_1}, \bar{\mathbf{w}}_2=\bar{\mathbf{w}}(r_1) \in C_{r_2}$ be the corresponding optimal solutions of \eqref{eq: general minimization Cr} and denote $r_{\alpha}=r_1 ^{\alpha} r_2^{1-\alpha}$ for any $\alpha \in [0,1]$.

It then holds that
\begin{equation*}
\trho(F(\bar{\mathbf{w}} (r_{\alpha}))) \leq \trho(F(\bar{\mathbf{w}}_1^{\alpha} \bar{\mathbf{w}}_2^{1-\alpha})))\leq \trho(F(\bar{\mathbf{w}}_1)) \vee \trho(F(\bar{\mathbf{w}}_2)),
\end{equation*}
where the first inequality is due to assumption~(b) guaranteeing that $\bar{\mathbf{w}}_1^{\alpha} \bar{\mathbf{w}}_2^{1-\alpha} \in C_{r_{\alpha}}$, while the last one is due to~(a). 
Quasi-logconvexity of the generalized efficient frontier then follows.
\end{proof}

Note that assumption~(a) above generalizes what happens in the convex (respectively quasi-convex) case where $\trho \circ F$ fulfills convexity (resp.~quasi-convexity) for a linear $F$. 
For instance, (a) is verified for a function $F$ satisfying $F(\mathbf{w}_1^{\alpha} \mathbf{w}_2^{1-\alpha})=(F(\mathbf{w}_1))^{\alpha} (F(\mathbf{w}_2))^{1-\alpha}$.

Assumption~(b), instead, generalizes logconvexity of acceptance sets. 
It is verified, for instance, for $C_r= \{\mathbf{w} \in \mathbb{R}^n_+: \mathbb{E}[\sum_{i=1}^n X_i \log w_i] \leq \log r\}$. 
For any $\mathbf{w}_1 \in C_{r_1}$, $\mathbf{w}_2 \in C_{r_2}$ and $\alpha \in [0,1]$, indeed,
\begin{eqnarray*}
\mathbb{E}\left[\sum_{i=1}^n X_i \log (w_{i,1}^{\alpha} w_{i,2}^{1-\alpha})\right] &=& \mathbb{E} \left[\sum_{i=1}^n X_i ( \alpha \log w_{i,1} + (1-\alpha) \log w_{i,2}) \right] \\
&=& \mathbb{E} \left[\alpha\sum_{i=1}^n X_i \log w_{i,1} + (1-\alpha) \sum_{i=1} ^n X_i \log w_{i,2} \right] \\
&\leq& \alpha \log r_1 + (1-\alpha) \log r_2= \log(r_{\alpha}),
\end{eqnarray*}
implying that $\mathbf{w}_1^{\alpha} \mathbf{w}_2^{1-\alpha} \in C_{r_{\alpha}}$.

\subsubsection{Efficient frontier---multiplicative version}

The arguments of the previous subsection motivate the study of a multiplicative version of the efficient frontier.
For this reason, we focus now on the following multiplicative version of efficient frontier associated to a quasi-logconvex return measure $\trho$ and the portfolio choice problem \eqref{eq: portfolio-rho-tilde-2}:
\begin{equation} \label{eq: efficient frontier-trho}
r \in \mathbb{R}^+ \mapsto \trho \left( \prod_{i=1}^n Y_i^{\bar{w}_i (r)} \right),
\end{equation}
where $\bar{\mathbf{w}} (r) \in \Bbb R^n$ denotes the optimal portfolio choice when the return target level is $r$.

\begin{proposition} \label{prop: efficient frontier}
If $\trho$ is a quasi-logconvex return risk measure and the portfolio choice problem \eqref{eq: portfolio-rho-tilde-2} admits (at least) a solution, then the efficient frontier is quasi-convex and decreasing.
\end{proposition}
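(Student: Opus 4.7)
The plan is to establish the two properties separately by exploiting the multiplicative structure of the constraint and the quasi-logconvexity of $\trho$.

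For monotonicity, I would observe that if $r_1 \leq r_2$, the feasibility set $C_{r_1} \triangleq \{\mathbf{w} \in \mathcal{W}: \mathbb{E}[\sum_{i=1}^n \log(Y_i) w_i] \leq r_1\}$ is contained in $C_{r_2}$, so the infimum of $\trho(\prod_i Y_i^{w_i})$ over the larger set is no greater than over the smaller one. Evaluating at any optimizer yields that the efficient frontier is decreasing in $r$ (consistent with the sign convention that positive realizations mean losses).

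For quasi-convexity, fix $r_1, r_2 \in \mathbb{R}^+$ and $\alpha \in [0,1]$, and set $r_\alpha = \alpha r_1 + (1-\alpha) r_2$. Let $\bar{\mathbf{w}}_1 = \bar{\mathbf{w}}(r_1)$ and $\bar{\mathbf{w}}_2 = \bar{\mathbf{w}}(r_2)$ be optimal solutions of the portfolio choice problem at $r_1$ and $r_2$ respectively. Consider the ordinary (arithmetic) convex combination $\mathbf{w}_\alpha \triangleq \alpha \bar{\mathbf{w}}_1 + (1-\alpha) \bar{\mathbf{w}}_2$. Since $\mathcal{W}$ is convex, $\mathbf{w}_\alpha \in \mathcal{W}$, and by linearity of expectation,
\begin{equation*}
\mathbb{E}\left[\sum_{i=1}^n \log(Y_i) w_{\alpha,i}\right] = \alpha \mathbb{E}\left[\sum_{i=1}^n \log(Y_i) \bar{w}_{1,i}\right] + (1-\alpha) \mathbb{E}\left[\sum_{i=1}^n \log(Y_i) \bar{w}_{2,i}\right] \leq r_\alpha,
\end{equation*}
so $\mathbf{w}_\alpha$ is feasible for the problem at level $r_\alpha$.

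The key step is the identity
\begin{equation*}
\prod_{i=1}^n Y_i^{w_{\alpha,i}} = \prod_{i=1}^n Y_i^{\alpha \bar{w}_{1,i} + (1-\alpha)\bar{w}_{2,i}} = \left(\prod_{i=1}^n Y_i^{\bar{w}_{1,i}}\right)^{\!\alpha} \left(\prod_{i=1}^n Y_i^{\bar{w}_{2,i}}\right)^{\!1-\alpha},
\end{equation*}
which converts the arithmetic convex combination of weights into a geometric mean of the associated portfolios. Applying quasi-logconvexity of $\trho$ to the right-hand side yields
\begin{equation*}
\trho\!\left(\prod_{i=1}^n Y_i^{w_{\alpha,i}}\right) \leq \max\!\left\{\trho\!\left(\prod_{i=1}^n Y_i^{\bar{w}_{1,i}}\right), \trho\!\left(\prod_{i=1}^n Y_i^{\bar{w}_{2,i}}\right)\right\}.
\end{equation*}
Finally, since $\bar{\mathbf{w}}(r_\alpha)$ is optimal at $r_\alpha$ and $\mathbf{w}_\alpha$ is only feasible, we have $\trho(\prod_i Y_i^{\bar{w}_i(r_\alpha)}) \leq \trho(\prod_i Y_i^{w_{\alpha,i}})$, which combined with the previous display gives exactly quasi-convexity of $r \mapsto \trho(\prod_i Y_i^{\bar{w}_i(r)})$.

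I do not anticipate a serious obstacle: the proof hinges on the fact that the multiplicative constraint $\mathbb{E}[\sum_i \log(Y_i) w_i] \leq r$ is linear in $\mathbf{w}$ (so convex combinations of feasible solutions remain feasible, with the target level combining arithmetically), while the multiplicative portfolio $\prod_i Y_i^{w_i}$ transforms arithmetic convex combinations of weights into geometric means of portfolios, which is precisely the operation on which quasi-logconvexity of $\trho$ acts. The only mild subtlety is that when the optimizer is not unique, one must interpret $\bar{\mathbf{w}}(r)$ as any selection, but the bound holds for every such selection.
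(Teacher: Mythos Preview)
Your proof is correct and follows essentially the same approach as the paper: both arguments take the arithmetic convex combination $\mathbf{w}_\alpha = \alpha \bar{\mathbf{w}}(r_1) + (1-\alpha)\bar{\mathbf{w}}(r_2)$, verify its feasibility at level $r_\alpha$ by linearity of the constraint, rewrite $\prod_i Y_i^{w_{\alpha,i}}$ as a geometric mean of the two portfolios, apply quasi-logconvexity of $\trho$, and then compare with the optimizer at $r_\alpha$. Your write-up is slightly more explicit about the monotonicity step and the non-uniqueness caveat, but the argument is the same.
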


\begin{proof}
Decreasing monotonicity of the efficient frontier is straightforward. It remains to check its quasi-convexity.

Let $r_1, r_2 \in \mathbb{R}^+$ be fixed arbitrarily and let $\bar{\mathbf{w}} (r_1), \bar{\mathbf{w}} (r_2) \in \Bbb R^n$ be the corresponding optimal solutions. Set $r_{\alpha}\triangleq\alpha r_1 + (1-\alpha)r_2$ for $\alpha \in [0,1]$.

It then follows that
\begin{eqnarray*}
\trho\left(\prod_{i=1}^n Y_i^{\bar{w}_i (\alpha r_1 + (1-\alpha)r_2)} \right) &\leq& \trho\left(\prod_{i=1}^n Y_i^{\alpha \bar{w}_i (r_1) + (1-\alpha) \bar{w}_i (r_2)} \right) \\
&=& \trho\left(\left(\prod_{i=1}^n Y_i^{ \bar{w}_i (r_1)}\right)^{\alpha} \left(\prod_{i=1}^n Y_i^{ \bar{w}_i (r_2)}\right)^{1-\alpha}  \right) \\
&\leq & \trho\left(\prod_{i=1}^n Y_i^{ \bar{w}_i (r_1)}\right) \vee \trho\left( \prod_{i=1}^n Y_i^{ \bar{w}_i (r_2)}  \right),
\end{eqnarray*}
where the first inequality holds because $\alpha \bar{\mathbf{w}} ( r_1) + (1-\alpha) \bar{\mathbf{w}} (r_2)$ belongs to $\mathcal{W}$ and satisfies the constraint $\mathbb{E} [\sum_{i=1}^n \log (Y_i) w_i] \leq r_{\alpha}$, while the last one follows from quasi-logconvexity of $\trho$.
\end{proof}

\subsection{Capital allocation}

\textit{Capital allocation problem for monetary or arithmetic risk measures.}

The standard capital allocation problem for monetary or arithmetic risk measures deals with suitably decomposing the risk assessment $\rho(X)$ of the whole/aggregate position $X$ into those of the different sub-units $X_1, \ldots,X_n$ of $X$, where $X=X_1+\ldots+X_n$. 
In particular, let $\Lambda(X_i,X)$ represent the capital to be allocated to the sub-unit $X_i$ of $X$. 
Heuristically, $\Lambda(X_i,X)$ can be seen as the capital allocated to $X_i$ to make $X_i$ acceptable (in terms of its riskiness) as a sub-portfolio of $X$.
See, e.g., risk contribution with respect to acceptance families $\Lambda_{\mathcal{A}}(X_i,X)= \inf\{ m \in \Bbb R: m-X_i \in \mathcal{A}_X\}$ proposed in Canna et al. \cite{CCR20} and adapted here to our convention on signs, where $\mathcal{A}_X$ denotes the set of all positions that are acceptable as sub-units of $X$ and, under suitable conditions on the family $(\mathcal{A}_X)_X$, $\rho(X)= \Lambda_{\mathcal{A}}(X,X)$.

Among the different methods, the class of proportional capital allocation principles provides popular capital allocation rules (CARs), consisting in
\begin{equation*}
\Lambda(X_i,X)= \rho(X) \cdot \frac{\bar{\rho}(X_i)}{\sum_{j=1}^n \bar{\rho}(X_j)},
\end{equation*}
for some risk measure $\bar{\rho}$. 
An alternative approach, proposed recently by Mohammed et al. \cite{MFS21} in the context of the Geometric Tail Expectation (GTE), consists in replacing the proportion $\frac{\bar{\rho}(X_i)}{\sum_{j=1}^n \bar{\rho}(X_j)}$ of the riskiness of $X_i$ with respect to $X=X_1+\ldots+X_n$ by a term depending on the ratio $R_i\triangleq \frac{X_i}{X}$.

The approach proposed by \cite{MFS21} for proportional capital allocations motivates the use of return or geometric risk measures and corresponding CARs in a multiplicative version. 
Return risk measures of the form $\tilde{\rho}(X)= \exp\left(\mathbb{E}_{\tilde{P}}[\log X]\right)$ are also motivated by Bauer and Zanjani \cite{BZ16} by means of economic/actuarial arguments as the risk measures corresponding to and resulting from desired capital allocations (see also Mohammed et al. \cite{MFS21} for a discussion and for the study of Geometric Tail Expectation).

\medskip

\noindent \textit{Capital allocation problem for return or geometric risk measures.}

Since return or geometric risk measures refer to relative positions instead of absolute values, they are suitable for dealing with capital allocation rules in terms of relative/multiplicative positions. 
In other words, $\tilde{\Lambda}(X_i,X)$ should represent the capital to be allocated to $X_i$ (in terms of its relative position) to make $X_i$ acceptable as a sub-portfolio of $X$ (in terms of return or geometric risk measures). 
Translating the terminology of Canna et al. \cite{CCR20} to this setting, with the comments on acceptance sets in mind (see Section~\ref{sec:acc}), the relation between CARs of return or geometric risk measures and acceptance sets can be formulated as
\begin{equation*}
\tilde{\Lambda}_{\mathcal{B}}(X_i,X)= \inf\left\{ m \in \Bbb R: \frac{m}{X_i} \in \mathcal{B}_X\right\},
\end{equation*}
where $\mathcal{B}_X$ denotes the set of all positions that are acceptable as sub-units of $X$ and, under suitable conditions on the family $(\mathcal{B}_X)_X$, $\tilde{\rho}(X)= \tilde{\Lambda}_{\mathcal{B}}(X,X)$; see also Bellini et al. \cite{BLR18}, Section~3.

Note that the subdifferential CAR is compatible with the approach with acceptance sets.
For instance, for coherent monetary risk measures $\rho$ the subdifferential CAR consists in $\Lambda^{sub}(X_i,X)=\mathbb{E}_{Q_X}[X_i]$ with $Q_X$ being an optimal generalized scenario in the dual representation of $\rho$.
At the level of the corresponding return risk measures $\tilde{\rho}$,
\begin{equation*}
\tilde{\Lambda}^{sub}(X_i,X)=\exp \left( \mathbb{E}_{Q_X}[\log X_i] \right),
\end{equation*}
since $Q_X$ is an optimal generalized scenario in the dual representation of $\rho$ if and only if it is also an optimal generalized scenario in the dual representation of $\tilde{\rho}$.
Choosing $\mathcal{B}_X= \left\{ Z \in L^{\infty}_{++}: \exp\left(\mathbb{E}_{Q_X}\left[\log \left(\frac{1}{X} \right)\right] \right) \leq 1 \right\}$,
\begin{equation*}
\tilde{\Lambda}_{\mathcal{B}}(X_i,X)=\exp \left( \mathbb{E}_{Q_X}[\log X_i] \right)= \tilde{\Lambda}^{sub}(X_i,X),
\end{equation*}
and $\tilde{\rho}(X)=\tilde{\Lambda}_{\mathcal{B}}(X,X)$.

As discussed above, the approach proposed by \cite{MFS21} for proportional capital allocations (and, in particular, for GTE-allocations) fits to our setting, of which it occurs as a special case.  
The idea of GTE-allocations can be generalized to general geometric risk measures $\tilde{\rho}$ as
\begin{equation*}
\tilde{\Lambda}^*(X_i,X)=\tilde{\rho}(X)\cdot \Lambda_{\rho}\left(\frac{X_i}{X},X\right),
\end{equation*}
where $\Lambda_{\rho}$ stands for a CAR of the corresponding arithmetic risk measure $\rho$.
When dealing with relative positions, however, it seems to be more appropriate to consider a CAR of the form
\begin{equation*}
\tilde{\Lambda}(X_i,X)=\tilde{\rho}(X)\cdot \tilde{\Lambda}^{prop}\left(X_i,X\right),
\end{equation*}
with $\tilde{\Lambda}^{prop}\left(X_i,X\right)$ a proportional CAR depending on ratios $\frac{X_i}{X}$, rather than $\tilde{\Lambda}^*(X_i,X)$ with $\Lambda_{\rho}\left(\frac{X_i}{X},X\right)$ referring to absolute values.
For instance, for a quasi-logconvex risk measure $\tilde{\rho}$ one can consider
\begin{equation*}
\tilde{\Lambda}^{prop}(X_i,X)= \exp \left(\mathfrak{R} \left( \mathbb{E}_{Q_X} \left[\log \left(\frac{X_i}{X} \right)\right]; Q_X\right) \right),
\end{equation*}
where $Q_X$ is an optimal generalized scenario in the dual representation of $\tilde{\rho}(X)= \exp \left(\mathfrak{R} \left( \mathbb{E}_{Q_X} [\log X], Q_X\right) \right)$.
In this case, it is easy to check that
\begin{itemize}
  \item if $\mathfrak{R}(0,Q_X)=0$, then $\tilde{\Lambda}^{prop}(X,X)=1$ and $\tilde{\Lambda}(X,X)= \tilde{\rho}(X)$;
  \item if $\mathfrak{R}(c,Q_X)=c$ for any $c \in \Bbb R$, then $\tilde{\Lambda}^{prop}(\beta X,X)=\beta$ for any $\beta >0$, i.e., corresponding to the proportion to be allocated to $\beta X$;
  \item when $\mathfrak{R}(t,Q)=t$ (coherent case),
\begin{equation*}
\tilde{\Lambda}^{prop}(X_i,X)= \exp \left( \mathbb{E}_{Q_X} \left[\log \left(\frac{X_i}{X} \right)\right] \right) = \frac{\exp \left( \mathbb{E}_{Q_X} [\log X_i] \right)}{\exp \left( \mathbb{E}_{Q_X} [\log X] \right)}= \frac{\tilde{\Lambda}^{sub}(X_i,X)}{\tilde{\rho}(X)}.
\end{equation*}
\end{itemize}

\end{document}